\def\grd@save@target#1{%
\def\grd@target{#1}}
\def\grd@save@start#1{%
\def\grd@start{#1}}
\tikzset{
grid with coordinates/.style={
to path={%
  \pgfextra{%
    \edef\grd@@target{(\tikztotarget)}%
    \tikz@scan@one@point\grd@save@target\grd@@target\relax
    \edef\grd@@start{(\tikztostart)}%
    \tikz@scan@one@point\grd@save@start\grd@@start\relax
    \draw[minor help lines,magenta] (\tikztostart) grid (\tikztotarget);
    \draw[major help lines] (\tikztostart) grid (\tikztotarget);
    \grd@start
    \pgfmathsetmacro{\grd@xa}{\the\pgf@x/1cm}
    \pgfmathsetmacro{\grd@ya}{\the\pgf@y/1cm}
    \grd@target
    \pgfmathsetmacro{\grd@xb}{\the\pgf@x/1cm}
    \pgfmathsetmacro{\grd@yb}{\the\pgf@y/1cm}
    \pgfmathsetmacro{\grd@xc}{\grd@xa + \pgfkeysvalueof{/tikz/grid with coordinates/major step}}
    \pgfmathsetmacro{\grd@yc}{\grd@ya + \pgfkeysvalueof{/tikz/grid with coordinates/major step}}
    \foreach \x in {\grd@xa,\grd@xc,...,\grd@xb}
    \node[anchor=north] at (\x,\grd@ya) {\pgfmathprintnumber{\x}};
    \foreach \y in {\grd@ya,\grd@yc,...,\grd@yb}
    \node[anchor=east] at (\grd@xa,\y) {\pgfmathprintnumber{\y}};
  }
}
},
minor help lines/.style={
help lines,
step=\pgfkeysvalueof{/tikz/grid with coordinates/minor step}
},
major help lines/.style={
help lines,
line width=\pgfkeysvalueof{/tikz/grid with coordinates/major line width},
step=\pgfkeysvalueof{/tikz/grid with coordinates/major step}
},
grid with coordinates/.cd,
minor step/.initial=.2,
major step/.initial=1,
major line width/.initial=2pt,
}
\def\resetMathstrut@{%
\setbox\z@\hbox{%
    \mathchardef\@tempa\mathcode`\[\relax
    \def\@tempb##1"##2##3{\the\textfont"##3\char"}%
    \expandafter\@tempb\meaning\@tempa \relax
}%
\ht\Mathstrutbox@\ht\z@ \dp\Mathstrutbox@\dp\z@}
\definecolor{mygrey}{gray}{0.35}
\definecolor{myblue}{rgb}{0.2,0.2,0.8}
\definecolor{myzard}{cmyk}{0,0,0.05,0}
\definecolor{mywhite}{rgb}{1,1,1}
\definecolor{myred}{rgb}{0.9,0.1,0.}
\newtheorem{theorem}{Theorem}
\newtheorem{lemma}[theorem]{Lemma}
\newtheorem{definition}[theorem]{Definition}
\newtheorem{corollary}[theorem]{Corollary}
\newtheorem{proposition}[theorem]{Proposition}
\newenvironment{proof}{\medskip\noindent\textbf{Proof.}}{\hfill$\blacksquare$\medskip}
\newenvironment{proof-of}[1]{\medskip\noindent\textbf{Proof of {#1}.}}{\hfill$\blacksquare$\medskip}
\newcommand{\tr}{\operatorname{\bf{tr}}} 
\newcommand{\simu}{\operatorname{sim}} 
\newcommand{\ket}[1]{\vert #1 \rangle} 
\newcommand{\bra}[1]{\langle #1 \vert} 
\newcommand{\id}{\operatorname{id}}
\newcommand{\1}{\mathds{1}}
\newcommand{\cM}{{\cal M}}
\newcommand{\cN}{{\cal N}}
 \newcommand{\cE}{{\cal E}}
 \newcommand{\cL}{{\cal L}}
 \newcommand{\cT}{{\cal T}}
\newcommand{\simcoz}[1]{C_{\mathrm{sim}}\left(#1\right)}
\newcommand{\simco}[1]{C_{\mathrm{sim}}^\epsilon\left(#1\right)}
\newcommand{\amocoz}[1]{C_{\mathrm{amo}}\left(#1\right)}
\newcommand{\amoco}[1]{C_{\mathrm{amo}}^\epsilon\left(#1\right)}
\newcommand{\bfrob}[1]{\ensuremath{1+C_R(#1)}}
\newcommand{\eqq}[1]{Eq.~(\ref{#1})}
\newcommand{\dketbra}[1]{\ensuremath{| #1 \rangle\!\langle #1 |}}
\newcommand{\braket}[2]{\ensuremath{\langle #1 | #2 \rangle}}
\newcommand{\ketbra}[2]{\ensuremath{| #1 \rangle\!\langle #2 |}}
\newcommand{\ms}[1]{{\color{black}#1}}
\newcommand{\jcc}[1]{{\color{Red}#1}} 
\begin{document}

\title{Using and reusing coherence to realize quantum processes}

\author{Mar\'ia Garc\'ia D\'iaz}
\affiliation{F\'{\i}sica Te\`{o}rica: Informaci\'{o} i Fen\`{o}mens Qu\`{a}ntics, %
Departament de F\'{\i}sica, Universitat Aut\`{o}noma de Barcelona, ES-08193 Bellaterra (Barcelona), Spain}

\author{Kun Fang}
\affiliation{Centre for Quantum Software and Information, University of Technology Sydney, NSW 2007, Australia}

\author{Xin Wang}
\affiliation{Centre for Quantum Software and Information, University of Technology Sydney, NSW 2007, Australia}

\author{Matteo Rosati}
\affiliation{F\'{\i}sica Te\`{o}rica: Informaci\'{o} i Fen\`{o}mens Qu\`{a}ntics, %
Departament de F\'{\i}sica, Universitat Aut\`{o}noma de Barcelona, ES-08193 Bellaterra (Barcelona), Spain}

\author{Michalis Skotiniotis}
\affiliation{F\'{\i}sica Te\`{o}rica: Informaci\'{o} i Fen\`{o}mens Qu\`{a}ntics, %
Departament de F\'{\i}sica, Universitat Aut\`{o}noma de Barcelona, ES-08193 Bellaterra (Barcelona), Spain}

\author{John Calsamiglia}
\affiliation{F\'{\i}sica Te\`{o}rica: Informaci\'{o} i Fen\`{o}mens Qu\`{a}ntics, %
Departament de F\'{\i}sica, Universitat Aut\`{o}noma de Barcelona, ES-08193 Bellaterra (Barcelona), Spain}

\author{Andreas Winter}
\affiliation{F\'{\i}sica Te\`{o}rica: Informaci\'{o} i Fen\`{o}mens Qu\`{a}ntics, %
Departament de F\'{\i}sica, Universitat Aut\`{o}noma de Barcelona, ES-08193 Bellaterra (Barcelona), Spain}
\affiliation{ICREA---Instituci\'o Catalana de Recerca i Estudis Avan\c{c}ats, %
Pg.~Lluis Companys, 23, ES-08001 Barcelona, Spain}

\begin{abstract}
Coherent superposition is a key feature of quantum mechanics that underlies the advantage of quantum technologies over their classical counterparts. 
Recently, coherence has been recast as a resource theory in an attempt to identify and quantify it in an operationally well-defined manner. Here we study how the coherence present in a state can be used to implement a quantum channel via incoherent operations and, in turn, to assess its degree of coherence. 
We introduce the robustness of coherence of a quantum channel---which reduces to the homonymous measure for states when computed on constant-output channels---and prove that: i) it quantifies the minimal rank of a maximally coherent state required to implement the channel; ii) its logarithm quantifies the amortized cost of implementing the channel provided some coherence is recovered at the output; iii) its logarithm also quantifies the zero-error asymptotic cost of implementation of many independent copies of a channel.
We also consider the generalized problem of imperfect implementation with arbitrary resource states. Using the robustness of coherence, we find that in general a quantum channel can be implemented without employing a maximally coherent resource state. 
In fact, we prove that \textit{every} pure coherent state in dimension larger than $2$, however weakly so, turns out to be a valuable resource to implement \textit{some} coherent unitary channel.
We illustrate our findings for the case of single-qubit unitary channels.
\end{abstract}

\maketitle

\ms{Quantum theory has profoundly altered the way we view the physical world.  
Since its concrete formulation roughly a century ago the theory has 
produced startling predictions, such as quantum entanglement~\cite{einstein1934} 
and the no-cloning theorem~\cite{wootters1982} to name just two, 
that bear no analog in our classical theories of physics and 
information processing.  
The underlying cause of departure of quantum theory from classical 
ways of thought---and perhaps its only mystery~\cite{feynman1965}---is the 
principle of superposition. Indeed, the fact that quantum systems can 
exist in a superposition state underlies every advantage furnished by 
quantum technologies over their classical counterparts, from 
thermodynamics~\cite{lostaglio2015,lostaglio2017,faist2015a,misra2016,korzekwa2016} and non-classicality~\cite{killoran16},  condensed matter~\cite{cakmak2015,malvezzi2016}, 
metrology~\cite{giovanetti2011,toth2014, degen2017} and 
atomic clocks~\cite{ludlow2015}, 
quantum simulation~\cite{buluta2009,georgescu2014}
and computation~\cite{galindo2002,ladd2010,hillery2016,anand2016,matera2016} 
to communication~\cite{gisin2002, gisin2007,holevo2012}, and has even 
been employed in the description of fundamental biological 
processes~\cite{engel2007,ishizaki2009,chin2013,huelga2013}.

The aforementioned advantages render quantum superposition a precious 
resource; having access to it allows one to perform tasks 
that are otherwise impossible. Hitherto, the focus has been primarily on the quantification and interconversion of \emph{static coherence}---the degree of superposition present in a state~\cite{baumgratz2014,winter2016,aberg2006}.  
This overlooks the principal goal of a resource, namely that it can be spent in order to perform non-trivial tasks.  Some initial steps in this direction have been undertaken for some particular tasks~\cite{baumgratz2014,bendana2017,napoli2016,biswas2017,giorda2017,hillery2016}. Here we establish a general framework for converting static coherence into \emph{dynamic coherence}---the ability to generate coherence itself. 
Specifically, we quantify the amount of static coherence used to implement an arbitrary quantum process and show that in general some static coherence can be reused after the implementation. We prove that both these quantities are quantified by the same measure that we call channel robustness of coherence, in analogy to that of states.  

The coherence of a state can be cast within the 
general framework of a quantum resource theory~\cite{brandao2015a}.
Building upon the works of Baumgratz {\it et al.}~\cite{baumgratz2014} 
and \AA{}berg~\cite{aberg2006} (see also~\cite{BraunGeorgeot}), the 
construction of a full-fledged resource theory of coherence, 
much like those of entanglement~\cite{plenio2007,vedral1998,horodecki2009,eisert2000}, 
asymmetry~\cite{gour2008,marvian2014}, and 
athermality~\cite{brandao2013,brandao2015b,horodecki2013,faist2015b,gour2015,narasimhachar2015}, 
has received increased attention in recent years.
\begin{figure}[t!]
\subfloat[\label{fig:settingnoancilla}]{%
	\includegraphics[width=.5\linewidth]{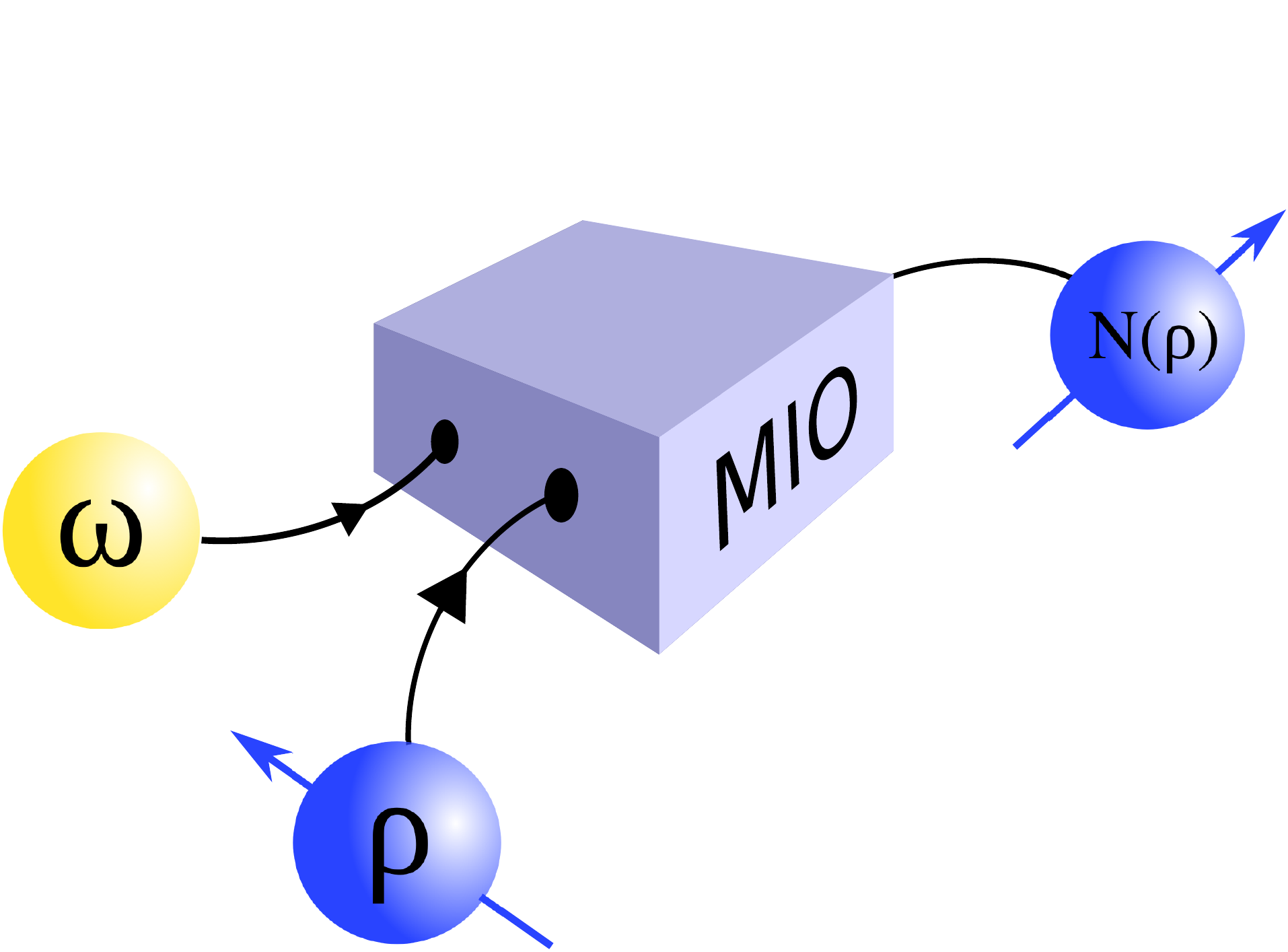}%
}
\subfloat[\label{fig:setting}]{%
	\includegraphics[width=.5\linewidth]{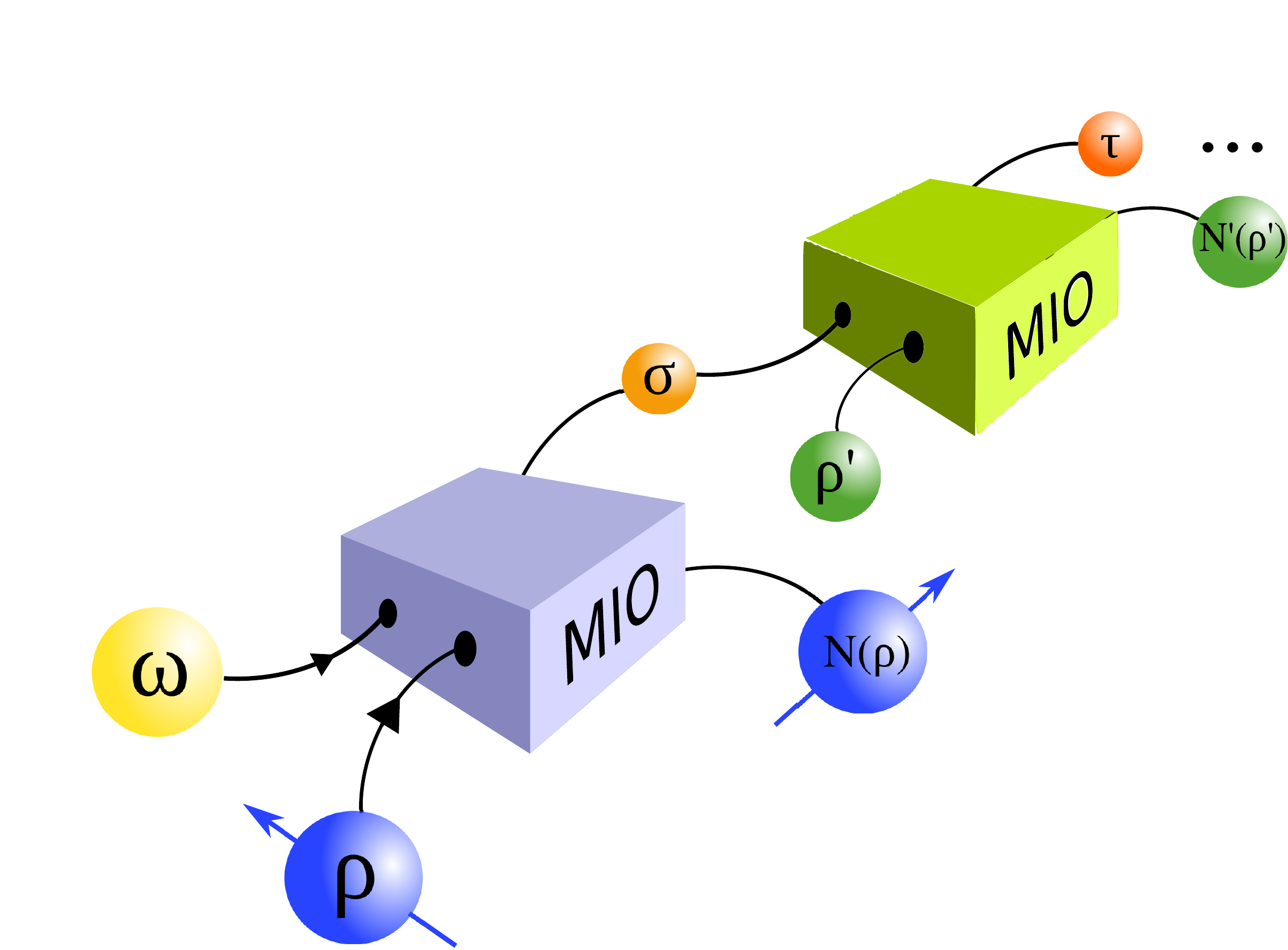}%
}
\caption{{Different protocols for implementing a quantum channel $\cN$ using a MIO  
		$\cM$ and a coherent resource $\omega$. 
		\textbf{a)} The implementation destroys completely the input resource,
		$\cM(\omega \otimes \rho)=\cN(\rho)$. 
		\textbf{b)} After the implementation, a degraded resource $\sigma$ is 
		recovered and ready to be recycled: $\cM(\omega\otimes\rho)=\sigma \otimes \cN(\rho)$. 
		For example it can be used to implement another channel $\cN'$:
		$\cM'(\sigma\otimes\rho') = \tau \otimes \cN'(\rho')$.
}}
\end{figure}
In the general framework of quantum resource theories one first identifies 
which states and operations are free and then studies how resource states 
can be quantified, manipulated, and interconverted among each other. 
In the resource theory of coherence 
the set $\Delta$ of free states---called \emph{incoherent states}---consists 
of all the states $\delta\in\mathcal{S(H)}$ that are diagonal in 
some fixed basis $\{\ket{i}\}_{i=0}^{d-1}$ of $\mathcal{H}$, 
i.e.~\mbox{$\Delta=\left\{\sum_i p_i\dketbra{i}\,:\,\sum_i p_i=1\right\}$}, 
determined by the problem under investigation.   
Free operations are those that map the set of free states to itself, 
hence do not generate coherence.
Unlike the resource theories of entanglement, asymmetry and athermality, 
where the set of free operations is uniquely identified on operational 
grounds, no such consensus exists within the resource theory of 
coherence~\cite{chitambar2016a}.}
The largest class of free operations are the 
\emph{maximally incoherent operations} (MIOs)~\cite{aberg2006}, 
comprising all completely positive and trace-preserving (CPTP) quantum 
channels $\cM$ such that $\cM(\Delta)\subset \Delta$. A subset of MIOs 
are \emph{incoherent operations} (IOs)~\cite{baumgratz2014} which 
consist of all quantum channels $\cM$ admitting a Kraus representation 
with operators $K_\alpha$ such that $K_\alpha \Delta K_\alpha ^\dagger \subset \Delta$ 
for all $\alpha$. 
\ms{Finally, a \emph{coherence measure}~\cite{baumgratz2014} is a functional 
$C:\mathcal{S(H)}\to\mathbb{R}_{\geq0}$, quantifying the amount 
of coherence present in a state $\rho\in\mathcal{S(H)}$ that satisfies
the following two conditions; (i) faithfulness, which means that $C(\delta)=0$ for all incoherent $\delta\in\Delta$, 
and (ii) monotonicity, $C(\rho)\geq C(\cM(\rho))$, for all free operations $\cM$.  
Examples of coherence measures include the relative entropy of 
coherence~\cite{baumgratz2014}, the robustness of coherence~\cite{napoli2016}, the coherence rank (coherence number) \cite{killoran16}} 
and the $\ell_1$-norm of coherence~\cite{baumgratz2014}; the latter two 
are valid measures only under IOs as free operations, but not MIOs (see Appendix~\ref{app:RoC} 
for further details, and~\cite{streltsov2017} for an extensive review).  
Finally, the \emph{maximally coherent state} in dimension $d$, irrespectively of the coherence measure of choice, is the state $\ket{\Psi_d}=\frac{1}{\sqrt{d}}\sum_{j=0}^{d-1}\ket{j}$, 
which we will henceforth call \emph{cosdit}, reserving the term 
\emph{cosbit} for the unit of quantum coherence---the maximally coherent qubit.

In this work, we go beyond the resource theory of states and investigate how the coherence present in a state can be used to implement a quantum channel. We thereby assess the coherence of the channel itself, in the spirit of channel resource theory~\cite{bendana2017,BraunGeorgeot}. Previous approaches addressed only the cases where the available operations are IOs and asymptotically many rounds of implementation are allowed~\cite{baumgratz2014,chitambar2016b,bendana2017,BraunGeorgeot}. Here we benchmark the ultimate performance of implementation by using MIOs as free operations. Moreover, we mainly consider single-shot channel implementation and discuss its relevance on asymptotics in Sec.~\ref{distillation}.
We find necessary and sufficient conditions for a channel to 
be implemented with cosdit resource states of any chosen dimension (Fig.~\ref{fig:settingnoancilla}), significantly advance the results of previous works on the same subject~~\cite{baumgratz2014,chitambar2016b,bendana2017}. 
The minimal dimension of such a cosdit is used to quantify the \emph{simulation cost} of the channel.
In addition, we consider the setting where the input cosdit is 
recovered in a degraded form at the output, after implementing the required 
channel. This setting is particularly appealing since the initial resource 
can be recycled for further use (Fig.~\ref{fig:setting}) and it naturally 
gives rise to the notion of an \emph{amortized simulation cost} as the 
difference between the input and output coherence, 
which quantifies the resources that are actually consumed by the implementation. 

We begin by introducing the robustness of coherence of quantum channels in 
Sec.~\ref{cosdits} and single it out as a natural coherence measure for channels, as it 
quantifies the simulation and amortized costs of a channel with cosdit resource states.
In Sec.~\ref{noncosdits} we consider instead the implementation of channels with arbitrary resource states, providing a semidefinite program to compute the minimum 
implementation error and the amortized cost.
We then introduce a family of non-maximally coherent resource states 
that allow for the implementation of arbitrary quantum channels and prove 
that in fact \emph{every} pure coherent state of dimension $d>2$ is useful for the 
\emph{exact} implementation of \emph{some} coherent unitary channel. 
Finally, we focus on the case of 
qubit unitaries: (i) we prove that, among qubit resource states, a cosbit 
is necessary and sufficient to implement any coherent qubit unitary; (ii) for the case of qutrit resource states we provide a full characterization with the aid of numerics; (iii) we show that MIOs, unlike IOs~\cite{bendana2017}, allow for coherence recycling.
Eventually, in Sec.~\ref{distillation} we discuss the relation of our findings to the problem of asymptotic reversibility between coherence cost and coherence generating capacity of quantum channels.
For ease of exposition, we defer the proofs of all Theorems, Lemmas and 
Propositions to Appendices.

\section{Results}
\subsection{MIO implementation of quantum channels\protect\\ using maximally coherent resources}
\label{cosdits}

The main goal of the present article is to quantify the resources required to
implement (or simulate) an arbitrary quantum channel via MIOs by making use of coherent input states. There are many inequivalent ways to quantify the coherence of a given state. Indeed, a state can be more resourceful than another according to a given measure, while the opposite can happen according to a different one. This is due to the existence of \textit{incomparable} resources, i.e., pairs of states that cannot be interconverted in either direction via MIOs\footnote{For instance, the cosbit cannot be transformed via MIO into the flagpole state $\ket{\varphi_{\frac23}}$ (defined below in Eq. \eqref{eq:flagpole}), since the latter has a greater robustness of coherence $C_{R}(\varphi)=\frac53>1=C_{R}(\Psi_{2})$; and at the same time the flagpole $\ket{\varphi_{\frac23}}$ cannot be transformed via MIO into the cosbit by Lemma~\ref{lemma1}.}. Nevertheless, in this section we wish to establish general results which hold irrespectively of the particular coherence measure of choice: we will do this at the expense of quantifying the required resource in a somewhat coarse fashion, namely by the smallest size $k$ of a cosdit $\ket{\Psi_k}$ that allows to implement the channel.
Indeed, cosdits are maximally coherent states irrespectively of how coherence is quantified, since any state of the same (or lower) coherence rank can be obtained from them via MIOs~\cite{baumgratz2014}. Restricting our input resources to cosdits might seem a limitation. Nevertheless, we will show that this setting has clear benefits and that it leads to a coherence measure for channels that does take real values.

Our starting point is to introduce the channel robustness: 
\begin{definition}\label{RobDef}
The \emph{robustness of coherence} of a quantum channel $\cN$, $C_R(\cN)$, is
defined by
\begin{equation}
  \label{robCohChDef}
  1+C_R(\cN) := \min \left\{\lambda: \  \cN \leq \lambda \cM,\ \cM \in\text{MIO}\right\},
\end{equation}
where the inequality $\cN\leq\lambda \cM$ is understood as completely-positive 
ordering of super operators, i.e., $\lambda \cM - \cN$ is a CP map. 
The smoothed version of this quantity, 
called \emph{$\epsilon$-robustness of coherence}, is given by
\begin{equation}
  \label{smoothRob}
  C_R^\epsilon(\cN)=\min \left\{C_R(\cL): \ \frac12 \|\cN-\cL\|_\diamond\leq \epsilon\right\},
\end{equation}
where $\|\cdot \|_\diamond$ denotes the diamond norm~\cite{aharonov1998,watrous2009}. 
\end{definition}
Recall that the diamond norm is a well-behaved distance which is furthermore endowed with an operational meaning: it quantifies how well one can physically discriminate between two quantum channels \cite{sacchi2005}.
Also note that definition \ref{RobDef} reduces to the robustness of coherence of states~\cite{napoli2016}
when applied to the constant channel $\cN_\sigma(\rho)=\sigma$, i.e.,
$C_R(\cN_\sigma)=C_R(\sigma)$.
It is straightforward to see (Appendix~\ref{app:RobCoh}) that, 
just as in the case of states, the robustness of coherence of a 
channel $\cN$ quantifies the minimum amount of noise, in the form of 
another channel, that we need to add to $\cN$ such that the resulting 
channel is MIO. In Appendix~\ref{app:RobCoh} we also show that this 
quantity is a proper coherence measure for channels under MIO and 
that it can be easily formulated as a semidefinite program. Such 
formulation allows for an efficient numerical computation of this 
measure and, more importantly, thanks to its dual form, it facilitates 
the proof of the following theorem. In it, $P_{R}(\cN)$ \cite{mani2015,bu2017a} and $\widehat{P}_{R}(\cN)$ \cite{garcia2016, bu2017a} are two alternative, seemingly inequivalent definitions of the cohering power:
\begin{align}
P_R(\cN) &= \max_{\rho\in \Delta} C_{LR}(\cN(\rho))\nonumber \\
&= \max_i \log(1+C_R(\cN(\ket{i}\!\bra{i}))), \label{eq:CRNi} \\
\widehat P_R(\cN) &= \max_{\rho\in \mathcal{S(H)}}(C_{LR}(\cN(\rho))-C_{LR}(\rho)),
\end{align}
with $C_{LR}(\rho) = \log(1+C_R(\rho))$ being the log-robustness of a 
quantum state $\rho$~\cite{napoli2016} (see  Appendix~\ref{app:RoC}); 
in particular, for a cosdit it holds $C_{LR}(\Psi_k)=\log k$. 

\begin{theorem}
  \label{theorem_dual}
  The log-robustness of coherence of a channel $\cN$,
  defined in the following equation, can be expressed as
  \begin{equation}\label{robCohPow}
    C_{LR}(\cN):=\log (1+C_R(\cN))=P_{R}(\cN)=\widehat{P}_{R}(\cN)
  \end{equation}
  and it is additive under tensor product of channels. 
\end{theorem}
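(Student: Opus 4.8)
The plan is to prove the three claims by first passing to the multiplicative quantity $1+C_R$ and establishing the key identity $1+C_R(\cN)=\max_i\bigl(1+C_R(\cN(\ketbra{i}{i}))\bigr)$, which upon taking logarithms is exactly $C_{LR}(\cN)=P_R(\cN)$ by \eqref{eq:CRNi}. One inequality is immediate: if $\cN\le\lambda\cM$ with $\cM\in\mathrm{MIO}$, then applying the CP map $\lambda\cM-\cN$ to the positive operator $\ketbra{i}{i}$ gives $\cN(\ketbra{i}{i})\le\lambda\,\cM(\ketbra{i}{i})$, and since $\cM(\ketbra{i}{i})\in\Delta$ is incoherent this shows $1+C_R(\cN(\ketbra{i}{i}))\le\lambda$; minimizing over feasible $\lambda$ and maximizing over $i$ yields $\max_i(1+C_R(\cN(\ketbra{i}{i})))\le 1+C_R(\cN)$.

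For the reverse inequality I would construct an explicit optimal MIO. Let $\lambda=\max_i(1+C_R(\cN(\ketbra{i}{i})))$ and choose $\delta_i\in\Delta$ with $\cN(\ketbra{i}{i})\le\lambda\,\delta_i$. Writing the Choi operator $J_{\cN}=\sum_{i,j}\ketbra{i}{j}\otimes\cN(\ketbra{i}{j})$, I define a candidate $\cM$ through
\[
J_{\cM}=\tfrac{1}{\lambda}J_{\cN}+\sum_i\ketbra{i}{i}\otimes\Bigl(\delta_i-\tfrac{1}{\lambda}\cN(\ketbra{i}{i})\Bigr),
\]
i.e.\ $\cM$ copies the off-diagonal action of $\tfrac1\lambda\cN$ but sends each $\ketbra{i}{i}$ to $\delta_i$. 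Then I would verify that $\cM$ is simultaneously (a) completely positive, since $J_{\cM}$ is the sum of the positive operator $\tfrac1\lambda J_{\cN}$ and the diagonal positive blocks $\delta_i-\tfrac1\lambda\cN(\ketbra{i}{i})\ge0$; (b) trace-preserving, because $\operatorname{Tr}_B$ annihilates the off-diagonal blocks (as $\operatorname{Tr}\cN(\ketbra{i}{j})=0$ for $i\ne j$ by trace preservation) and leaves $\sum_i\ketbra{i}{i}=\1$; and (c) MIO, because on a diagonal input $\sum_i p_i\ketbra{i}{i}$ it outputs $\sum_i p_i\delta_i\in\Delta$. Finally $\lambda\cM-\cN$ is CP since $\lambda J_{\cM}-J_{\cN}=\sum_i\ketbra{i}{i}\otimes(\lambda\delta_i-\cN(\ketbra{i}{i}))\ge0$, so $\cM$ certifies $1+C_R(\cN)\le\lambda$. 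I expect this construction to be the main obstacle, since it requires guessing the correct off-diagonal completion and then checking all four constraints simultaneously.

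For $P_R(\cN)=\widehat P_R(\cN)$, the bound $\widehat P_R\ge P_R$ is trivial because incoherent inputs satisfy $C_{LR}(\rho)=0$. For the converse I would use the optimal MIO from the definition, $\cN\le(1+C_R(\cN))\cM$. For any state $\rho$, write $\rho\le(1+C_R(\rho))\delta_\rho$ with $\delta_\rho\in\Delta$; applying the positive map $\cM$ preserves this ordering, and $\cM(\delta_\rho)\in\Delta$ by MIO, so $\cN(\rho)\le(1+C_R(\cN))\cM(\rho)\le(1+C_R(\cN))(1+C_R(\rho))\cM(\delta_\rho)$ with $\cM(\delta_\rho)$ incoherent. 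This gives the submultiplicative bound $1+C_R(\cN(\rho))\le(1+C_R(\cN))(1+C_R(\rho))$, i.e.\ $C_{LR}(\cN(\rho))-C_{LR}(\rho)\le C_{LR}(\cN)$ for every $\rho$, whence $\widehat P_R(\cN)\le C_{LR}(\cN)=P_R(\cN)$.

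Finally, additivity reduces via $C_{LR}(\cN)=\max_i\log(1+C_R(\cN(\ketbra{i}{i})))$ to multiplicativity of the state quantity, $1+C_R(\sigma_1\otimes\sigma_2)=(1+C_R(\sigma_1))(1+C_R(\sigma_2))$: feeding the product basis states $\ketbra{i}{i}\otimes\ketbra{j}{j}$ into $\cN_1\otimes\cN_2$ and maximizing separately over $i$ and $j$ splits the logarithm into $C_{LR}(\cN_1)+C_{LR}(\cN_2)$. The state multiplicativity I would prove both ways: ``$\le$'' by tensoring primal witnesses $\sigma_k\le(1+C_R(\sigma_k))\delta_k$, using $\delta_1\otimes\delta_2\in\Delta$; and ``$\ge$'' from the dual SDP of the state robustness (Appendix~\ref{app:RoC}), tensoring optimal dual operators $W_1\otimes W_2$ and checking feasibility on the extreme incoherent states $\ketbra{ij}{ij}$, where $\bra{i}W_1\ket{i}\,\bra{j}W_2\ket{j}\le1$ follows from the individual constraints. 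Assembling the three parts yields \eqref{robCohPow} together with additivity.
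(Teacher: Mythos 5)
Your proof is correct, and for the central identity $C_{LR}(\cN)=P_R(\cN)$ it takes a genuinely different route from the paper. The paper obtains $\max_i\{1+C_R(\cN(\ketbra{i}{i}))\}$ by computing the Lagrangian dual of the SDP defining $1+C_R(\cN)$ and then simplifying via the dual form of the state robustness, Eq.~\eqref{app:robStDual}; you instead prove both inequalities primally, with the nontrivial direction certified by the explicit Choi completion $J_{\cM}=\frac{1}{\lambda}J_{\cN}+\sum_i\ketbra{i}{i}\otimes\bigl(\delta_i-\frac{1}{\lambda}\cN(\ketbra{i}{i})\bigr)$, which does satisfy all four requirements (CP, since both summands are positive; TP, since $\tr\cN(\ketbra{i}{j})=\delta_{ij}$; MIO, since $\cM(\ketbra{i}{i})=\delta_i$; and $\lambda\cM-\cN$ CP by construction). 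This buys you a self-contained argument that never appeals to strong duality --- in effect you exhibit a primal feasible point attaining the dual value, a point the paper's Lagrangian computation leaves implicit. Your treatment of $\widehat{P}_R$ also differs: the paper establishes $C_{LR}(\cN)\geq\widehat{P}_R(\cN)$ by invoking the coherence-recycling machinery (Corollary~\ref{corollary2} and Theorem~\ref{amortizedc}), which introduces forward references to later results, whereas your submultiplicativity estimate $1+C_R(\cN(\rho))\leq(1+C_R(\cN))(1+C_R(\rho))$, obtained by pushing $\rho\leq(1+C_R(\rho))\delta_\rho$ through the optimal $\cM$, is elementary and keeps Theorem~\ref{theorem_dual} logically independent of the implementation theorems --- arguably a cleaner dependency structure. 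The additivity part coincides with the paper's: both reduce to multiplicativity of the state robustness (Lemma~\ref{lemma:multipRobSt}) by evaluating the channel formula on product basis states and tensoring primal and dual witnesses; the only cosmetic slip is your claim that the tensored dual witness satisfies $\bra{i}W_1\ket{i}\bra{j}W_2\ket{j}\leq 1$, where the constraint in Eq.~\eqref{app:robStDual} is the equality $S_{jj}=1$ --- harmless, since products of unit diagonal entries equal exactly $1$, so feasibility holds as stated.
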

Note that throughout the article the base-2 $\log$ is employed. Moreover, it can be shown that the (smooth) log-robustness of a channel $\cN$ equals the (smooth) maximum relative entropy between $\cN$ and a MIO $\cM$, minimized over all $\cM$, that we define in Appendix~\ref{app:RobCoh}.
We then see that Theorem~\ref{theorem_dual} states that the log-robustness of a channel has an operational meaning as its cohering power.
Moreover, the robustness and log-robustness play a prominent role in quantifying the cost and consumption of coherence in the implementation of a channel via MIOs, respectively, as we now explain.

Let us analyze the implementation setting of Fig.~\ref{fig:settingnoancilla}, when cosdits are employed as resources:
\begin{definition}
	A MIO simulation of a quantum channel $\cN:A\rightarrow B$ up to error $\epsilon$ with a cosdit $\Psi_k$ is a MIO $\cM:R\otimes A\rightarrow B$ that satisfies
	\begin{equation}\label{close}
		\dfrac{1}{2}\|\cN-\cM(\Psi_k\otimes \cdot)\|_\diamond\leq \epsilon.
	\end{equation}
The simulation cost of $\cN$, denoted $\simco{\cN}$, is the smallest $\log k$ satisfying \eqq{close}, i.e., the minimal coherence rank of the resource allowing for a MIO simulation of $\cN$.
\end{definition}


{Henceforth, $C_{\mathrm{sim}}(\cN)$ implies $\epsilon=0$, i.e., 
exact implementation. We are now equipped to address one of the 
main objectives of this article and give an exact expression for 
the simulation cost of a channel in terms of its smoothed robustness of coherence: }
\begin{theorem}
  \label{theorem1}
  For any quantum channel $\cN$ it holds
  \begin{equation}
    \label{eq_MIO-simulation}
    \simco{\cN}= \log\lceil1+C_R^\epsilon(\cN)\rceil,
  \end{equation}
  where $\lceil\cdot\rceil$ is the ceiling function.
\end{theorem}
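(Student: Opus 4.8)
The plan is to prove both inequalities of \eqq{eq_MIO-simulation} by first settling the exact case $\epsilon=0$ — where the claim reduces to the statement that a channel $\cN$ admits a MIO simulation with the cosdit $\Psi_k$ if and only if $1+C_R(\cN)\le k$ — and then lifting to the smoothed version through the definition \eqq{smoothRob}. The two ingredients I would isolate at the outset are: (a) the operator inequality $\dketbra{\Psi_k}\le\1_R=k\cdot\tfrac{\1_R}{k}$, in which $\tfrac{\1_R}{k}$ is incoherent, together with the identity $\bra{\Psi_k}\delta\ket{\Psi_k}=1/k$ valid for every incoherent $\delta$ on $R$; and (b) the decomposition coming from Definition~\ref{RobDef}: whenever $1+C_R(\cN)\le k$ one may write $\cN=k\cM_0-(k-1)\cM_1$ with $\cM_0\in\mathrm{MIO}$ and $\cM_1$ CPTP (take the optimal $\cM_0$ with $\cN\le(1+C_R(\cN))\cM_0$ and note that $k\cM_0-\cN$ is CP and rescales the trace by $k-1$).

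For the necessity (lower bound), suppose a MIO $\cM\colon R\otimes A\to B$ achieves $\cM(\Psi_k\otimes\rho)=\cN(\rho)$. I would define $\cM'(\rho):=\cM(\tfrac{\1_R}{k}\otimes\rho)$, which is again MIO, and observe that since $\1_R-\dketbra{\Psi_k}\ge0$ the super-operator $k\cM'-\cN$, acting as $\rho\mapsto\cM((\1_R-\dketbra{\Psi_k})\otimes\rho)$, is completely positive. Hence $\cN\le k\cM'$ in CP order, so by Definition~\ref{RobDef} $1+C_R(\cN)\le k$, and therefore $\lceil1+C_R(\cN)\rceil\le k$.

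The sufficiency (upper bound) is the crux and is where an explicit MIO must be built. Given $1+C_R(\cN)\le k$ and the decomposition (b), I would set $W:=\ket{\Psi_k}\otimes\1_A$ (an isometry $A\to R\otimes A$) and define
\[
\cM(\xi):=\cN(W^\dagger\xi W)+\cM_1(\ptr{R}{((\1_R-\dketbra{\Psi_k})\otimes\1_A)\,\xi}).
\]
Then I would verify the four required properties: complete positivity (both terms are CP); trace preservation (the terms contribute $\tr[(\dketbra{\Psi_k}\otimes\1)\xi]$ and $\tr[((\1-\dketbra{\Psi_k})\otimes\1)\xi]$, summing to $\tr\xi$); correct action on the resource (using $W^\dagger(\dketbra{\Psi_k}\otimes\rho)W=\rho$ and $(\1-\dketbra{\Psi_k})\ket{\Psi_k}=0$ the second term vanishes and $\cM(\Psi_k\otimes\rho)=\cN(\rho)$); and the MIO property. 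The last is where the complementary term earns its keep: on a product incoherent input $\delta\otimes\sigma$ one finds $W^\dagger(\delta\otimes\sigma)W=\tfrac1k\sigma$ and a complementary contribution $\tfrac{k-1}{k}\cM_1(\sigma)$, so $\cM(\delta\otimes\sigma)=\tfrac1k\cN(\sigma)+\tfrac{k-1}{k}\cM_1(\sigma)=\cM_0(\sigma)$ by (b), which is incoherent; by linearity this covers every incoherent input on $R\otimes A$. (The case $k=1$, i.e.\ $\cN$ already MIO, is trivial.) Combining the two directions, the least admissible $k$ is $\lceil1+C_R(\cN)\rceil$, which proves the theorem at $\epsilon=0$.

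Finally, I would obtain the smoothed statement by a direct reduction. If $\cM$ $\epsilon$-simulates $\cN$ with $\Psi_k$, then $\cL:=\cM(\Psi_k\otimes\,\cdot\,)$ is a channel exactly simulated by $\Psi_k$, so the exact case gives $1+C_R(\cL)\le k$, while $\tfrac12\|\cN-\cL\|_\diamond\le\epsilon$ gives $C_R^\epsilon(\cN)\le C_R(\cL)$; hence $\lceil1+C_R^\epsilon(\cN)\rceil\le k$. Conversely, choosing $\cL^\ast$ optimal in \eqq{smoothRob} with $C_R(\cL^\ast)=C_R^\epsilon(\cN)$ and $\tfrac12\|\cN-\cL^\ast\|_\diamond\le\epsilon$, the exact construction simulates $\cL^\ast$ with $\Psi_k$ for $k=\lceil1+C_R^\epsilon(\cN)\rceil$, and this $\epsilon$-simulates $\cN$, so $\simco{\cN}\le\log k$. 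The only points needing care beyond the explicit construction are the attainment of the minima (guaranteed by the SDP form of $C_R$ and by compactness for the smoothing) and the monotonicity of $C_R^\epsilon$ along the diamond-norm ball, both used in these reductions.
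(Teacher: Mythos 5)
Your proposal is correct and follows essentially the same route as the paper's proof: the necessity direction uses the identical decomposition $\1_R = \Psi_k + (k-1)\sigma$ to show $\cM(\tfrac{\1_R}{k}\otimes\cdot)$ is a MIO dominating $\tfrac1k\cN$ (the paper merely phrases this via $D_{\max}$ rather than directly in CP order), and your sufficiency ansatz $\cM(\tau\otimes\rho)=\tr(\Psi_k\tau)\,\cL(\rho)+\tr((\1-\Psi_k)\tau)\,\cL'(\rho)$ with the MIO check on incoherent basis states is exactly the paper's construction, your $\cM_1$ being its $\cL'$. The only organizational difference — settling $\epsilon=0$ first and then smoothing, where the paper carries $\epsilon$ throughout — is cosmetic, and your reductions in both directions are sound.
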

Theorem~\ref{theorem1} can be seen as one-shot coherence 
dilution in the resource theory of channels: a maximally coherent resource 
is completely consumed to generate a target one with smaller 
coherence-generating capability. Indeed, this includes as a special 
case the one-shot coherence dilution of states recently studied 
in~\cite{zhao2018} and generalizes the criterion found in \cite{chitambar2016a} for transformations of cosdits $\Psi_k$ to pure states $\ket{\phi}=\sum_i\sqrt{p_i}\ket{i}$:
 \begin{equation}
    \label{planeconv}
    \sum_{i}\sqrt{p_{i}}\leq \sqrt{k}.
  \end{equation}

Thanks to Theorem \ref{theorem1}, 
we can now determine the minimal coherence rank of a cosdit necessary 
for a MIO implementation of a channel\footnote{We also note that, since the the cosdit is the maximally coherent state with a given coherence rank, Theorem~\ref{theorem1} implies that a generic input resource state $\omega$ will not be able to $\epsilon$-simulate a channel if its coherence, as measured by the coherence rank (or by coherence number if the resource is a mixed state, both defined in Appendix~\ref{app:RoC}), is smaller than the simulation cost: $C_{\mathrm{rank}}(\omega)< \lceil1+C_R^\epsilon(N)\rceil$. Indeed, the latter inequality implies that a cosdit $\psi_k'$ of the same rank $k'=C_{\mathrm{rank}}(\omega)$ cannot simulate the channel; moreover, by construction $\omega$ is less coherent that $\psi_k'$, since it can be obtained from the cosdit via $IO\subset MIO$~\cite{baumgratz2014}; hence $\omega$ cannot simulate a channel already not simulable with $\psi_k'$.}.
This provides a somewhat coarse measure of the 
implementation cost, as cosdits come in discretized units, the smallest
being $k= 2$. In order to quantify the actual resource consumed in 
the process, in the remaining part of this subsection we investigate 
to which extent some of the coherence of the input resource can be 
recovered after the channel implementation.
For this purpose let us now focus on the setting of Fig.~\ref{fig:setting}, 
where the resource, initially in a state $\omega$, 
is recovered in a degraded form $\sigma$ after fueling 
the MIO implementation of the target channel $\cN(\rho)$. We quantify the 
minimal resource consumed in this process by the difference between the coherence of $\omega$ and $\sigma$, when both the input and output states come in standard coherence units, i.e., $\omega$ and $\sigma$ are cosdits.
\begin{definition}\label{amorDef}
	The $\epsilon$-error amortized cost of a quantum channel $\cN$ is given by
	\begin{equation}
	  \label{eq_amortized}
	  \begin{split}
      \amoco{\cN} &:= \inf \ (C_{LR}(\Psi_k)-C_{LR}(\Psi_m))\\
      &=\inf (\log \frac{k}{m})\\
      \text{ \rm s.t. } &\ \cM\in MIO,\\
      &\ \cM(\Psi_k \otimes \cdot) = \Psi_m \otimes \cL(\cdot),\\
      &\ \frac{1}{2}||\cN-\cL||_\diamond\leq\epsilon,
      \end{split}
      \end{equation}
      where we recall that $C_{LR}(\Psi_k)=\log k$.
In other words the optimization is over all channels $\cL$ that are $\epsilon$-close to the target channel $\cN$ and that can be implemented via a MIO $\cM: R \otimes A \rightarrow  S \otimes B$ with cosdit input and output resource states, respectively $\Psi_k$ and $\Psi_m$, $k\geq m$.
\end{definition}
Note that tensor-product structure at the output of the simulation allows complete freedom in reusing $\sigma$ and $\mathcal{N}(\rho)$ afterwards; an entangled output would unnaturally constrain the recycling operations. For example, it would not allow the implementation of a sequence of channels applied on-the-fly to the same system, i.e., $\cN_n\circ\cdots\circ\cN_2\circ\cN_1$, whereas this is allowed by \eqq{eq_amortized}, see Corollary~\ref{corollary3} below.
 
Interestingly, the amortized cost defined in Eq.~\eqref{eq_amortized} can be related to the log-robustness of a channel. In order to show it, let us first discuss the exact implementation of a channel via MIOs with coherence recycling, which amounts to taking $\epsilon=0$ above and $\cL\equiv\cN$. In this case, thanks to Theorem~\ref{theorem1}, it is possible to estimate the robustness of coherence left in the resource after the implementation:
\begin{theorem}
  \label{theoremk-sim}
  For a quantum channel $\cN:A\rightarrow B$, there exists a MIO
  $\cM: R \otimes A \rightarrow  S \otimes B$ such that 
  $\cM(\Psi_k \otimes \cdot) = \sigma \otimes \cN(\cdot)$ if and only if
  \begin{equation}
    \label{eq_coh_left}
    C_{LR}(\cN) \leq C_{LR}(\Psi_{k}) - C_{LR}(\sigma).
  \end{equation}
\end{theorem}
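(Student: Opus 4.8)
The plan is to recognize the existence of the required MIO as an exact ($\epsilon=0$) simulation problem and then to read off the answer from Theorem~\ref{theorem1} and the additivity in Theorem~\ref{theorem_dual}. A MIO $\cM$ with $\cM(\Psi_k\otimes\cdot)=\sigma\otimes\cN(\cdot)$ is precisely an exact MIO simulation, with cosdit $\Psi_k$, of the augmented channel $\widetilde\cN:A\to S\otimes B$ defined by $\widetilde\cN(\rho)=\sigma\otimes\cN(\rho)$. Viewing the fixed output $\sigma$ as produced by the constant (preparation) channel $\cN_\sigma$ on a trivial one-dimensional input, one may write $\widetilde\cN=\cN_\sigma\otimes\cN$ as a genuine tensor product of channels, which is exactly what lets the additivity of Theorem~\ref{theorem_dual} enter.

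First I would compute $C_{LR}(\widetilde\cN)$. Additivity of the log-robustness under tensor products gives $C_{LR}(\widetilde\cN)=C_{LR}(\cN_\sigma)+C_{LR}(\cN)$, and since the constant channel satisfies $C_R(\cN_\sigma)=C_R(\sigma)$ we have $C_{LR}(\cN_\sigma)=C_{LR}(\sigma)$, hence $C_{LR}(\widetilde\cN)=C_{LR}(\sigma)+C_{LR}(\cN)$. Next I would apply Theorem~\ref{theorem1} at $\epsilon=0$ to $\widetilde\cN$: its exact simulation cost is $\simcoz{\widetilde\cN}=\log\lceil 1+C_R(\widetilde\cN)\rceil$. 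Because cosdits are freely downconvertible by MIO, namely $\Psi_k\to\Psi_{k_0}$ whenever $k\geq k_0$ by the criterion in Eq.~\eqref{planeconv}, the set of admissible ranks is upward closed, so an exact MIO simulation of $\widetilde\cN$ with $\Psi_k$ exists if and only if $\log k\geq\log\lceil 1+C_R(\widetilde\cN)\rceil$, i.e.\ $k\geq\lceil 1+C_R(\widetilde\cN)\rceil$.

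It then remains to remove the ceiling and translate back. Since $k$ is a positive integer, $k\geq\lceil 1+C_R(\widetilde\cN)\rceil$ is equivalent to $k\geq 1+C_R(\widetilde\cN)$; taking base-$2$ logarithms and using $C_{LR}(\Psi_k)=\log k$ together with $C_{LR}(\widetilde\cN)=C_{LR}(\sigma)+C_{LR}(\cN)$, this becomes $C_{LR}(\sigma)+C_{LR}(\cN)\leq C_{LR}(\Psi_k)$, which is exactly the claimed inequality $C_{LR}(\cN)\leq C_{LR}(\Psi_k)-C_{LR}(\sigma)$ of Eq.~\eqref{eq_coh_left}, in both directions. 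The only genuinely delicate points are the bookkeeping that presents $\widetilde\cN$ as a tensor product so that channel additivity applies (in particular that the trivial-input preparation channel inherits $C_R(\cN_\sigma)=C_R(\sigma)$), and the integrality observation that lets the ceiling in Theorem~\ref{theorem1} cancel against the integer rank $k$; everything else is a direct substitution into the two previously established theorems, and I expect no further analytic obstacle.
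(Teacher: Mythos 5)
Your proposal is correct and follows essentially the same route as the paper's proof: the paper likewise identifies the condition as an exact cosdit simulation of the augmented channel $\cT(1\otimes\rho)=\sigma\otimes\cN(\rho)$, with $\sigma$ viewed as a preparation channel, and then combines Theorem~\ref{theorem1} with the multiplicativity $(1+C_R(\cT))=(1+C_R(\sigma))(1+C_R(\cN))$, i.e.\ the additivity of $C_{LR}$ you invoke. Your only additions are to make explicit two points the paper leaves implicit --- the upward closedness of admissible ranks via MIO downconversion of cosdits and the cancellation of the ceiling against the integrality of $k$ --- both of which are sound.
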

Note that the bound in Eq.~\eqref{eq_coh_left} can always be attained by some resource state $\sigma_0$ that, however, will not be a maximally coherent. If we do impose that the output resource is a cosdit $\Psi_m$, we obtain the following result:
\begin{corollary}
  \label{corollary2}
  Given a quantum channel $\cN$ and an integer $k \geq 1+C_R(\cN)$,
  there always exists a MIO implementation of $\cN$ that takes a cosdit resource of coherence rank 
  $k$ and returns a degraded resource in the form of a cosdit 
  of rank \mbox{$m = \left\lfloor \frac{k}{1+C_R(\cN)} \right\rfloor$}, i.e.
\begin{equation}\label{interval}
    1+C_R(\cN) \leq \frac{k}{m} \leq (1+C_R(\cN))(1+ \frac{1}{m}).
\end{equation}
\end{corollary}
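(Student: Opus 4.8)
The plan is to obtain Corollary~\ref{corollary2} as a direct specialization of Theorem~\ref{theoremk-sim} to the case where the recovered resource $\sigma$ is itself a cosdit, followed by elementary arithmetic of the floor function. First I would invoke Theorem~\ref{theoremk-sim} with the choice $\sigma=\Psi_m$. It guarantees the existence of a MIO $\cM:R\otimes A\rightarrow S\otimes B$ with $\cM(\Psi_k\otimes\cdot)=\Psi_m\otimes\cN(\cdot)$ precisely when
\begin{equation}
  C_{LR}(\cN)\leq C_{LR}(\Psi_k)-C_{LR}(\Psi_m).
\end{equation}
Substituting $C_{LR}(\Psi_k)=\log k$, $C_{LR}(\Psi_m)=\log m$, and $C_{LR}(\cN)=\log(1+C_R(\cN))$, this feasibility condition rewrites as $\log(1+C_R(\cN))\leq\log(k/m)$, equivalently
\begin{equation}
  m\,(1+C_R(\cN))\leq k\qquad\Longleftrightarrow\qquad m\leq\frac{k}{1+C_R(\cN)}.
\end{equation}

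Second, I would select the largest integer admissible under this constraint, namely $m=\lfloor k/(1+C_R(\cN))\rfloor$. The hypothesis $k\geq 1+C_R(\cN)$ ensures $k/(1+C_R(\cN))\geq 1$, so that $m\geq 1$ is a legitimate cosdit rank and the feasibility condition is satisfied; hence the claimed MIO implementation with cosdit input $\Psi_k$ and cosdit output $\Psi_m$ exists.

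Third, the two inequalities in Eq.~\eqref{interval} follow immediately from the defining property of the floor, $m\leq k/(1+C_R(\cN))<m+1$. The left bound gives $1+C_R(\cN)\leq k/m$, while the right bound yields $k<(m+1)(1+C_R(\cN))$, hence $k/m<(1+1/m)(1+C_R(\cN))$, which in particular implies the stated non-strict upper bound.

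Since the entire content is carried by Theorem~\ref{theoremk-sim}, I do not expect a genuine obstacle here; the statement is essentially a discretized reading of that theorem. The only points requiring care are verifying that $m\geq 1$ so that $\Psi_m$ is well defined, and observing conceptually \emph{why} a gap between the lower and upper bounds on $k/m$ appears: forcing the output to be a cosdit, rather than the generically non-maximally-coherent optimizer $\sigma_0$ that exactly saturates Eq.~\eqref{eq_coh_left}, restricts $m$ to integer values, and the floor rounding is precisely what separates $1+C_R(\cN)$ from $k/m$ by at most the factor $1+1/m$.
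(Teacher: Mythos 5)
Your proposal is correct and takes essentially the same route as the paper, which offers no separate proof of Corollary~\ref{corollary2} precisely because it is meant as the immediate specialization of Theorem~\ref{theoremk-sim} to $\sigma=\Psi_m$: the feasibility condition $C_{LR}(\cN)\leq\log(k/m)$ with the choice $m=\left\lfloor k/(1+C_R(\cN))\right\rfloor$ yields Eq.~\eqref{interval} by floor arithmetic, exactly as you argue. Your added checks (that $k\geq 1+C_R(\cN)$ guarantees $m\geq 1$, and that the $1+\frac{1}{m}$ overhead stems from integer rounding when forcing a cosdit output instead of the saturating state $\sigma_0$) match the paper's surrounding discussion.
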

That is, demanding a cosdit at the output requires an overhead of at 
most $O(1/m)$ with respect to the optimal ratio that can be attained 
with a non-maximally coherent output resource; moreover, this overhead can be made arbitrarily small by simply providing a higher-rank cosdit resource at the input. This implies that the amortized cost of a channel is equal to its log-robustness and the same straightforwardly extends to the approximate case, as proved by the following theorem:
\begin{theorem}  
  \label{amortizedc}
   For any quantum channel $\cN$ it holds
  \begin{equation}    
  \label{eq:amor}
               \amoco{\cN}   = C_{LR}^\epsilon(\cN),
\end{equation}
where
   \begin{equation}
C_{LR}^\epsilon(\cN)=\min \left\{C_{LR}(\cL): \ \frac12 \|\cN-\cL\|_\diamond\leq \epsilon\right\}.
\end{equation}          
\end{theorem}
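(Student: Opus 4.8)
The plan is to prove the two inequalities $\amoco{\cN}\ge C_{LR}^\epsilon(\cN)$ and $\amoco{\cN}\le C_{LR}^\epsilon(\cN)$ separately, invoking Theorem~\ref{theoremk-sim} for the lower bound and Corollary~\ref{corollary2} for the upper bound. The smoothing over $\epsilon$-close channels is handled by decoupling the optimization: I first fix the approximating channel $\cL$ and only afterwards optimize the recycling overhead. Before starting I would note that the minimum defining $C_{LR}^\epsilon(\cN)$ is attained, since the feasible set $\{\cL:\tfrac12\|\cN-\cL\|_\diamond\le\epsilon\}$ is compact (finite dimensions, a closed diamond-norm ball) and $C_{LR}$ is continuous; call an optimizer $\cL^\ast$, so that $C_{LR}(\cL^\ast)=C_{LR}^\epsilon(\cN)$.

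For the lower bound I would take an arbitrary feasible point of the amortized-cost program, i.e.\ a MIO $\cM$ with $\cM(\Psi_k\otimes\cdot)=\Psi_m\otimes\cL(\cdot)$ and $\tfrac12\|\cN-\cL\|_\diamond\le\epsilon$. Since the output cosdit $\Psi_m$ is a special instance of the degraded resource $\sigma$ in Theorem~\ref{theoremk-sim}, the mere existence of such a $\cM$ forces, by the ``only if'' direction of that theorem together with $C_{LR}(\Psi_m)=\log m$,
\[
C_{LR}(\cL)\le C_{LR}(\Psi_k)-C_{LR}(\Psi_m)=\log\frac{k}{m}.
\]
Because $\cL$ is $\epsilon$-close to $\cN$, the definition of the smoothed log-robustness gives $C_{LR}(\cL)\ge C_{LR}^\epsilon(\cN)$, whence $\log(k/m)\ge C_{LR}^\epsilon(\cN)$ at every feasible point. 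Taking the infimum over all feasible points yields $\amoco{\cN}\ge C_{LR}^\epsilon(\cN)$.

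For the upper bound I would apply Corollary~\ref{corollary2} directly to the optimizer $\cL^\ast$: for every integer $k\ge 1+C_R(\cL^\ast)$ there is a genuine MIO implementation of $\cL^\ast$ with cosdit input $\Psi_k$ and cosdit output $\Psi_m$, where $m=\lfloor k/(1+C_R(\cL^\ast))\rfloor$, satisfying
\[
\log\frac{k}{m}\le\log\!\big((1+C_R(\cL^\ast))(1+\tfrac1m)\big)=C_{LR}(\cL^\ast)+\log(1+\tfrac1m).
\]
Each such configuration is feasible for the amortized-cost program with approximating channel $\cL^\ast$, so $\amoco{\cN}\le C_{LR}(\cL^\ast)+\log(1+1/m)$; letting $k\to\infty$ drives $m\to\infty$, the overhead $\log(1+1/m)\to 0$, and we obtain $\amoco{\cN}\le C_{LR}(\cL^\ast)=C_{LR}^\epsilon(\cN)$. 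Combining the two bounds gives the claimed equality.

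The only delicate point, which is really careful bookkeeping rather than a genuine obstacle, is the interplay between the discreteness of admissible cosdit ranks (the floor in Corollary~\ref{corollary2}) and the fact that the amortized cost is an \emph{infimum}: insisting that both input and output resources be \emph{exact} cosdits generically prevents the bound from being attained at any finite $k$, which is exactly why the vanishing $O(1/m)$ overhead must be absorbed by sending the input rank to infinity. I would therefore be explicit that the infimum need not be a minimum, and that it is this limiting procedure—together with the two-sided characterization supplied by Theorem~\ref{theoremk-sim}—that pins the value precisely at $C_{LR}^\epsilon(\cN)$.
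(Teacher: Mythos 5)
Your proof is correct and takes essentially the same route as the paper: there, too, the equality $\amocoz{\cN}=C_{LR}(\cN)$ is obtained by taking logarithms in \eqq{interval} of Corollary~\ref{corollary2} and letting $k,m\rightarrow\infty$ (the lower bound coming from Theorem~\ref{theoremk-sim}), after which the smoothed case follows by minimizing $C_{LR}(\cL)$ over channels $\cL$ that are $\epsilon$-close to $\cN$ in diamond norm. Your write-up simply makes explicit two points the paper treats as immediate, namely the decoupling of the optimization over $\cL$ from the optimization over cosdit ranks, and the attainment of the minimum defining $C_{LR}^\epsilon(\cN)$.
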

This second key result, together with Theorem~\ref{theorem1}, establishes the robustness of coherence of a channel as the correct measure to quantify the cosdit resources necessary to implement the channel using a single MIO. Eq.~\eqref{eq_MIO-simulation} determines the minimum coherence rank required at the input, while Eq.~\eqref{eq:amor} determines the minimum fraction of input coherence that is actually used in the process. Note however that the latter is a lower bound on the actual coherence consumed when a limited amount of resource can be employed at the input.

Finally, restricting again to the zero-error case, Corollary~\ref{corollary2} paves the way for the exact MIO implementation of arbitrary sequences of channels, as depicted in Fig.~\ref{fig:setting}.
\begin{corollary}
  \label{corollary3}
  Any succession of channels $\cN_1, \cN_2, \cdots, \cN_n$ can be implemented on-the-fly by concatenation of MIO implementations 
  at an asymptotically optimal amortized cost 
  \begin{equation}
  \sum_{i=1}^{n} \amocoz{\cN_{i}}\leq \log\frac{k}{k'} \leq \sum_{i=1}^{n} \amocoz{\cN_{i}} + \frac{n}{k'},
  \end{equation}
  where $k$ and $k'\leq k$ are the coherence ranks of the input and 
  output cosdits of the entire protocol.
\end{corollary}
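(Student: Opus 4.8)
The plan is to realise the on-the-fly protocol of Fig.~\ref{fig:setting} by \emph{concatenating} the single-channel building block of Corollary~\ref{corollary2}, feeding the degraded output cosdit produced at each step directly into the implementation of the next channel. Concretely, I would set $c_i:=1+C_R(\cN_i)$ and define a chain of cosdit ranks $k_0:=k$ and $k_i:=\left\lfloor k_{i-1}/c_i\right\rfloor$ for $i=1,\dots,n$, so that the final output rank is $k':=k_n$. At each step Corollary~\ref{corollary2} supplies a MIO $\cM_i$ with $\cM_i(\Psi_{k_{i-1}}\otimes\,\cdot\,)=\Psi_{k_i}\otimes\cN_i(\cdot)$, valid whenever the applicability hypothesis $k_{i-1}\ge c_i$ holds; composing the $\cM_i$ along the resource register then yields a single MIO implementing the cascade $\cN_n\circ\cdots\circ\cN_1$ with cosdit input $\Psi_k$ and cosdit output $\Psi_{k'}$, exactly as required.

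The first thing to verify is that the building block can be invoked at \emph{every} step, i.e.\ that $k_{i-1}\ge c_i$ for all $i$. Since each $c_i\ge1$ the rank chain is non-increasing, so this is guaranteed once the initial rank $k$ is taken sufficiently large relative to $\max_i c_i$, which is precisely the regime in which the statement is meaningful. Granting this, the whole estimate follows by telescoping the per-step bounds~\eqref{interval}. Writing $k/k'=\prod_{i=1}^n (k_{i-1}/k_i)$ and taking logarithms, the lower bound $k_{i-1}/k_i\ge c_i$ gives immediately
\begin{equation}
\log\frac{k}{k'}=\sum_{i=1}^n\log\frac{k_{i-1}}{k_i}\ge\sum_{i=1}^n\log c_i=\sum_{i=1}^n\amocoz{\cN_i},
\end{equation}
where in the last equality I use $\amocoz{\cN_i}=C_{LR}(\cN_i)=\log c_i$, which is the $\epsilon=0$ specialisation of Theorem~\ref{amortizedc} (equivalently, of Theorem~\ref{theoremk-sim}).

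For the upper bound I would invoke the right-hand side of~\eqref{interval}, $k_{i-1}/k_i\le c_i\bigl(1+1/k_i\bigr)$, which after taking logarithms yields
\begin{equation}
\log\frac{k}{k'}\le\sum_{i=1}^n\amocoz{\cN_i}+\sum_{i=1}^n\log\Bigl(1+\frac{1}{k_i}\Bigr).
\end{equation}
The residual sum is the accumulated rounding overhead incurred by demanding a cosdit at every intermediate stage. Because the chain of ranks is non-increasing, $k_i\ge k_n=k'$ for every $i$, so each summand is at most $\log(1+1/k')\le 1/k'$, and summing the $n$ terms gives the advertised overhead $n/k'$. This also makes precise the meaning of \emph{asymptotically optimal}: the overhead per channel, $1/k'$, vanishes as $k'\to\infty$, so the amortized cost of the whole cascade converges to the sum $\sum_i\amocoz{\cN_i}$ of the individual single-shot optimal costs.

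The step I expect to be the genuine obstacle is controlling this accumulated overhead cleanly. The floors taken at successive stages interact, and one must argue that their combined effect collapses into the single term $n/k'$ rather than degrading as the cascade lengthens; the monotonicity $k_i\ge k'$ is exactly what reduces the $n$ distinct contributions $\log(1+1/k_i)$ to $n\log(1+1/k')$. The accompanying bookkeeping is to check that $k$ can indeed be chosen so that all intermediate ranks stay above their thresholds $c_i$, ensuring Corollary~\ref{corollary2} never fails en route; once both of these are in place, the two displayed inequalities sandwich $\log(k/k')$ between $\sum_i\amocoz{\cN_i}$ and $\sum_i\amocoz{\cN_i}+n/k'$, which is the claim.
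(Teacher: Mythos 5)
Your construction is precisely the argument the paper intends: Corollary~\ref{corollary3} is stated without a separate proof because it is meant to follow by greedily chaining the building block of Corollary~\ref{corollary2}, with ranks $k_0=k$, $k_i=\lfloor k_{i-1}/c_i\rfloor$, $c_i=1+C_R(\cN_i)$, and telescoping the sandwich of \eqq{interval}; your lower bound, the identification $\amocoz{\cN_i}=C_{LR}(\cN_i)=\log c_i$ via Theorem~\ref{amortizedc}, and the observation that the non-increasing chain gives $k_i\geq k'$ are all correct and match that intended route.

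One step fails as literally written, though only at the level of a constant. The paper fixes base-2 logarithms, and for $x\in(0,1]$ convexity gives $2^x\leq 1+x$, hence $\log(1+1/k')\geq 1/k'$ --- the \emph{reverse} of the inequality you invoke in the final estimate. The correct bound is $\log(1+1/k_i)\leq (\log_2 e)/k_i\leq(\log_2 e)/k'$, so your residual sum is at most $n\log_2 e/k'\approx 1.44\,n/k'$, not $n/k'$. This still delivers the actual content of the corollary --- an $O(n/k')$ overhead, vanishing per channel as $k'\to\infty$, hence asymptotic optimality --- and the literal constant $1$ in the statement suffers from the same slip: for $n=1$, taking $\cN$ the constant channel onto $\Psi_{10}$ (so $c=10$) and $k=29$, the optimal chain gives $k'=2$ and $\log(29/2)\approx 3.86>\log 10+1/2\approx 3.82$, so no argument can recover the stated constant in base~2. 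A second, minor bookkeeping point: applicability of Corollary~\ref{corollary2} at step $i$ requires $k_{i-1}\geq c_i$, and since $k_{i-1}$ shrinks multiplicatively, roughly like $k/\prod_{j<i}c_j$, the correct requirement is that $k$ exceed the running products $\prod_{j\leq i}c_j$, not merely that it be large relative to $\max_i c_i$ as you write; your conclusion that a sufficiently large $k$ suffices is of course unaffected.
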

Note that the additivity of the zero-error log-robustness under tensor product, Theorem~\ref{theorem_dual}, already implies $\amocoz{\cN_1\otimes\cdots\otimes\cN_n}=\sum_{i=1}^n\amocoz{\cN_i}$. However, Corollary~\ref{corollary3} is more general and includes this result as a special case. Indeed, it allows the exact implementation of \emph{arbitrary sequences} of channels, and not just their tensor product, at an amortized cost equal to the sum of the single amortized costs of each channel. In particular, this allows the implementation of a concatenation of channels on-the-fly and it justifies the choice of a tensor-product structure at the output of the recycling process.
Finally, from Theorem \ref{theorem_dual} we know that the log-robustness of coherence also quantifies the cohering power of a quantum channel. Hence we conclude that the exact amortized cost of a channel coincides with its cohering power.

\subsubsection*{Example: Qubit unitaries}
Let us now focus on implementing the simplest kind of channels, 
qubit unitary gates.
A unique representation of a qubit unitary up to incoherent symmetries
is given by	\cite{bendana2017}
\begin{equation}
  \label{qub_uni}
  U_\theta=\begin{pmatrix}
             c & -s\\ 
             s & c
           \end{pmatrix},
\end{equation}
where $c = \cos\theta$, $s = \sin \theta$ and $0<\theta\leq \frac{\pi}{4}$.
In this case we can compute all the quantities defined above:
\begin{align}
C_{LR}(U_\theta)&=\log(1+\sin2\theta)=C_{\mathrm{amo}}(U_\theta),\label{qubitAmoco}\\
\simcoz{U_\theta}&=\begin{cases} 0&\theta=0,\\
1&\mathrm{otherwise}.\end{cases},
\end{align}
as can be readily checked by noticing that the robustness of a qubit state is equal to its 
$\ell_1$-norm of coherence (see Eq. (\ref{app:l1nc})): $C_R\big(U_\theta\ket{i}\!\bra{i}U_\theta^\dagger\big)
      =C_{\ell_1}\big(U_\theta\ket{i}\!\bra{i}U_\theta^\dagger\big)
      =2c s = \sin 2\theta$.\\
Eventually, in contrast with the case of IOs \cite{bendana2017}, where the implementation of qubit unitaries consumes the entire cosbit resource, 
Theorem \ref{theoremk-sim} ensures that MIOs do allow for 
coherence recycling. More precisely, there exists a MIO $\cM$ such that 
$\cM(\Psi_2\otimes \rho)=\sigma  \otimes U_\theta\rho U_\theta^\dagger$ 
for any qubit state $\rho$ if and only if the output resource state $\sigma$ has coherence
\begin{equation}
  \label{upBoundCohLeft}
  C_R(\sigma)\leq \dfrac{1-\sin 2\theta}{1+\sin 2\theta}.
\end{equation}

\subsection{Arbitrary resources for MIO implementation of channels}
\label{noncosdits}

Let us now go beyond the assumptions of Sec.~\ref{cosdits} by considering 
a scenario where non-maximally coherent states are 
employed as resources (see~\cite{Stahlke2011} for an analogous
result in entanglement theory).
In particular, we want to study under which conditions the MIO 
implementation of a quantum channel in the settings of 
Fig.~\ref{fig:settingnoancilla} and Fig.~\ref{fig:setting} is still possible.

We begin by introducing a semidefinite program to 
{assess} the performance of an arbitrary resource at 
implementing any target channel; this program
yields the best approximate MIO implementation of the target channel with a given 
resource state (Fig.~\ref{fig:settingnoancilla}), as measured by the 
diamond norm. 
\begin{proposition}
  \label{diamondnorm_sdp}
  The smallest diamond norm error for the implementation of a 
  quantum channel $\cN:A\rightarrow B$ via a MIO $\cM:R\otimes A\rightarrow B$ with 
  a coherent resource $\omega\in R$ is given by the following 
  semidefinite program:
  \begin{equation}\begin{split}
	  \label{diamondsdp2}
	  \min &\ \lambda\\
	   \text{\rm s.t.}&\ J_{\cM}    \text{ is the Choi matrix of } \cM \in \text{MIO}\\
 		&\ J_{\cE} =    \tr_{R}((\omega^t \otimes \mathds{1}_{A} \otimes \mathds{1}_{B}) J_{\cM})\\
 		&\ Z  \geq 0\\
 		&\ Z  \geq J_{\cN}-J_{\cE}\\
 		&\ \lambda \mathds{1}_{A} \geq \tr_{B} Z,
  \end{split}\end{equation}
  where $J_{\cN}$ is the Choi matrix of $\cN$,  $\omega^{t}$ denotes the transpose of $\omega$, and $J_{\cE}$ that of its implementation, 
  $\cE=\tr_{B}\cM(\omega\otimes\cdot)$. Recall that the Choi matrix $J_{\cM}$ of a MIO $\cM$ as in Eq.~\eqref{diamondsdp2}, is fully characterized by $J_{\cM}\geq 0$,
  $ \tr_B J_{\cM} = \mathds{1}_{RA} $ and $ \tr ((\ket{i}\!\bra{i}_{RA}\otimes \ket{j}\!\bra{k}_{B})J_{\cM}) =0\ \forall i \mbox{ and } \forall j\neq k$.
\end{proposition}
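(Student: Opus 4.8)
The plan is to recast the stated minimization---over all MIO channels $\cM:R\otimes A\to B$---of the diamond-norm distance between the target $\cN$ and the effectively implemented channel $\cE(\cdot)=\cM(\omega\otimes\,\cdot\,)$ as a single semidefinite program in the Choi matrix $J_\cM$. Three ingredients must be made explicit: the Choi matrix $J_\cE$ of the effective channel, the membership $\cM\in\text{MIO}$, and the diamond-norm error $\tfrac12\|\cN-\cE\|_\diamond$. The crucial observation is that, whereas Watrous's SDP for the diamond norm is usually written for a \emph{fixed} pair of channels, here the Choi matrix $J_\cE$ of the second channel depends \emph{linearly} on the optimization variable $J_\cM$; since a linear substitution preserves semidefinite representability, the joint problem is still an SDP.

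Fixing the (unnormalized) Choi convention $J_\cM=(\id_{RA}\otimes\cM)(\dketbra{\Omega})$ with $\ket{\Omega}=\sum_i\ket{ii}_{RA}$, so that $\cM(\xi)=\tr_{RA}[(\xi^t\otimes\1_B)J_\cM]$, I would substitute $\xi=\omega\otimes\rho$ and carry out the $R$-trace first. This yields $\cE(\rho)=\tr_{A}[(\rho^t\otimes\1_B)\,J_\cE]$ with $J_\cE=\tr_{R}[(\omega^t\otimes\1_A\otimes\1_B)J_\cM]$, exactly the second constraint of \eqref{diamondsdp2} and manifestly linear in $J_\cM$. Next I would translate $\cM\in\text{MIO}$ into Choi conditions: complete positivity is $J_\cM\geq0$ and trace preservation is $\tr_B J_\cM=\1_{RA}$. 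The incoherence-preserving requirement $\cM(\Delta)\subseteq\Delta$ need only be verified on the extreme incoherent inputs $\ketbra{i}{i}_{RA}$; demanding that each output $\cM(\ketbra{i}{i}_{RA})$ be diagonal in $B$ and expressing its off-diagonal entries through $J_\cM$ gives $\tr((\ketbra{i}{i}_{RA}\otimes\ketbra{j}{k}_B)J_\cM)=0$ for all $i$ and all $j\neq k$, a family of affine equalities.

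Finally, since $\cN-\cE$ is Hermiticity-preserving (a difference of completely positive maps), I would invoke the standard primal SDP for its diamond-norm distance between channels, $\tfrac12\|\cN-\cE\|_\diamond=\min\{\|\tr_B Z\|_\infty:\ Z\geq0,\ Z\geq J_\cN-J_\cE\}$. Encoding $\|\tr_B Z\|_\infty\leq\lambda$ by the operator inequality $\lambda\1_A\geq\tr_B Z$ and then minimizing jointly over $Z$ and over $J_\cM$---subject to the three MIO constraints and to the linear definition of $J_\cE$---reproduces \eqref{diamondsdp2} line by line, with optimum $\lambda=\tfrac12\|\cN-\cE\|_\diamond$, i.e. the minimal implementation error in the normalization used throughout for $\epsilon$. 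All constraints are linear matrix inequalities or affine equalities, so the program is genuinely semidefinite.

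The work here is more bookkeeping than conceptual. The points requiring care are getting the transpose and partial-trace placements right in the derivation of $J_\cE$, and justifying that testing $\cM\in\text{MIO}$ only on the diagonal basis inputs $\ketbra{i}{i}_{RA}$ is sufficient, which follows from linearity of $\cM$ and convexity of $\Delta$. The one genuinely non-elementary ingredient is the validity of the one-variable diamond-norm SDP $\tfrac12\|\Phi\|_\diamond=\min\{\|\tr_B Z\|_\infty:Z\geq0,\,Z\geq J_\Phi\}$ for Hermiticity-preserving $\Phi$ arising as a difference of channels; I would take this from Watrous~\cite{watrous2009} (or re-derive it via strong duality of the channel-discrimination SDP), after which the identification with \eqref{diamondsdp2} is immediate.
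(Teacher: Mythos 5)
Your proposal is correct and matches the paper's own proof essentially step for step: the paper likewise derives the linear relation $J_{\cE}=\tr_{R}((\omega^{t}\otimes\1_{A}\otimes\1_{B})J_{\cM})$ from the Choi reconstruction identity, characterizes $\cM\in\text{MIO}$ by the same three Choi-matrix constraints, and then substitutes into Watrous's semidefinite formulation $\frac12\|\cN-\cE\|_\diamond=\min\{\lambda:\ Z\geq0,\ Z\geq J_{\cN}-J_{\cE},\ \tr_{B}Z\leq\lambda\1_{A}\}$, exactly as you do. Your additional remarks---that the linear dependence of $J_{\cE}$ on $J_{\cM}$ preserves semidefinite representability, and that $\cM(\Delta)\subseteq\Delta$ need only be tested on the incoherent basis states by linearity and convexity---are correct and simply make explicit what the paper leaves implicit.
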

In  Appendix~\ref{app:sdps} we provide a simpler semidefinite program for the case of unitary channels, where the precision of the simulation is assessed  in terms of the average gate fidelity $f(U,\cN)$ \cite{horodecki99}, rather than by the diamond distance. Specifically, we compute the entanglement fidelity $F(U,\cN)=\tr J_U J_\cN$ (where $J_U$ and $J_\cN$ are the Choi matrices of $U$ and $\cN$, respectively), which fulfills $F(U,\cN)=\dfrac{(d+1)f(U,\cN)-1}{d}$, where $d$ is the dimension of the Hilbert space on which the channels act \cite{horodecki99}. From now on we will refer to it as ``gate fidelity".

Regarding instead the recycling setting of Fig.~\ref{fig:setting}, thanks to Theorem~\ref{amortizedc} it is straightforward to write a semidefinite program for the $\epsilon$-error amortized cost of a channel with cosdit input resource, as shown in Fig.~\ref{fig:amoCost}. If, in addition, we ask for the maximum robustness of coherence left in the output resource when a non-maximally coherent input resource is employed, we end up with the following optimization problem:
\begin{proposition}
\label{cohLeftNonLin}
The maximum coherence left in the resource $\sigma\in S$ after the 
  implementation of a quantum channel $\cN:A\rightarrow B$ via a
  MIO $\cM:R\otimes A \rightarrow S\otimes B$ and a coherent resource $\omega\in R$
  up to error $\epsilon$ in diamond norm is:
  \begin{equation}\begin{split}
    \label{cohLeftNonLinEQ}
    \max &\ C_{R}(\sigma)\\
     \text{\rm s.t.}&\ \cM(\omega\otimes\cdot)= \sigma\otimes\cL(\cdot)\\
     &\ \frac{1}{2}||\cN-\cL||_{\diamond}\leq \epsilon.
    \end{split}
    \end{equation}
\end{proposition}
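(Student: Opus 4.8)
The plan is to show that the program in Eq.~\eqref{cohLeftNonLinEQ} is nothing but the supremum of the recoverable coherence $C_R(\sigma)$ over all physically admissible recycling implementations of $\cN$ in the setting of Fig.~\ref{fig:setting}. Recall that such an implementation, up to error $\epsilon$, is by definition a MIO $\cM:R\otimes A\rightarrow S\otimes B$ whose reduced action on the fixed resource $\omega$ factorizes into a fixed output resource state $\sigma$ and an implemented channel $\cL$, i.e.\ $\cM(\omega\otimes\cdot)=\sigma\otimes\cL(\cdot)$, with $\cL$ required to be $\epsilon$-close to the target in diamond norm, $\frac12\|\cN-\cL\|_\diamond\leq\epsilon$. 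Since the coherence left in the resource after the implementation is precisely $C_R(\sigma)$, the claim reduces to matching the operationally defined maximum with the optimum of the stated program, which I would do by establishing the two directions of feasibility and achievability.

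For soundness, I would take any admissible implementation and read off the triple $(\cM,\sigma,\cL)$: by construction $\cM\in\text{MIO}$, the factorization $\cM(\omega\otimes\cdot)=\sigma\otimes\cL(\cdot)$ holds, and the diamond-distance bound is met, so the triple is feasible for Eq.~\eqref{cohLeftNonLinEQ} and attains objective value $C_R(\sigma)$; hence the operational maximum cannot exceed the program's optimum. For achievability I would run the argument in reverse: given any feasible triple, the MIO $\cM$ fed with $\omega$ realizes exactly the recycling implementation of Fig.~\ref{fig:setting}, producing the resource $\sigma$ together with the channel $\cL$---which is automatically CPTP, since $\cM$ is completely positive and trace-preserving and $\omega\geq0$ is normalized---and $\cL$ is $\epsilon$-close to $\cN$, so the recovered coherence is $C_R(\sigma)$. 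As the output system $S$ may be fixed to a bounded dimension without loss of generality, the supremum is attained and the two quantities coincide.

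To render the program explicit I would finally translate every constraint into Choi form, reusing the MIO characterization of Proposition~\ref{diamondnorm_sdp}: $J_\cM\geq0$, $\tr_{SB}J_\cM=\mathds{1}_{RA}$, together with the incoherence conditions on $J_\cM$. The reduced action has Choi matrix $\tr_{R}\big((\omega^t\otimes\mathds{1})J_\cM\big)$, the factorization constraint becomes $\tr_{R}\big((\omega^t\otimes\mathds{1})J_\cM\big)=\sigma\otimes J_\cL$, the diamond-distance bound is encoded by the same auxiliary block $Z\geq0$, $Z\geq J_\cN-J_\cL$, $\epsilon\,\mathds{1}_A\geq\tr_B Z$ as in Eq.~\eqref{diamondsdp2}, and the objective uses the dual robustness SDP $1+C_R(\sigma)=\max\{\tr(W\sigma):W\geq0,\ \diag(W)=\mathds{1}\}$, where the last constraint fixes all diagonal entries of $W$ to one.

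The main obstacle is that, unlike Proposition~\ref{diamondnorm_sdp}, the resulting program is not jointly an SDP. The factorization equates the reduced Choi matrix with the tensor product $\sigma\otimes J_\cL$, which is bilinear in the unknowns $\sigma$ and $J_\cL$, and maximizing $C_R(\sigma)$ pairs the dual witness $W$ against $\sigma$ in the bilinear term $\tr(W\sigma)$. Both are genuinely non-convex, so Eq.~\eqref{cohLeftNonLinEQ} should be read as a characterization rather than a single convex program, consistent with the text calling it an ``optimization problem'' and not an SDP. In practice I would solve it by a see-saw between two SDPs---fix $\sigma$ and optimize $(\cM,\cL)$ over the remaining linear constraints, then fix $(\cM,\cL)$ and maximize $C_R(\sigma)$---which converges to a local optimum while certifying feasibility, the honest statement of what can be guaranteed here.
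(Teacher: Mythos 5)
Your proposal is correct and matches the paper's treatment: Proposition~\ref{cohLeftNonLin} is stated without a separate proof precisely because, as you argue, the program is the verbatim operational translation of the recycling setting of Fig.~\ref{fig:setting}, and your identification of the bilinear tensor-product constraint $\sigma\otimes J_{\cL}$ (together with the bilinear pairing $\tr(W\sigma)$ in the dual-robustness objective) as the obstruction to an SDP formulation is exactly the paper's own remark immediately following the proposition. The only substantive divergence is computational rather than logical: where you suggest a see-saw between two SDPs converging to a local optimum, the paper instead relaxes the exact factorization to diamond-norm closeness of the joint output $\sigma\otimes\cN$ (Proposition~\ref{cohLeftUpBound} in Appendix~\ref{app:altCohLeft}), obtaining a genuine SDP upper bound---with the caveat, relevant to your unproven assertion that $S$ may be fixed to bounded dimension, that this relaxed bound becomes unbounded as the output resource dimension grows, whereas in your exact formulation monotonicity of the log-robustness under MIO does cap $C_R(\sigma)$ in terms of $C_R(\omega)$.
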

This problem captures exactly the setting of Fig.~\ref{fig:setting} with arbitrary input resource. Note, however, that it cannot be formulated as a semidefinite program, since the tensor-product constraint is non-linear in the optimization variables $\sigma$ and $\cL$. One can devise alternative semidefinite programs by relaxing the constraints in \eqq{cohLeftNonLinEQ}, as we discuss in Appendix~\ref{app:altCohLeft}.

The semidefinite program in Proposition~\ref{diamondnorm_sdp} and that for the amortized cost allow for a thorough numerical analysis of our problem, see Figs~\ref{fig:fidelity_2d_diamond}-\ref{fig:amoCost}.
Nevertheless, before analysing the numerical results for qubits, let us 
present some general, analytical results on MIO implementation of channels with non-maximally coherent resources. As a start, according to Theorem \ref{theorem1}, any resource state that is MIO-convertible into a cosdit of coherence rank $d$ can be trivially used to implement any quantum channel of log-robustness smaller than $\log d$. 
The following lemma provides us with a simple necessary 
condition for the existence of pure state transformation via MIO.
\begin{lemma}
  \label{lemma1}
  Assume that a pure state $\psi$ is transformed to a pure state $\phi$ via MIO, then 
  \begin{equation}
    \label{eq:fidcrit}
    \lambda_1(\psi) \leq \lambda_1(\phi),
  \end{equation}
  where $\lambda_1(\rho)= \max_i \bra{i}\rho\ket{i}$ is the largest diagonal entry 
  of an operator $\rho$, which coincides with its fidelity of 
  coherence~\cite{streltsov2015,shao2015} on pure states.
\end{lemma}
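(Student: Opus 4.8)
The plan is to recognize $\lambda_1$, on pure states, as the maximal fidelity to the incoherent set $\Delta$, and then to exploit the data-processing inequality for the fidelity together with the defining property of a MIO. For a state $\rho$ define
\begin{equation*}
G(\rho):=\max_{\delta\in\Delta}F(\rho,\delta),\qquad F(\rho,\sigma)=\Big(\tr\sqrt{\sqrt{\rho}\,\sigma\sqrt{\rho}}\Big)^{2}.
\end{equation*}
First I would verify that $G$ reduces to $\lambda_1$ on a pure state $\dketbra{\psi}$. Since $F(\dketbra{\psi},\delta)=\sandwich{\psi}{\delta}{\psi}$ for pure $\psi$, and any incoherent $\delta=\sum_i p_i\dketbra{i}$ gives $\sandwich{\psi}{\delta}{\psi}=\sum_i p_i|\braket{i}{\psi}|^{2}$, which is affine in the probability vector $(p_i)$, the maximum is attained at a vertex and equals $\max_i|\braket{i}{\psi}|^{2}=\lambda_1(\psi)$. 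Thus $G(\dketbra{\psi})=\lambda_1(\psi)$ and likewise $G(\dketbra{\phi})=\lambda_1(\phi)$; this is precisely the identification of $\lambda_1$ with the fidelity of coherence on pure states claimed in the statement.

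The heart of the argument is that $G$ is non-decreasing under any MIO $\cM$. Because $\cM$ maps incoherent states to incoherent states, $\{\cM(\delta):\delta\in\Delta\}\subseteq\Delta$, so restricting the maximization to this subfamily can only decrease it:
\begin{equation*}
G(\cM(\rho))=\max_{\delta'\in\Delta}F(\cM(\rho),\delta')\ \geq\ \max_{\delta\in\Delta}F\big(\cM(\rho),\cM(\delta)\big).
\end{equation*}
Applying the monotonicity of the fidelity under the CPTP map $\cM$, namely $F(\cM(\rho),\cM(\delta))\geq F(\rho,\delta)$ for each $\delta$, the right-hand side is bounded below by $\max_{\delta\in\Delta}F(\rho,\delta)=G(\rho)$. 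Hence $G(\cM(\rho))\geq G(\rho)$.

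Finally I would specialize to the hypothesis $\cM(\dketbra{\psi})=\dketbra{\phi}$ with both states pure. Monotonicity yields $G(\dketbra{\phi})=G(\cM(\dketbra{\psi}))\geq G(\dketbra{\psi})$, and the pure-state identification turns this into $\lambda_1(\phi)\geq\lambda_1(\psi)$, which is exactly \eqref{eq:fidcrit}. The two ingredients requiring care are the vertex argument giving $G=\lambda_1$ on pure states—elementary, since $\sandwich{\psi}{\delta}{\psi}$ is affine in the $p_i$—and the data-processing inequality for the fidelity, which is standard. I do not expect either to be a genuine obstacle; the only conceptual content is that the largest diagonal entry is the overlap with the nearest incoherent state, and such an overlap cannot shrink under operations that preserve the incoherent set.
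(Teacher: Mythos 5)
Your proof is correct and follows essentially the same route as the paper's: both identify $\lambda_1$ on pure states with the fidelity of coherence $F_C(\rho)=\max_{\delta\in\Delta}F(\rho,\delta)^2$ and conclude via its monotone non-decrease under MIO. The only difference is one of self-containedness --- the paper cites the monotonicity of the geometric measure $C_g=1-F_C$ from the literature, whereas you derive it directly from $\cM(\Delta)\subseteq\Delta$ together with the data-processing inequality for the fidelity, which is exactly the standard argument behind the cited fact.
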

Condition \eqref{eq:fidcrit} is also sufficient if the 
target state is maximally coherent, since if $\lambda_1(\psi)\leq \frac{1}{d}$ then
there exists a IO$\subset$MIO that transforms $\psi$ into $\Psi_{d}$ 
due to the majorization criterion~\cite{winter2016}. 

Hence, for any channel $\cN$ with $C_{LR}(\cN)\leq\log d$ there 
exists a continuous family of resource states of dimension $d$ that 
allow for the exact MIO simulation of it via conversion to cosdits.
We now ask whether there exist any channels that can be implemented using pure 
resource states $\ket{\omega}$ that are not convertible into cosdits, 
i.e., $\lambda_1(\omega)> 1/d$. To this end let us define a special 
class of states that will be useful in this context.
\begin{definition}\label{def:flagpole}
	A $d$-dimensional \emph{flagpole state} is a pure state of the form
	\begin{equation}
	\label{eq:flagpole}
		\ket{\varphi_p} = \sqrt{p}\ket{0}+\sqrt{\frac{1-p}{d-1}}\sum_{j=1}^{d-1}\ket{j},\ \text{with} \ \ \frac1d \leq p \leq 1.
	\end{equation}
\end{definition}
\begin{figure}[t!]
	\centering
	\begin{tikzpicture}
 	\node[inner sep=0pt] at (0,0)
	{\includegraphics[width = 5cm]{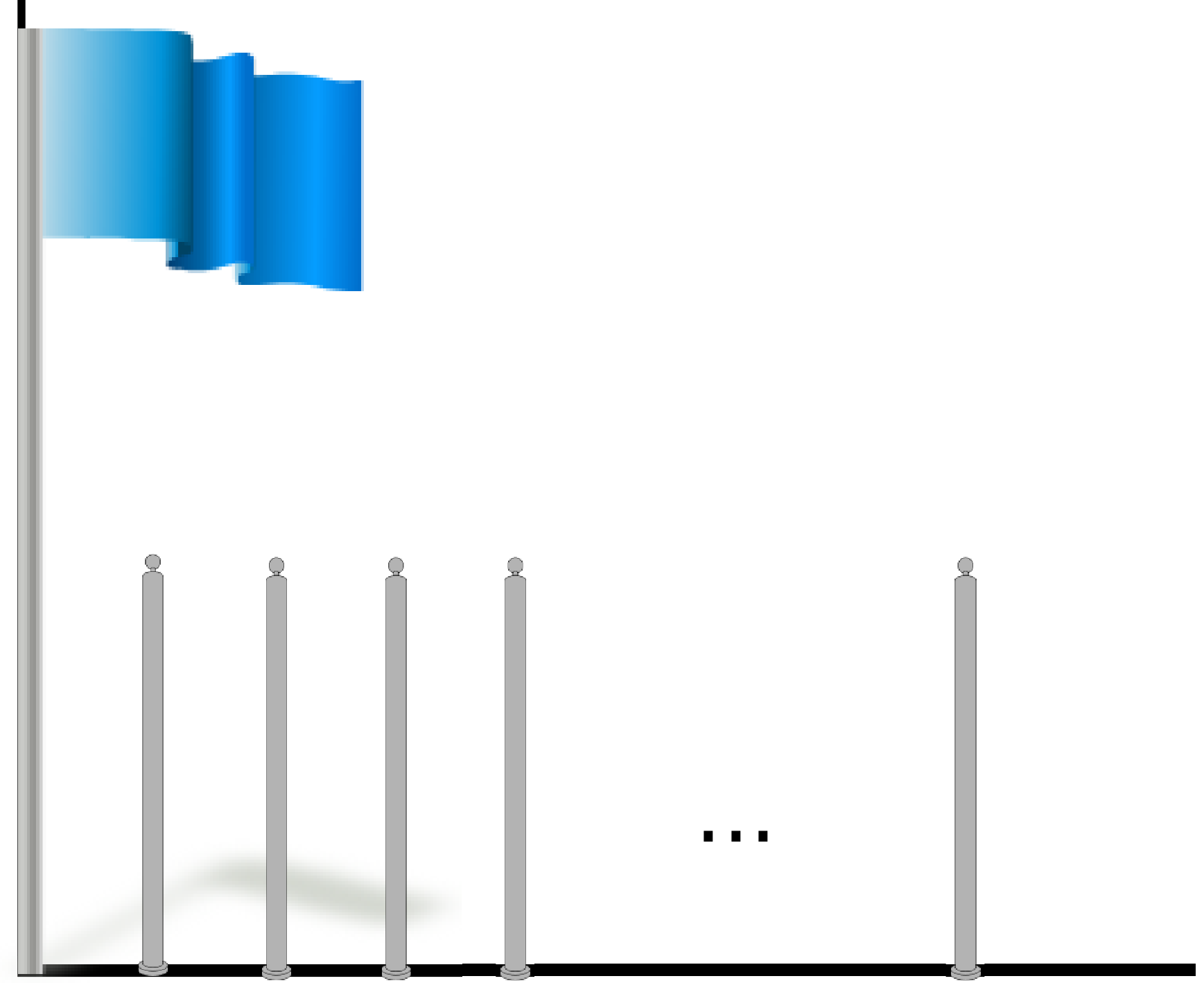}};
	\node[] at (-2.4,-2.25) {\small $a_0$};
	\node[] at (-1.9,-2.25) {\small $a_1$};
	\node[] at (-1.38,-2.25) {\small $a_2$};
	\node[] at (-0.83,-2.25) {\small $a_3$};
	\node[] at (-0.35,-2.25) {\small $a_4$};
	\node[] at (0.6,-2.25) {\small $\cdots$};
	\node[] at (1.5,-2.25) {\small $a_{d-1}$};
	\node[] at (-3,1.8) {\small $\sqrt{p}$};
	\node[] at (-3,-0.4) {\small $\sqrt{\frac{1-p}{d-1}}$};
\end{tikzpicture}
	\caption{Schematic depiction of the coefficients $\{a_j\}_{j=0}^{d-1}$ of a flagpole state $\ket{\varphi_p}$ in the incoherent basis.}
	\label{fig:flagpolestate}
\end{figure}
The structure of flagpole states, shown in Fig.~\ref{fig:flagpolestate}, endows them with several useful properties.
Thanks to the majorization criterion~\cite{winter2016}, namely that the pure state transformation $\phi \xrightarrow{IO}\psi$ is possible if $\cal D (\psi) \succ  \cal D (\phi)$ (where $\cal D$ is the complete dephasing map), it is easy to show that: i) for all pure states $\phi$ we can transform $\phi\mapsto\varphi_p$ with a specific value of $p\geq F_C(\phi)$ (see Eq. (\ref{fidofcoh})); ii) conversely, we can transform $\varphi_p\mapsto\phi$ for all $\phi$ such that $p\leq F_C(\phi)$.
{In other words, $\varphi_p$ is the most coherent state of fixed coherence rank
with fidelity of coherence larger than or equal to $p$.
Moreover, as explained above, a $d$-dimensional flagpole state with 
$p\leq \frac{1}{d-1}$ can be converted into a $(d-1)$-dimensional 
cosdit and thus trivially implements any channel of log-robustness 
smaller or equal than $\log(d-1)$.  
For all other flagpoles with $p>\frac{1}{d-1}$,
which cannot be converted to a cosdit of rank $d-1$, 
the following theorem holds.
\begin{theorem}\label{theorem2}
	For any quantum channel $\cN:A\rightarrow B$, if
	\begin{equation}\label{p_achievability}
	  p\leq \frac{1}{1+C_R(\cN)},
	\end{equation}
	then there exists a MIO $\cM:R\otimes A\rightarrow B$ such that
	\mbox{$\cM(\varphi_p \otimes \rho) = \cN(\rho)$} for all states $\rho$, 
	where $\varphi_p$ is a $d$-dimensional flagpole state, $d > |B|$.
\end{theorem}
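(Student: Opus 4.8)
The plan is to reduce the statement to the single boundary value $p=1/(1+C_R(\cN))$ and then to exhibit an explicit MIO there. Set $\lambda:=1+C_R(\cN)$ and $p':=1/\lambda$ (the case $C_R(\cN)=0$ is trivial, since then $\cN$ is itself MIO and tracing out $R$ before applying $\cN$ works for every $\varphi_p$, so assume $\lambda>1$). By Definition~\ref{RobDef} there is a MIO $\cM_0$ with $\cN\le\lambda\cM_0$; as $\cN$ and $\cM_0$ are trace preserving, $\lambda\cM_0-\cN$ is completely positive with constant trace $\lambda-1$, so $\widetilde{\cN}:=(\lambda\cM_0-\cN)/(\lambda-1)$ is a channel and
\begin{equation}
\cM_0=p'\,\cN+(1-p')\,\widetilde{\cN}.
\end{equation}
Thus the optimal MIO realizes $\cN$ as the dominant branch of a binary mixture with weight exactly $p'$; matching this weight $p'$ to the flagpole weight at $p=p'$ is what the construction will exploit.

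First I would dispose of the regime $p<p'$ by monotonicity, so that only $p=p'$ needs a construction. I claim $\varphi_p\xrightarrow{\mathrm{MIO}}\varphi_{p'}$ whenever $p\le p'$. Indeed $\lambda_1(\varphi_p)=p\le p'=\lambda_1(\varphi_{p'})$ is consistent with Lemma~\ref{lemma1}, and a direct comparison of partial sums shows $\mathcal{D}(\varphi_{p'})\succ\mathcal{D}(\varphi_p)$: with $q=\tfrac{1-p}{d-1}$, $q'=\tfrac{1-p'}{d-1}$, the $m$-th partial-sum gap equals $(p'-p)\,\tfrac{d-m}{d-1}\ge0$, vanishing at $m=d$. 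By the majorization criterion~\cite{winter2016} this yields an IO (hence MIO) conversion $\cE_{p\to p'}$ on $R$. Consequently, if $\cM'$ is a MIO with $\cM'(\varphi_{p'}\otimes\rho)=\cN(\rho)$, then $\cM:=\cM'\circ(\cE_{p\to p'}\otimes\id_A)$ is again MIO and satisfies $\cM(\varphi_p\otimes\rho)=\cN(\rho)$. Hence it suffices to treat $p=p'$.

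It remains to build a MIO $\cM':R\otimes A\to B$ with $\cM'(\varphi_{p'}\otimes\rho)=\cN(\rho)$ for the $d$-dimensional flagpole, $d>|B|$; equivalently, to exhibit a zero-error feasible point of the semidefinite program of Proposition~\ref{diamondnorm_sdp}, i.e.\ a MIO Choi matrix $J_{\cM'}$ with $\tr_R[(\varphi_{p'}^{t}\otimes\1)J_{\cM'}]=J_{\cN}$. Split $\ket{\varphi_{p'}}=\sqrt{p'}\ket{0}+\sqrt{1-p'}\ket{\chi}$, where $\ket{\chi}=\tfrac{1}{\sqrt{d-1}}\sum_{j=1}^{d-1}\ket{j}$ is a cosdit on the flat subspace. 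Taking Stinespring isometries $V,\widetilde V:A\to B\otimes E$ for $\cN$ and $\widetilde{\cN}$ on a common environment $E$, form the flagged isometry $W=\sqrt{p'}\,V\otimes\ket{0}_F+\sqrt{1-p'}\,\widetilde V\otimes\ket{1}_F$, so that $\cM_0=\tr_{EF}[W(\cdot)W^{\dagger}]$. The idea is to let the resource supply the flag coherently: $\cM'$ is a flag-controlled version of $W$ in which the $\ket{0}$ direction of the flagpole drives the $\cN$-branch and the flat directions drive the $\widetilde{\cN}$-branch, with the off-diagonal coherence $\sqrt{p'(1-p')}\,\ket{0}\bra{\chi}$ of $\varphi_{p'}$ tuned to cancel the $\widetilde{\cN}$ contribution and renormalize the surviving $\cN$-branch to unit weight. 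Here the hypothesis enters as the dimension count $d-1\ge|B|$: the flat subspace must be at least as large as $B$ so that a coherent copy of the $\cN$-output can be routed through it and interfered away, leaving exactly $\cN(\rho)$ in $B$.

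The main obstacle is verifying that this coherently-controlled map (i) is CPTP, (ii) lies in MIO, i.e.\ that its Choi matrix meets the conditions of Proposition~\ref{diamondnorm_sdp}, $\tr((\ket{i}\!\bra{i}_{RA}\otimes\ket{j}\!\bra{k}_B)J_{\cM'})=0$ for all $i$ and $j\neq k$, and (iii) reproduces $\cN$ exactly on the flagpole. The delicate point is (ii)--(iii) together: the very off-diagonal coherence of $\varphi_{p'}$ that reconstructs $\cN$ must not manifest as coherence generation on incoherent inputs. I expect this to hold because the flag/control structure acts incoherently on the computational basis of $R\otimes A$, so that only the specific superposition carried by $\varphi_{p'}$ triggers the constructive interference, and because the matched weight $p'=1/\lambda$ makes the cancellation exact; confirming these two facts for the explicit $J_{\cM'}$ is where the real work lies.
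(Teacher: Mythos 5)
There is a genuine gap: the heart of the theorem is the explicit MIO that works at $p=p'=1/(1+C_R(\cN))$, and your proposal never constructs it. Your preliminary steps are fine --- the decomposition $\cM_0=p'\cN+(1-p')\widetilde{\cN}$ is exactly the way the robustness hypothesis is used in the paper, and your majorization computation showing $\varphi_p\xrightarrow{\mathrm{IO}}\varphi_{p'}$ for $p\leq p'$ is correct (though unnecessary, since the paper's construction works directly for every $p$ below the bound). But the ``coherently-controlled flagged isometry'' with interference that ``cancels the $\widetilde{\cN}$ contribution'' is not substantiated, and as described it is doubtful: a fixed CPTP map acts linearly, so on input $\varphi_{p'}\otimes\rho$ a control-in-the-incoherent-basis structure produces the diagonal mixture $p'\cN(\rho)+(1-p')\widetilde{\cN}(\rho)$ plus cross terms between the two Stinespring branches, and demanding that these cross terms contribute exactly $(1-p')(\cN-\widetilde{\cN})(\rho)$ for \emph{all} $\rho$, while the map simultaneously stays trace preserving and MIO, is precisely the nontrivial content you defer. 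You also misidentify where $d>|B|$ enters (it is not about ``routing a coherent copy of the output through the flat subspace'').

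For comparison, the paper avoids interference altogether: it measures the resource in the orthonormal basis $\{\varphi_p,\ \phi,\ \Pi\}$, where $\ket{\phi}=\sqrt{1-p}\,\ket{0}-\sqrt{p/(d-1)}\sum_{j\geq1}\ket{j}$ and $\Pi=\1-\varphi_p-\phi$, and applies $\cN$, $\cL_1$, $\cL_2$ conditioned on the outcome. Exactness on $\varphi_p$ is then automatic, and MIO-ness reduces to requiring that $\cM(\ket{j}\!\bra{j}\otimes\cdot)$ be MIO for each $j$: for $j=0$ this reads ``$p\,\cN+(1-p)\cL_1$ is MIO,'' feasible if and only if $p\leq1/(1+C_R(\cN))$ (which is where your $\cM_0$ decomposition reappears), while for $j>0$ it reads ``$\frac{1-p}{d-1}\cN+\frac{p}{d-1}\cL_1+\frac{d-2}{d-1}\cL_2$ is MIO,'' which is feasible because the channel $\cN'=(1-p)\cN+p\,\cL_1$ has output space $B$ and hence $1+C_R(\cN')\leq|B|\leq d-1$ --- this is the actual role of the hypothesis $d>|B|$, and it is why the three-block ansatz with the extra free channel $\cL_2$ (absent from your sketch) is needed. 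To repair your proof you would have to replace the interference mechanism by such a projective-control construction, or else exhibit the explicit Choi matrix $J_{\cM'}$ and verify the MIO constraints, which you explicitly leave open.
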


This proves that any pure resource state in dimension larger than 2 is useful for the exact implementation of some coherent unitary channel via MIO. Indeed, any such state can be converted to a flagpole, which in turn can be used for the implementation of a coherent channel, in particular a unitary. Note also that in general the bound on $p$ in Theorem~\ref{theorem2} does not single out all flagpoles that can implement a given channel, since it relies on a specific ansatz (see Appendix~\ref{app:flagpoles} for details). 

\subsubsection*{Example: Qubit unitaries}

Analogously to the previous section, we now address the implementation of 
qubit unitaries when non-maximally coherent resources are available. 

\begin{figure}[t!]
 	\centering
 	\begin{tikzpicture}
 	\node[inner sep=0pt] at (0,0)
	{\includegraphics[width = 0.8\linewidth]{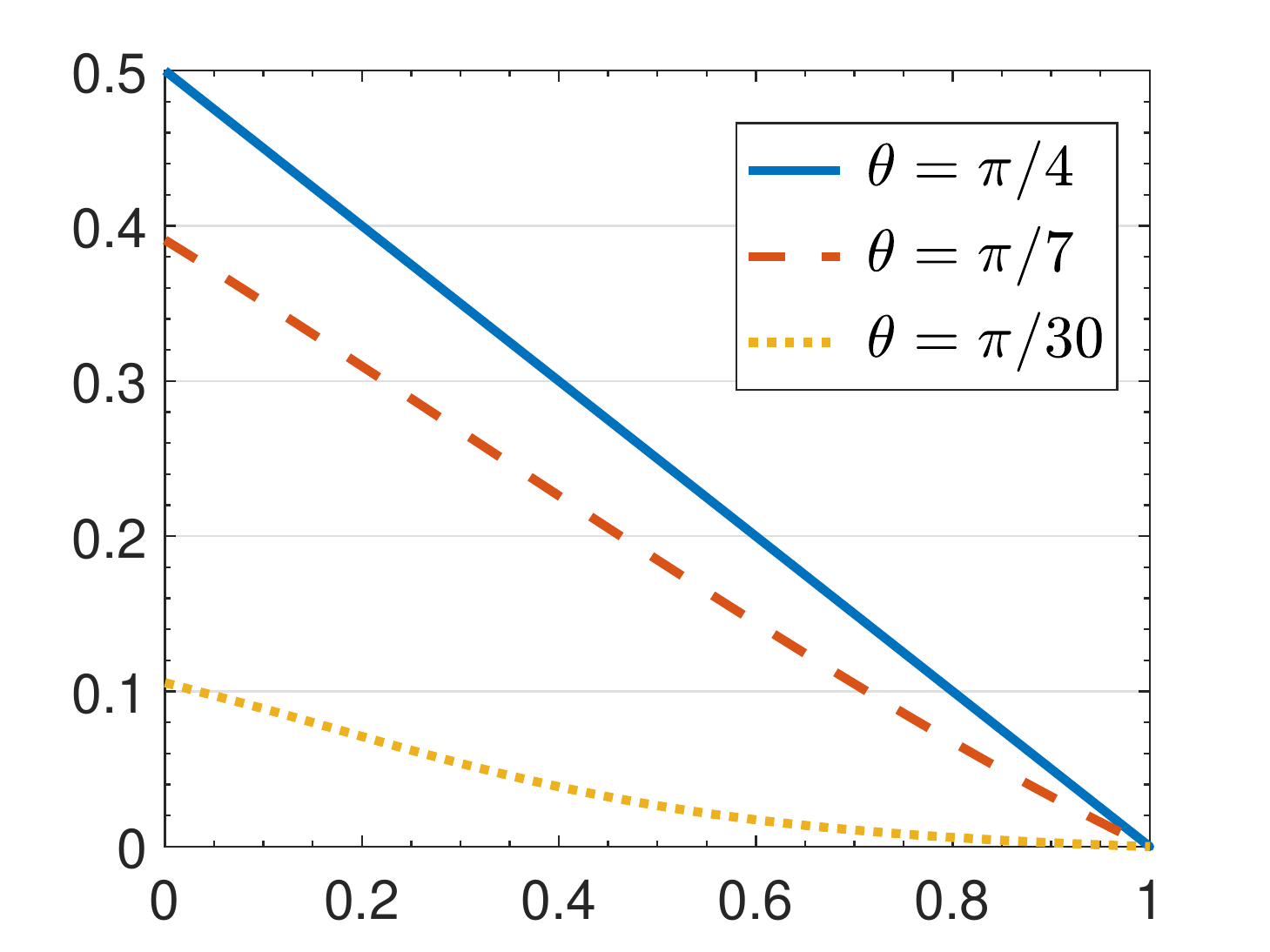}};
	\node[] at (0.2,-2.7) {\small $C_R(\omega)$};
	\node[rotate=90] at (-3.3,0) {\small Diamond norm error};
\end{tikzpicture}
 	\caption{Minimum diamond norm distance between the qubit unitary 
	         $U_\theta$ and a MIO implementation of it 
	         with pure qubit resource state $\ket{\omega}$ vs. robustness of 
	         coherence of the latter. For $\theta=\pi/30$ (dotted orange line), 
	         $\theta=\pi/7$ (dashed red line) and $\theta=\pi/4$ (solid blue line). 
	         Note that an exact implementation is possible only with a 
	         cosbit resource state. 
	         }
 	\label{fig:fidelity_2d_diamond}
\end{figure}

The first question we want to raise is whether it is possible to use a 
non-maximally coherent pure qubit resource in order to implement a qubit 
unitary, even one that generates little coherence. In~\cite{bendana2017} 
it was proven that, if the free operations are IOs, this is only possible 
with a cosbit, no matter how coherent the unitary is. 
This is still the case under MIOs, even for higher-dimensional unitaries, 
when restricting to qubit resource states:

\begin{proposition}\label{proposition_qubit_qubit}
 	The only pure qubit resource state $\ket{\omega}\in \mathbb{C}^2$ that 
	permits the MIO implementation of some unitary gate of arbitrary dimension is
 	the cosbit.
\end{proposition}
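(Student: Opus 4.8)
The plan is to prove the nontrivial implication: if a pure qubit $\ket{\omega}$ implements \emph{some} coherence-generating unitary $U$ (of any dimension) via a MIO, then $\ket{\omega}$ must be maximally coherent. The converse—that the cosbit does implement a coherent unitary—is immediate from Theorem~\ref{theorem1}, since, e.g., the embedding $U_\theta\oplus\1$ of a qubit rotation has simulation cost $1$. Up to an incoherent phase rotation, which is a reversible free operation, I write $\ket{\omega}=\sqrt{q}\,\ket{0}+\sqrt{1-q}\,\ket{1}$ with $q\ge\frac12$, so that $\ket{\omega}$ is the cosbit exactly when $q=\frac12$. I assume $\cM(\omega\otimes\rho)=U\rho U^\dagger$ for all $\rho$, with $U$ not incoherent (an incoherent unitary needs no resource and makes the claim vacuous), and aim to force $q=\frac12$.

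First I would pass to a Stinespring dilation $V:R\otimes A\to B\otimes E$ of $\cM$. Because $\rho\mapsto\cM(\omega\otimes\rho)=U\rho U^\dagger$ is a \emph{unitary} channel, all of its Kraus operators are proportional to $U$, so the isometry acts on the resource-carrying subspace in the decoupled form $V(\ket{\omega}\otimes\ket{\psi})=(U\ket{\psi})\otimes\ket{e}$ for a fixed environment vector $\ket{e}$. Writing $\ket{\omega^\perp}$ for the orthogonal qubit and $W:=V(\ket{\omega^\perp}\otimes\,\cdot\,)$, the isometry condition on $V$ forces the cross-block to vanish, i.e. $(\1_B\otimes\bra{e})W=0$: the range of $W$ is orthogonal to $\ket{e}$ in the environment factor.

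The heart of the argument is to feed \emph{incoherent} inputs and exploit this orthogonality. Expanding $\ket{0}=\sqrt{q}\,\ket{\omega}+\sqrt{1-q}\,\ket{\omega^\perp}$ and $\ket{1}=\sqrt{1-q}\,\ket{\omega}-\sqrt{q}\,\ket{\omega^\perp}$, the environment cross terms cancel upon tracing out $E$ (precisely because $(\1_B\otimes\bra{e})W=0$), giving
\begin{align*}
\cM(\ketbra{0}{0}\otimes\ketbra{i}{i}) &= q\,U\ketbra{i}{i}U^\dagger+(1-q)\,\xi_i,\\
\cM(\ketbra{1}{1}\otimes\ketbra{i}{i}) &= (1-q)\,U\ketbra{i}{i}U^\dagger+q\,\xi_i,
\end{align*}
where $\xi_i:=\ptr{E}{W\ketbra{i}{i}W^\dagger}\ge0$. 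Since $\ketbra{j}{j}\otimes\ketbra{i}{i}$ is incoherent and $\cM$ is a MIO, both left-hand sides must be diagonal in the incoherent basis of $B$. Denoting by $O_i$ the off-diagonal part of $U\ketbra{i}{i}U^\dagger$, the two diagonality conditions read $q\,O_i+(1-q)\,\mathrm{off}(\xi_i)=0$ and $(1-q)\,O_i+q\,\mathrm{off}(\xi_i)=0$; eliminating $\mathrm{off}(\xi_i)$ yields $(1-2q)\,O_i=0$. A coherent $U$ cannot map the whole incoherent basis to itself (up to phases), so $U\ket{i}$ is coherent—hence $O_i\neq0$—for at least one $i$, which forces $q=\frac12$, i.e. $\ket{\omega}$ is the cosbit.

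The step that needs the most care—and the main obstacle—is establishing the decoupled dilation and the orthogonality $(\1_B\otimes\bra{e})W=0$, since this is exactly what makes the environment cross terms disappear and turns the MIO constraint into the rigid $2\times2$ linear system above. Everything else (the $\{\omega,\omega^\perp\}$ expansion, the diagonality bookkeeping, and removing phases via an incoherent symmetry) is routine. It is worth emphasizing that the dimension of $U$ never enters the argument, so the conclusion holds for resource-driven implementations of coherent unitaries of arbitrary size, matching the statement.
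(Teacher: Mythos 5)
Your proof is correct and follows essentially the same route as the paper's: your Stinespring-dilation argument is the isometric repackaging of the paper's Kraus decomposition $K_i=\lambda_i U\otimes\ket{0}\!\bra{\omega}+R^i\otimes\ket{0}\!\bra{\omega^\perp}$, with your orthogonality condition $(\1_B\otimes\bra{e})W=0$ playing the role of the CPTP cross-term constraint and your matrices $\mathrm{off}(\xi_i)$ matching the paper's $\sum_i|\lambda_i|^2R^i_{mj}(R^{i\dagger})_{jn}$. Both arguments then impose the MIO diagonality condition on the incoherent resource inputs $\ket{0},\ket{1}$ and eliminate the free off-diagonal part to get $(1-2q)\,O_i=0$, i.e.\ the paper's $(|\alpha|^2-|\beta|^2)U_{mj}U^\dagger_{jn}=0$, forcing the cosbit.
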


\begin{figure}[t!]
\centering
\includegraphics[width = 0.9\linewidth]{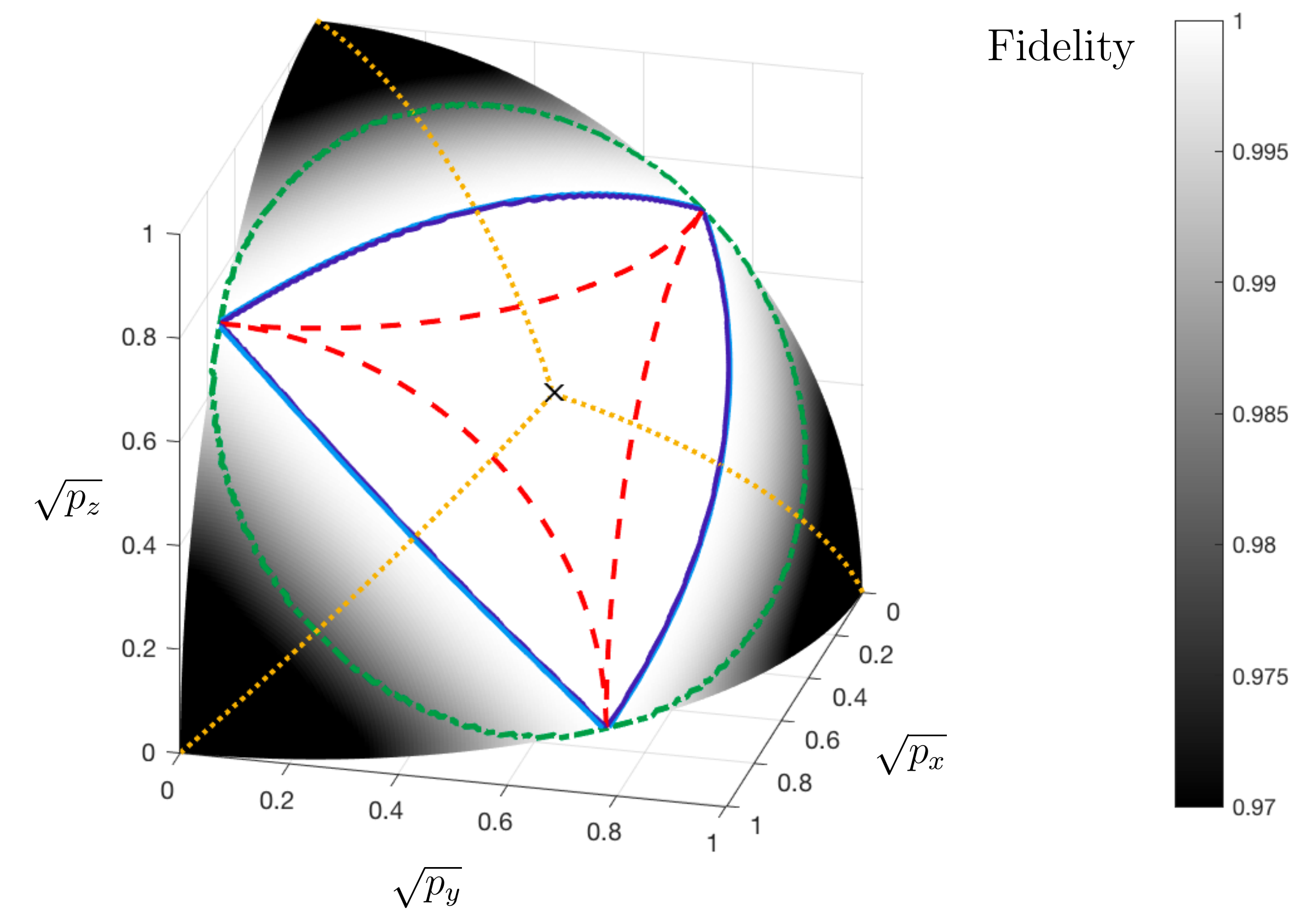}

\caption{Gate fidelity for the MIO implementation of a qubit unitary $U_\theta$ with $\theta=\pi/14$ using a generic qutrit ($d=3$) resource 
\mbox{$\ket{\psi}=\sqrt{p_{x}}\ket{0}+\sqrt{p_{y}}\ket{1}+\sqrt{p_{z}}\ket{2}$}.
The central cross ($\times$) represents the maximally coherent state. The dashed red line around it encloses states that can be transformed with MIO to cosbits, the blue-solid line encloses the states that allow for an exact implementation ($F=1$), while the 
 dashed-dotted green line encloses all states that cannot be obtained via MIO by diluting a cosbit.
 The dotted yellow lines represent the family of flagpole states.}
\label{fig:qutrit}
\end{figure}

As an illustration of this fact, Fig.~\ref{fig:fidelity_2d_diamond} shows that only a cosbit resource allows for an exact implementation of the qubit unitary $U_\theta$, for several values of $\theta$. Input coherence can be saved only at the expense of allowing for an approximate implementation of the gate. 
Moreover, in Fig.~\ref{fig:qutrit} we give a full characterization of the optimal performance of a MIO simulation, as measured by the gate fidelity, for general qutrit resource states.
Without loss of generality we can focus on the upper region of the plot, $p_{x}, p_{y}\leq p_{z}=\lambda_{1}(\psi)$. Qubit resources are found in the planes defined by $p_{x}=p_y=0$, where perfect simulation ($F=1$) is only reached for cosbits,  $p_{z}=1/2$. The red dashed line delimits the qutrit states that can be distilled into a cosbit, $p_{z}\leq 1/2$, and hence also attain $F=1$.  However, 
perfect simulation can also be attained with other qutrit states: those that fall below the solid blue line in Fig.~\ref{fig:qutrit}. In particular, the qutrit state with the highest value of $p_{z}$, i.e. the least coherent qutrit as measured by the fidelity of coherence, that allows for perfect simulation is a flagpole. Indeed, this agrees with the predictions of Theorem \ref{theorem2}:  
\begin{proposition}\label{proposition_qubit_qutrit}
	A $d$-dimensional flagpole resource state $\ket{\varphi_p}$ allows for the MIO implementation of a qubit unitary $U_\theta$ if
	\begin{equation}\label{pqutrit}
	p\leq \dfrac{1}{1+\sin 2\theta},
	\end{equation}
	where $d\geq 3$ and $0<\theta \leq \frac{\pi}{4}$. Furthermore, for a qutrit flagpole state  $\ket{\varphi_p}\in \mathbb{C}^3$ this is also the highest allowed value of $p$.
\end{proposition}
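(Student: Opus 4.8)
The plan is to split the statement into its two halves and dispatch each with a result already in hand. For the achievability half I would simply specialize Theorem~\ref{theorem2} to the channel $\cN=U_\theta$. From Eq.~\eqref{qubitAmoco} we have $C_{LR}(U_\theta)=\log(1+\sin 2\theta)$, hence $1+C_R(U_\theta)=1+\sin 2\theta$. Since the output system $B$ is a single qubit, the dimensional hypothesis $d>|B|=2$ of Theorem~\ref{theorem2} reads $d\geq 3$, and the condition~\eqref{p_achievability} becomes exactly $p\leq\frac{1}{1+\sin 2\theta}$. Thus Theorem~\ref{theorem2} directly supplies a MIO $\cM$ with $\cM(\varphi_p\otimes\rho)=U_\theta\rho U_\theta^\dagger$ whenever $p\leq\frac{1}{1+\sin 2\theta}$, which proves the first part for every $d\geq 3$.

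For the tightness half (the qutrit case) I would produce a matching \emph{necessary} condition using Lemma~\ref{lemma1}. Suppose a MIO $\cM$ realizes $\cM(\varphi_p\otimes\rho)=U_\theta\rho U_\theta^\dagger$ for all qubit states $\rho$. Feeding a pure input $\ket{\psi}=\cos\alpha\ket{0}+\sin\alpha\ket{1}$ turns $\cM$ into a pure-to-pure MIO transformation $\ket{\varphi_p}\otimes\ket{\psi}\xrightarrow{\mathrm{MIO}}U_\theta\ket{\psi}$, so Lemma~\ref{lemma1} gives $\lambda_1(\varphi_p\otimes\psi)\leq\lambda_1(U_\theta\psi)$. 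Because a flagpole satisfies $p\geq 1/d\geq\frac{1-p}{d-1}$, its largest diagonal entry is $p$, whence $\lambda_1(\varphi_p\otimes\psi)=p\,\max(\cos^2\alpha,\sin^2\alpha)$; and from $U_\theta\ket{\psi}=\cos(\alpha+\theta)\ket{0}+\sin(\alpha+\theta)\ket{1}$ we get $\lambda_1(U_\theta\psi)=\max(\cos^2(\alpha+\theta),\sin^2(\alpha+\theta))$.

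The decisive step is the choice of input: taking $\alpha=\frac{\pi}{4}-\theta$ rotates $\ket{\psi}$ precisely onto the maximally coherent qubit, so that $\lambda_1(U_\theta\psi)=\tfrac12$, while $\max(\cos^2\alpha,\sin^2\alpha)=\cos^2\!\big(\tfrac{\pi}{4}-\theta\big)=\frac{1+\sin 2\theta}{2}$ for $0<\theta\leq\frac{\pi}{4}$. Lemma~\ref{lemma1} then reads $p\cdot\frac{1+\sin 2\theta}{2}\leq\frac12$, i.e.\ $p\leq\frac{1}{1+\sin 2\theta}$; combined with the achievability bound this pins $\frac{1}{1+\sin 2\theta}$ as the exact threshold for the qutrit flagpole. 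The main obstacle is really this single guess---recognizing that one should probe the implementation with the input state that $U_\theta$ sends to the maximally coherent output, since this minimizes the output $\lambda_1$ and thereby extracts the sharpest constraint from Lemma~\ref{lemma1}; everything after is direct substitution. I would also remark that this converse computation is actually $d$-independent, so the bound is necessary for every $d\geq 2$; the proposition states the tightness only for the qutrit case it illustrates, which is consistent.
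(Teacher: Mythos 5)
Your proof is correct. The achievability half coincides with the paper's route: the paper also obtains it by specializing Theorem~\ref{theorem2} to $\cN=U_\theta$, using $C_R(U_\theta)=\sin 2\theta$ and $|B|=2$ so that $d>|B|$ becomes $d\geq 3$. Your converse, however, is genuinely different. The paper proves tightness for qutrit flagpoles by brute force: it writes the most general Kraus decomposition $K_i=\lambda_i U_\theta\otimes\ket{0}\bra{\phi_0}+R^i\otimes\ket{0}\bra{\phi_1}+P^i\otimes\ket{0}\bra{\phi_2}$ of a MIO implementing $U_\theta$ exactly, imposes the CPTP and MIO constraints, solves for the off-diagonal quantities $A_j,B_j,C_j$, and extracts $p\leq(1+\sin 2\theta)^{-1}$ from the positivity requirement $|B_j|\leq\frac12$. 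You instead probe the implementation with the single input $\ket{\psi}=\cos(\tfrac{\pi}{4}-\theta)\ket{0}+\sin(\tfrac{\pi}{4}-\theta)\ket{1}$, which $U_\theta$ maps to the cosbit, and apply Lemma~\ref{lemma1} to the pure-to-pure MIO transformation $\ket{\varphi_p}\otimes\ket{\psi}\mapsto U_\theta\ket{\psi}$; since $\lambda_1$ is multiplicative on tensor products and $\lambda_1(\varphi_p)=p$ (valid because $p\geq 1/d$), this gives $p\,\tfrac{1+\sin 2\theta}{2}\leq\tfrac12$ in one line, and your identification of the optimal probe (the preimage of the maximally coherent output, which minimizes the right-hand side of the Lemma~\ref{lemma1} inequality) is exactly right. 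Your route is not only more elementary but strictly stronger: as you note, it is dimension-independent, so together with Theorem~\ref{theorem2} it pins $\frac{1}{1+\sin 2\theta}$ as the exact threshold for \emph{all} $d\geq 3$, thereby settling in the negative the question---left open at the end of the paper's qubit-unitaries appendix---of whether less coherent flagpoles in dimension $d>3$ could implement qubit unitaries; the paper's Kraus computation, when extended beyond $d=3$, only yields the weaker bound $p\leq\bigl(1+\frac{C_R(U)}{d-2}\bigr)^{-1}$ (and there for $(d-1)$-dimensional targets). What the paper's heavier computation buys in exchange is structural information about the implementing Kraus operators, which your monotone argument does not provide. One minor caveat: your closing remark that the necessity extends to $d=2$ is true but not tight there, since by Proposition~\ref{proposition_qubit_qubit} only the cosbit $(p=\tfrac12)$ works---consistent with your phrasing, since you claim only necessity, not sufficiency, in that case.
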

In Fig.~\ref{fig:qutrit} all flagpole states are identified by a yellow dotted line that interpolates between the incoherent state $\ket{0}$ and the cosbit.
Finally, the set of states that cannot be obtained  via MIO from a cosbit, 
 are enclosed by the dashed-dotted green line which is determined by the intersection of the sphere of qutrit states (in the positive octant) and the plane $\braket{\Psi_{3}}{\psi}=1$, as 
shown in \eqq{planeconv}. 
Extensive numerical evidence shows that the blue solid line that delimites the region of states that enable an exact MIO implementation is also given by the intersection of the qutrit sphere with a plane of the form $\braket{\Phi_{\theta}}{\psi}\leq f(\theta)$, where the constant $f(\theta)$ and normal vector $\ket{\Phi_{\theta}}$ can be analytically found by imposing that the plane includes the cosbit and the flagpole saturating \eqq{pqutrit}.

\begin{figure}[t!]
\centering
\begin{tikzpicture}
    \node[inner sep=0pt] at (0,0)
    {\includegraphics[width = 0.9\linewidth]{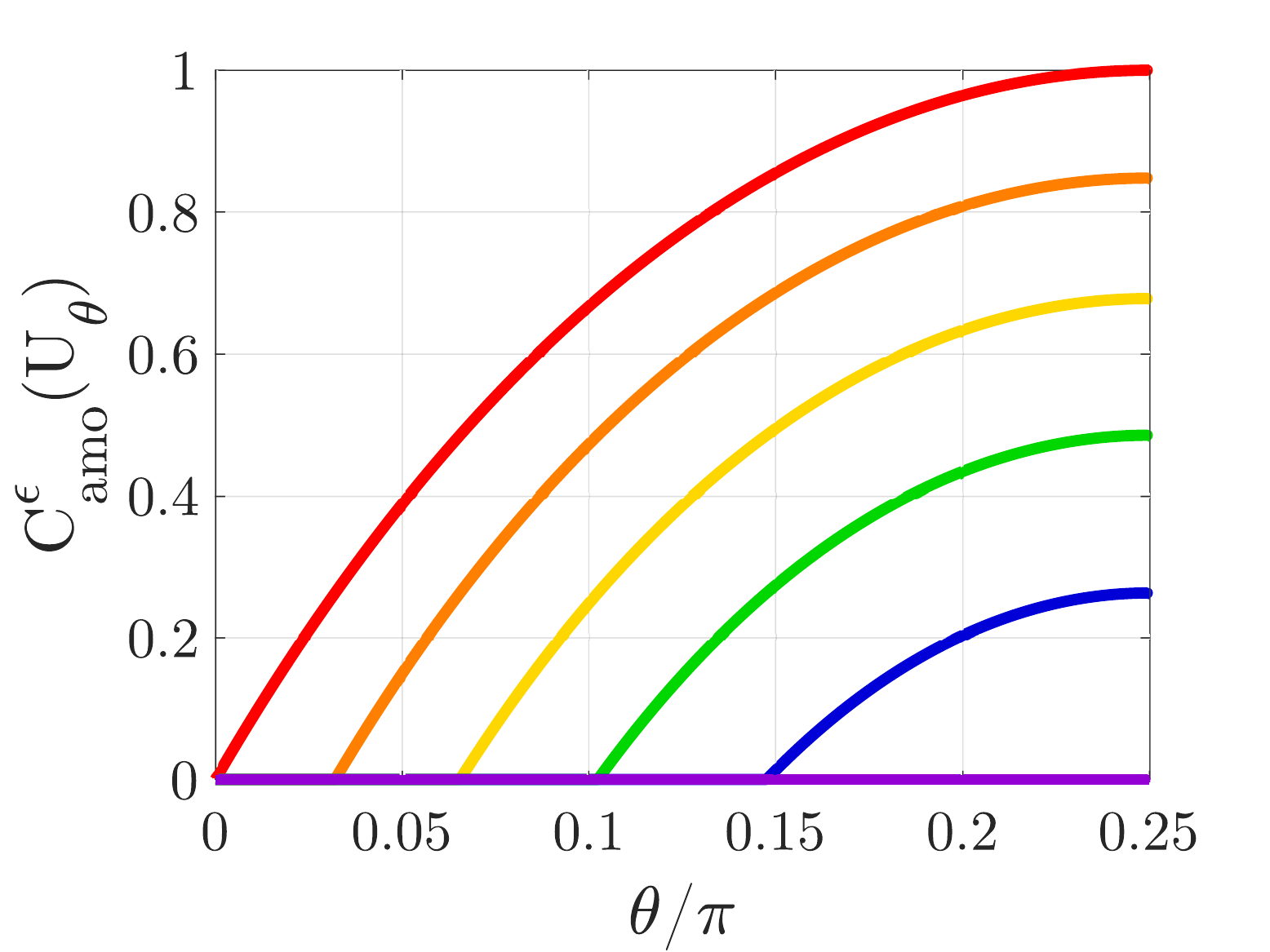}};
\end{tikzpicture}
\caption{Plot of the $\epsilon$-error amortized cost of the qubit unitary $U_\theta$ as vs. $\theta/\pi$, for several error thresholds $\epsilon \in \{0, 0.1, 0.2, 0.3, 0.4, 0.5\}$ (from top to bottom). The amortized cost is higher for more coherent unitaries but it decreases if larger errors are allowed. In particular, when the amortized cost becomes zero it means that the given unitary can be simulated up to the given error with a MIO, which is not necessarily the identity channel, without using any coherence of the input resource state.}
\label{fig:amoCost}
\end{figure}

Regarding the case of coherence recycling after the approximate implementation of a qubit unitary, Fig.~\ref{fig:amoCost} shows the amortized cost of qubit unitaries for several error thresholds. As expected, the amortized cost increases with the coherence of the target unitary, since less coherence can be recycled when implementing more coherent unitaries. Moreover, if we allow a larger error threshold to implement the same unitary, it is possible to obtain more coherence back for the same input resource at the cost of a worse approximation. When the amortized cost becomes zero, Theorem~\ref{amortizedc} implies that there exists a MIO that implements the target unitary without consuming any coherence. This happens if and only if the simulation cost of the unitary for the same error threshold is zero.

\subsection{Comparison with coherence distillation \protect\\ and the reversibility problem}
\label{distillation}
In this work, we have focused on the cost of implementing
a channel, but in \cite{bendana2017} this was already 
considered under IOs in conjunction with the capacity to create pure coherence
from the given resource channel. This is an important point, because
it touches upon the issue of (asymptotic) reversibility of the 
resource theory. Namely, since in the theory of coherence there exists a unit resource, the cosbit, it is possible to compare the minimum 
rate of cosbits needed to implement the channel, and the
maximum rate of cosbits extractable from the given channel: 
the theory is asymptotically reversible if and only if these
two numbers coincide for all channels.
For the latter problem, we find in Ref.~\cite{bendana2017} a 
developed theory, even if it was worked out with IOs as free
operations. However, the changes required to the definitions,
protocols and bounds in \cite[Sec.~II]{bendana2017} (especially
to its Thm.~1) are minimal and readily result in the following
theorem. Before stating it, let us first recall the generic protocol for coherence generation with many uses of a channel $\cN:A\rightarrow B$: i) an initial incoherent state $\rho_0\in AC$ undergoes the action of the channel, then is post-processed by a MIO $M_1:BC\rightarrow AC_1$, obtaining the state $\rho_1=M_1\circ(N\otimes\id)(\rho_0)$; ii) the same process is iterated $n-1$ times, producing the intermediate states $\rho_j=M_j\circ(N\otimes\id)(\rho_{j-1})$, $j=1,\cdots,n$; iii) the iterations stop when the final state $\rho_n$ is $\epsilon$-close to the cosdit $\Psi_{2}^{\otimes nR}$, i.e., $\tr(\rho_n\Psi_{2}^{\otimes nR})\geq1-\epsilon$,
with $R$ being the resulting coherence generating rate. Then the coherence-generating capacity of $\cN$ is defined as follows:
\begin{equation}
C_{gen, MIO}^{(\infty)}(\cN):=\sup_{\epsilon>0}\limsup_{n\rightarrow\infty}R.
\end{equation}
\begin{theorem}\label{asyCohGen}
  The asymptotic coherence-generating capacity of a channel
  $\cN:A\rightarrow B$ is given by
  \begin{align}
    C_{\text{gen},\text{MIO}}^{(\infty)}(\cN) &= \sup_{\rho_{AC}} C_r((\cN\otimes\id_C)(\rho_{AC}))\nonumber \\
    &-C_r(\rho_{AC})\\
                         & = \sup_C \widehat{P}_r(\cN\otimes \id_C)
                          = \mathbb{P}_r(\cN),
\end{align}
  where the suprema are over all auxiliary systems $C$ and
  the first additionally over states $\rho_{AC}$ on $A\otimes C$.
  The second formula features the \emph{relative entropy cohering power} \cite{garcia2016,bendana2017}
  $\widehat{P}_r(\cN):=\max_\rho C_r(\cN(\rho))-C_r(\rho)$ ; $\mathbb{P}_r(\cN):=\sup_k \widehat{P}_r(\cN\otimes \id_k)$ was
  introduced in \cite{bendana2017} as \emph{complete relative
  entropy cohering power}.
  \hfill$\blacksquare$
\end{theorem}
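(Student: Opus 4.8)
The plan is to reduce the statement to the single equality $C_{\text{gen},\text{MIO}}^{(\infty)}(\cN)=\mathbb{P}_r(\cN)$, since the three right-hand sides already coincide by definition: $\widehat{P}_r(\cN\otimes\id_C)=\max_{\rho_{AC}}\big(C_r((\cN\otimes\id_C)(\rho_{AC}))-C_r(\rho_{AC})\big)$, so the first line is $\sup_C\widehat{P}_r(\cN\otimes\id_C)$, and $\mathbb{P}_r(\cN)=\sup_k\widehat{P}_r(\cN\otimes\id_k)=\sup_C\widehat{P}_r(\cN\otimes\id_C)$ because $\id_C=\id_{|C|}$. I would then prove the two inequalities separately, following \cite{bendana2017} but replacing IO by MIO throughout, the essential new input being that under MIO the state theory of coherence is asymptotically \emph{reversible}, with both the distillable coherence and the coherence cost equal to the additive relative entropy of coherence $C_r$ \cite{winter2016}.

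For the converse $C_{\text{gen},\text{MIO}}^{(\infty)}(\cN)\leq\mathbb{P}_r(\cN)$ I would track $C_r$ along the protocol. The initial state is incoherent, so $C_r(\rho_0)=0$. At each round $\rho_j=M_j\circ(\cN\otimes\id)(\rho_{j-1})$ with $M_j$ a MIO, monotonicity gives $C_r(\rho_j)\leq C_r((\cN\otimes\id)(\rho_{j-1}))$, while the definition of the cohering power gives $C_r((\cN\otimes\id)(\rho_{j-1}))-C_r(\rho_{j-1})\leq\widehat{P}_r(\cN\otimes\id_{C_{j-1}})\leq\mathbb{P}_r(\cN)$. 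Here it is essential that the bound hold \emph{uniformly} in the ancilla $C_{j-1}$, whose dimension grows along the protocol; this is exactly why the \emph{complete} cohering power $\mathbb{P}_r$, and not $\widehat{P}_r(\cN)$ alone, is forced upon us. Telescoping over $n$ rounds yields $C_r(\rho_n)\leq n\,\mathbb{P}_r(\cN)$. Since $\rho_n$ is $\epsilon$-close to $\Psi_2^{\otimes nR}$, whose coherence is $C_r(\Psi_2^{\otimes nR})=nR$, asymptotic continuity of $C_r$ gives $nR\leq C_r(\rho_n)+\delta\,nR+h(\delta)$ with $\delta\to 0$ as $\epsilon\to0$; letting $n\to\infty$ and then $\epsilon\to0$ gives $R\leq\mathbb{P}_r(\cN)$.

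For achievability I would exhibit a protocol realizing the rate $\widehat{P}_r(\cN\otimes\id_C)$ for any fixed ancilla $C$, and then optimize over $C$. Fix a state $\rho_{AC}$ attaining (nearly) $\widehat{P}_r(\cN\otimes\id_C)$ and work in blocks: prepare $\rho_{AC}^{\otimes m}$ from about $m\,C_r(\rho_{AC})$ cosbits by MIO coherence dilution, apply $(\cN\otimes\id_C)^{\otimes m}$, and distill about $m\,C_r((\cN\otimes\id_C)(\rho_{AC}))$ cosbits from the output by MIO coherence distillation. Because both rates equal $C_r$ under MIO, the coherence invested to prepare each input is fully recovered and the net yield per use tends to $C_r((\cN\otimes\id_C)(\rho_{AC}))-C_r(\rho_{AC})\to\widehat{P}_r(\cN\otimes\id_C)$; recycling across many blocks (so that the one-time $O(m)$ seed cost contributes vanishingly to the rate) realizes this net production within the adaptive protocol of the definition. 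Taking the supremum over $C$ gives $C_{\text{gen},\text{MIO}}^{(\infty)}(\cN)\geq\mathbb{P}_r(\cN)$, and together with the converse the theorem follows.

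The main obstacle is the achievability, and specifically its reliance on reversibility: the per-use gain collapses to the relative-entropy cohering power only because the coherence spent to re-prepare the channel input is returned in full. This is precisely the feature that the MIO setting supplies, in contrast to preparing inputs at the (generally larger) coherence-of-formation cost $C_f\geq C_r$, and it is the substantive ingredient behind the claim that the adaptation of \cite{bendana2017} is routine. A secondary technical point is the converse's continuity step, where one must ensure the final register has controlled dimension $2^{nR}$ so that the dimension factor in the asymptotic-continuity estimate contributes only the desired $\delta\,nR$ term; restricting (or tracing) onto the target register before invoking continuity handles this.
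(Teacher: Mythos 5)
Your proposal matches the paper's proof essentially step for step: the converse is the same telescoping of $C_r$ along the adaptive protocol using MIO monotonicity, an incoherent start, and asymptotic continuity, and the achievability is the same dilute--apply--distill loop whose per-use yield collapses to $\widehat{P}_r(\cN\otimes\id_C)$ precisely because under MIO both the coherence cost and the distillation rate of states equal $C_r$ (note the dilution half is due to \cite{zhao2018}, not \cite{winter2016}, which gives cost $C_f$ under IO). The one detail the paper spells out that you leave implicit is where the seed cosbits come from in a protocol that starts incoherent: it first applies $\cN$ to an incoherent state and distills from copies of $\sigma=\cN(\ket{0}\!\bra{0})$ (nontrivial whenever $\cN$ is not itself MIO, in which case the theorem is trivial), and these $m$ extra channel uses are exactly the amortized ``one-time seed cost'' in your accounting, vanishing in the rate as the loop iterates.
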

The coherence-generating capacity, despite its nice information theoretic
formula, is by no means an easy quantity to compute. Notwithstanding
the explicit expression for the relative entropy of coherence, the maximization over
the state $\rho_{AC}$ is not necessarily well-behaved. Also, we do not know
how large an auxiliary system $C$ is required, or indeed if any at all.

On the other hand, the asymptotic cost of implementing the channel
can be expressed using our above results, in particular
Theorem~\ref{theorem1}. Namely, it comes down to 
\begin{equation}
  \label{channel-AEP}
  C_{\text{sim},\text{MIO}}^{(\infty)}(\cN) 
     = \sup_{\epsilon > 0} \limsup_{n\rightarrow\infty}
               \frac1n \log (1+C_R^\epsilon(\cN^{\otimes n})).
\end{equation}

There is one special case in which we know what this limit is,
namely when for all $\rho$, $\cN(\rho) = \sigma$, i.e.~$\cN$ is a 
constant channel. Then, $C_R(\cN) = C_R(\sigma)$, and the right
hand side of Eq.~(\ref{channel-AEP}) converges~\cite{brandao2015a} to the relative
entropy of coherence, $C_r(\sigma)$. More generally, for a
cq-channel $\cN(\ket{i}\!\bra{j}) = \delta_{ij}\sigma_i$, written
in the incoherent basis of the input state, the same reasoning
yields
\begin{align}
  C_{\text{sim},\text{MIO}}^{(\infty)}(\cN) &= C_{\text{gen},\text{MIO}}^{(\infty)}(\cN) \nonumber \\
  &= \max_i C_r(\cN(\ket{i}\!\bra{i})) \nonumber,
  \end{align}
i.e., the theory is asymptotically reversible when restricted to cq-channels. It is not known under which conditions this happens for general channel resource theories, since the results of~\cite{brandao2015a} cannot be directly applied to them.

\section{Discussion}
In this work we studied the conversion of static coherence into dynamic coherence under the most general framework, which employs MIOs as the free operations. 
To that end, we have introduced the robustness of coherence of channels, which includes as a special case the corresponding measure for states, showing its operational meaning in two respects: i) the $\log$-robustness equals the cohering power of the channel; ii) when maximally coherent states are employed as resources, the robustness determines the implementation cost of the channel and the $\log$-robustness characterizes the amortized implementation cost with recycling of coherence at the output, as well as the asymptotic cost of realizing exactly many independent instances of the channel. 

We have then considered the more general case of implementation with arbitrary resource states and provided a semidefinite program to find the best approximation to any given channel with any resource. In the same setting, we have analytically demonstrated that any state in dimension larger than 2, however weakly so, is useful for the exact implementation of some coherent unitary channel, owing to the fact that every such state can be converted into a flagpole state and that, in turn, every flagpole is able to realize some channel. 

Throughout the article, we have considered the specific case of qubit unitary implementation and studied it in detail, showing a direct application of our general results. Finally, we have related our findings to the problem of asymptotic reversibility, by introducing the asymptotic coherence generating capacity and the asymptotic cost of implementation of a given channel under MIOs. Since both quantities are not easily computable, it remains to be known whether they coincide, i.e. whether the theory is asymptotically reversible. 

Our findings significantly advance the state of the art on the theory of coherence of channels, providing an operational measure for its quantification and a good starting point for future research on the subject. In particular, some open questions immediately follow from our results: i) whether a closed expression for the asymptotic implementation cost of channels can be obtained from the smoothed robustness of coherence; ii) whether the coherence left after the approximate implementation of a channel with an arbitrary input resource can be formulated as an efficient optimization problem; iii) whether the set of pure resource states that allow for the exact MIO implementation of an arbitrary unitary is always determined by a linear function of their coefficients in the incoherent basis.

Finally, our results may also spur investigation on other lines of research such as, e.g., exploring  similar approaches in the resource theories of non-classicality and athermality, analysing the connection of our findings with those in entanglement theory \cite{berta2013,eisert2000}, studying the applicability of coherence theory in an experimental setting and extending further the general framework of channel resource theories with the help of quantum combs \cite{chiribella2008}.

\section*{Acknowledgements}
The authors thank Bartosz Regula, Zi-Wen Liu, Nilanjana Datta and Andrea R. Blanco for interesting discussions on various aspects of the present work.
The authors acknowledge support from Spanish MINECO, project FIS2016-80681-P with the support of AEI/FEDER funds; the Generalitat de Catalunya, project CIRIT 2017-SGR-1127. MGD is supported by a doctoral studies fellowship of the Fundaci\'{o}n ``la Caixa''. MS is supported by the Spanish MINECO, project IJCI-2015-24643.

\bibliographystyle{plainnat}
\bibliography{MIO}



\onecolumn\newpage
\appendix

\section{Appendix}
\subsection{Resource theory of coherence: main definitions}
\label{app:RoC}
In this appendix we provide a brief review of the resource theory of coherence focusing mainly on those elements of the theory 
that are relevant to our work.  For a more exhaustive review of the resource theory of coherence we direct the reader 
to~\cite{streltsov2017,hu2017}.

The set of \emph{free states}, $\Delta$, in the resource theory of coherence, also referred to as \emph{incoherent} states, form a 
convex subset of the set of all density matrices, $\cal{S(H)}$, and are defined as those density matrices that are diagonal with 
respect to a chosen orthonormal basis $\{\ket{j}:j=0,\ldots,d-1\}$ of the Hilbert space $\cal H$, 
$\Delta=\left\{\delta: \delta=\sum_{j=0}^{d-1}\delta_j\ket{j}\!\bra{j}\right\}$.  Free operations are those that map the set of free states 
onto itself; the largest class of free operations conceivable in a resource theory of coherence, and the primary focus of our work, 
are the so-called \emph{maximally incoherent operations (MIOs)}~\cite{aberg2006}, and they are simply defined as all CPTP 
operations, $\cM$, that map $\Delta$ onto itself. In particular, the Choi-Jamio\l kowski matrix, $J_{\cM}$, of a MIO operation 
$\cM:A\rightarrow B$ is characterized by the following conditions, in addition to the standard ones:
\begin{equation}
  \tr((\ket{i}\!\bra{i}\otimes \ket{j}\!\bra{j'})J_{\cM})=0, \ \forall i \ \forall \ j\neq j',
\label{app:MIO_condition}
\end{equation}
which is equivalent to requiring that $\cM$ does not generate coherence from incoherent states.  A well studied subset of MIO are 
the \emph{incoherent operations} (IOs), which are defined as those CPTP operations admitting a Kraus decomposition
$\cM(\rho)=\sum_{\alpha} K_\alpha\,\rho K_\alpha^\dag$, where each Kraus operator preserves incoherence, i.e., 
$K_\alpha \Delta K_\alpha^\dag \subset\Delta$. 

Any state that possesses coherence with respect to the chosen orthonormal basis qualifies as a resource.  The quantification of 
the resourcefulness of any given quantum state is accomplished by a suitable \emph{coherence measure}, 
$C:{\cal S(H)}\to\mathbb{R}_{\geq 0}$.  A good coherence measure should satisfy the following two conditions:
(i) $C(\rho)=0$ for all $\rho\in\Delta$, and (ii) $C(\rho)\geq C(\cT(\rho))$ for all incoherent CPTP maps $\cT$.  Another convenient---
but not necessary---condition is convexity, $\sum_i p_i C(\rho_i)\geq C(\sum_i p_i \rho_i)$.  The following are examples of 
coherence measures fulfilling all the previous conditions under MIO.

\begin{enumerate}
\item \emph{Robustness of coherence~\cite{napoli2016}}: quantifies the minimum amount of another state $\rho'$ required such 
that its convex sum with $\rho$ is an incoherent state. It can be cast as a semidefinite program in primal standard form
\begin{equation}
 1+C_R(\rho)=\min \{\lambda: \ \rho \leq \lambda \sigma,\,\sigma\in\Delta\}. 
\label{app:robStPrim}
\end{equation}
Its equivalent dual form is given by
\begin{equation}
1+C_R(\rho) = \max\{ \tr \rho S: S\geq 0,\, S_{jj}=1 \,\forall j\},
\label{app:robStDual}
\end{equation}
which holds because strong duality is fulfilled. The robustness is multiplicative under tensor product of states~\cite{zhu2017}:
\[
	\bfrob{\rho_1\otimes\rho_2}=(\bfrob{\rho_1})(\bfrob{\rho_2}).
\]
For the sake of completeness, we provide a self-contained proof of this fact below, see Lemma~\ref{lemma:multipRobSt}.  It is convenient to define a logged version of the 
robustness of coherence as:
\begin{equation}
C_{LR}=\log(1+C_R(\rho)),
\label{app:logged_roc}
\end{equation}
which now becomes an additive quantity under tensor product.

\item \emph{Relative entropy of coherence~\cite{baumgratz2014}}: 
\begin{equation}
C_r(\rho)=S(\cal{D}(\rho))-S(\rho),
\label{app:rec}  
\end{equation}
where $S(\rho)=-\tr(\rho \log \rho)$ is the von Neumann entropy, and $\cal{D}:\cal{S(H)}\to\cal{S(H)}$ is the map that erases all the off-diagonal elements of $\rho\in\mathcal{S(H)}$.
	
\item The \emph{$\ell_1$-norm of coherence~\cite{baumgratz2014}},
\begin{equation} 
C_{\ell_1}(\rho)=\sum_{i\neq j}|\rho_{ij}|,
\label{app:l1nc}
\end{equation}
is a coherence measure under IOs, but not MIOs~\cite{bu2017b}. It is popular because it is simple to compute. More importantly, 
the $\ell_1$-norm tightly bounds the robustness of coherence~\cite{napoli2016}
\begin{equation}
\frac{C_{\ell_1}(\rho)}{d-1} \leq C_R(\rho) \leq C_{\ell_1}(\rho),
\label{app:l1nc_bound_roc}
\end{equation}
where the upper bound becomes an equality for qubits and any pure state. We also note that $\log(1+C_{\ell_1}(\rho))$ is additive
under tensor product~\cite{zhu2017}.
	
\item \emph{Coherence rank and coherence number~\cite{killoran16}}:  For pure states the coherence rank is defined as
\begin{equation}
C_{\mathrm{rank}}(\psi):= \min \left\{r : \ket{\psi}=\sum_{i=1}^{r}c_{i}\ket{c_i} \right\},
\label{app:cohrank}
\end{equation}
where $\ket{c_i}$ is an element of the incoherent basis. That is, the coherence rank of $\ket{\psi}$ is the number of non-zero terms 
that appear when writing  $\ket{\psi}$ in the incoherent basis. This coherence measure can be extended by a convex roof 
construction, to the \emph{coherence number}: 
\begin{equation}
 C_{\mathrm{rank}}(\rho):=\hspace{-2mm}\min_{\rho = \sum_\alpha p_\alpha \psi_\alpha} \max_{\alpha} \,C_{\mathrm{rank}}
 (\psi_\alpha).
 \label{app:cohnumber}
 \end{equation}
We stress that, like $C_{\ell_1}$, $C_{\mathrm{rank}}$ is a monotone only under IOs, not under MIOs.
\end{enumerate}

\begin{lemma}[Zhu \emph{et al.} \cite{zhu2017}]
\label{lemma:multipRobSt}
For any tensor-product state on a bipartite system, it holds 
\begin{equation}
\bfrob{\rho_1\otimes\rho_2}=(\bfrob{\rho_1})(\bfrob{\rho_2}).
\end{equation} 
\end{lemma}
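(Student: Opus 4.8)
The plan is to establish the two inequalities $\bfrob{\rho_1\otimes\rho_2}\le(\bfrob{\rho_1})(\bfrob{\rho_2})$ and $\bfrob{\rho_1\otimes\rho_2}\ge(\bfrob{\rho_1})(\bfrob{\rho_2})$ separately, each by tensoring optimal solutions of one of the two SDP formulations recalled above: the primal form \eqref{app:robStPrim} delivers the upper bound (submultiplicativity), while the dual form \eqref{app:robStDual} delivers the lower bound (supermultiplicativity).

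For submultiplicativity I would take optimal primal solutions, i.e.\ incoherent states $\sigma_1,\sigma_2\in\Delta$ with $\rho_i\le\lambda_i\sigma_i$ and $\lambda_i=\bfrob{\rho_i}$. The key claim is $\rho_1\otimes\rho_2\le\lambda_1\lambda_2\,\sigma_1\otimes\sigma_2$, which rests on the elementary fact that for positive semidefinite operators $0\le A\le B$ and $0\le C\le D$ one has $A\otimes C\le B\otimes D$; this follows from the identity $B\otimes D-A\otimes C=(B-A)\otimes D+A\otimes(D-C)\ge0$, both summands being tensor products of positive operators. Since $\sigma_1\otimes\sigma_2$ is diagonal in the product incoherent basis it lies in $\Delta$, so it is a feasible primal point for $\rho_1\otimes\rho_2$ attaining the value $\lambda_1\lambda_2$; minimality of the primal then gives $\bfrob{\rho_1\otimes\rho_2}\le\lambda_1\lambda_2$.

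For supermultiplicativity I would instead take optimal dual solutions $S_1,S_2\ge0$ with $(S_i)_{jj}=1$ for all $j$, so that $\tr(\rho_i S_i)=\bfrob{\rho_i}$ by strong duality. Then $S_1\otimes S_2\ge0$, and in the product basis indexed by $(j,k)$ its diagonal entries are $(S_1)_{jj}(S_2)_{kk}=1$, so $S_1\otimes S_2$ is feasible for the dual of $\rho_1\otimes\rho_2$. Evaluating the dual objective yields $\tr\bigl((\rho_1\otimes\rho_2)(S_1\otimes S_2)\bigr)=\tr(\rho_1 S_1)\tr(\rho_2 S_2)=(\bfrob{\rho_1})(\bfrob{\rho_2})$, and since the dual is a maximization whose value equals $\bfrob{\rho_1\otimes\rho_2}$ by strong duality, this gives the matching lower bound. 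Combining the two inequalities closes the proof.

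I do not expect a genuine obstacle here: the only points requiring care are verifying the positive-semidefinite tensor-ordering lemma used in the upper bound, and invoking strong duality (already asserted for \eqref{app:robStDual}) to identify the dual optimum with $\bfrob{\cdot}$ in the lower bound. One could in principle sidestep duality for the lower bound by a direct primal argument, but routing the supermultiplicative direction through the dual is by far the cleanest, since feasibility of $S_1\otimes S_2$ and the factorization of the objective are immediate.
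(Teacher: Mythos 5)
Your proposal is correct and matches the paper's own proof essentially step for step: the upper bound by tensoring the optimal incoherent states from the primal form \eqref{app:robStPrim}, and the lower bound by using $S_1\otimes S_2$ as a feasible ansatz in the dual form \eqref{app:robStDual} together with strong duality. Your version merely spells out two details the paper leaves implicit---the operator inequality $A\otimes C\le B\otimes D$ for $0\le A\le B$, $0\le C\le D$, and the explicit check that $S_1\otimes S_2$ satisfies the dual constraints---both of which are verified correctly.
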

\begin{proof}
Consider the two values $\lambda_i=\bfrob{\rho_i}$ for $i=1,2$. Then by 
\eqq{app:robStPrim} there exist $\sigma_i\in\Delta$ such that $\rho_i\leq\lambda_i\sigma_i$, which implies 
$\rho_1\otimes\rho_2\leq\lambda_1\lambda_2 \;\sigma_1\otimes\sigma_2$, i.e., that 
$\bfrob{\rho_1\otimes\rho_2}\leq\lambda_1\lambda_2$. 

On the other hand, take $S_i$ that optimize \eqq{app:robStDual} for both local states, i.e., $\bfrob{\rho_i}=\tr(\rho_i S_i)$ for 
$i=1,2$, and take as an ansatz for the optimizer of the total robustness $S=S_1\otimes S_2$.  It follows that 
$\bfrob{\rho_1\otimes\rho_2}\geq\tr(\rho_1 S_1)\tr(\rho_2 S_2)$. 
\end{proof}

\subsection{The robustness of coherence of a channel}
\label{app:RobCoh}
In this subsection we prove several properties of the robustness of coherence for quantum channels, introduced in 
\eqq{robCohChDef}. Let us start by showing that it is a coherence measure for a coherence theory whose free resources are MIOs.

\begin{enumerate}
\item \emph{Faithfulness}: $C_R(\cN)=0$ if and only if $\cN\text{ is MIO}$. This follows straightforwardly from the definition, 
since one can take $\lambda=1$ if and only if $\cM\equiv \cN$.

\item \emph{Monotonicity}: The robustness of coherence of the channel $\cN$ monotonically decreases if we pre- and 
post-process the input and output using MIO channels $\cL'$ and $\cL$ respectively, i.e., 
$C_R(\cL\circ \cN \circ \cL' )\leq C_R(\cN),\; \forall\,\cL,\,\cL'\text{ is MIO}$. Indeed, take 
$\lambda\geq0, \cM \text{ MIO}$ such that $\lambda \cM-\cN\geq0$ and concatenate it with the channels $\cL, \cL'$. Defining $\cM'=\cL\circ \cM\circ \cL'\text{ MIO}$ it follows that 
$\cL\circ \cN\circ \cL'\leq\lambda \cM'$ as well, so that any $\lambda$ feasible for $\cN$ is also feasible for 
$\cL\circ \cN\circ \cL'$.

\item \emph{Convexity}: $\sum_i p_i C_R(\cN_i)\geq C_R(\sum_i p_i \cN_i)$ for any probability distribution $\{p_i\}$ and 
$\{\cN_i\}$ a collection of quantum channels. Indeed for each $i$ take the minimum $\lambda_i$ such that 
$\cN_i\leq\lambda_i \cM_i,\, \cM_i \mathrm{MIO}$. Then define $\bar{\lambda}=\sum_ip_i\lambda_i$ and note that 
$\{\widetilde{p}_i=p_i\lambda_i/\bar{\lambda}\}$ is still a probability distribution. By averaging the inequalities over $\{p_i\}$ and 
rescaling by $\bar{\lambda}$ we get $\sum_i p_i \cN_i\leq \bar{\lambda} \sum_i\widetilde{p}_i\cM_i$, where 
the latter is still a MIO. We conclude that $1+C_R(\sum_i p_i \cN_i)\leq \bar{\lambda}=\sum_i\lambda_i\,p_i$, and thus 
$\sum_i p_i C_R(\cN_i)\geq C_R(\sum_i p_i \cN_i)$.
\end{enumerate}

Properties 1-3 above straightforwardly extend to the smooth-robustness of \eqq{smoothRob} (the faithfulness condition holds up to error $\epsilon$).

Let us now focus on the SDP formulation of the robustness.

\begin{proof-of}{Theorem~\ref{theorem_dual}}
We want to calculate the dual of the SDP
\begin{equation*}
  1+C_R(\cN) = \min \{\lambda: \cM' \text{ is MIO and } \cN\leq \lambda \cM' =: \cM\},
\end{equation*}
which is equivalent to 
\begin{align}
\min &\ \lambda\\
   \text{\rm s.t.}&\ J_{\cN} \leq J_{\cM} \nonumber\\
   &\tr_B J_{\cM} =\lambda \mathds{1}_A\nonumber \\ 
  & \tr J_{\cM}(\ket{i}\!\bra{i}\otimes \ket{j}\!\bra{k}) =0,\ \forall i,\,\forall j\neq k, \nonumber
\end{align}
where $J_{\cN}$, $J_{\cM}$ are the Choi matrices of $\cN$ and $\cM$, respectively. This means that we need to minimize the following Lagrangian:	
\begin{eqnarray}
	 L&=&\lambda +\tr X (\tr_B J_{\cM} -\lambda \mathds{1}_A)-\tr Y(J_{\cM}-J_{\cN})\nonumber\\
	&&+\sum_{i,j\neq k}z_{jk}^i \tr J_{\cM}(\ket{i}\bra{i}\otimes \ket{j}\bra{k})\nonumber\\
	&=& \lambda +\tr(X\otimes \mathds{1})J_{\cM} -\lambda \tr X -\tr Y(J_{\cM}-J_{\cN})\nonumber\\
	&&+\tr J_{\cM} \Big(\sum _{i,j\neq k}\ket{i}\bra{i}\otimes Z_i\Big)\\
	&=& \lambda +\tr(X\otimes \mathds{1})J_{\cM} -\lambda \tr X -\tr Y(J_{\cM}-J_{\cN})\nonumber\\
	&&+\tr (J_{\cM} Z)\nonumber\\
	&=&\lambda(1-\tr X)+\tr J_{\cM}(X\otimes \mathds{1}-Y+Z)+\tr(J_{\cN} Y),\nonumber
\end{eqnarray}
where $Z=\sum_i \dketbra{i} \otimes Z_i$ and $Z_i$ is a zero-diagonal matrix, while $Y\geq 0.$ In order to avoid that $\min_{\lambda,J_{\cM}} L=-\infty$, we need to impose that $\tr X=1$ (so $X$ is a state) and $Y=X\otimes \mathds{1}+Z$.

To calculate the dual, we maximize the terms of the Lagrangian which 
do not contain the variables of the primal form, $\lambda$ and $J_{\cM}$:
\begin{align}
 \max \{\tr J_{\cN}Y: \tr X=1, Y=X\otimes \mathds{1}+Z\geq 0\}.
 \end{align}
 The objective function is equivalent to
 \begin{align}
	\tr J_{\cN}Y = &  \tr J_{\cN}(X\otimes \mathds{1})+\tr(J_{\cN} Z)\nonumber\\
	=& 1+\tr(J_{\cN} Z)\nonumber\\
	= & 1+\sum_i \tr Z_i \cN(\ket{i}\!\bra{i}),
\label{app:tmp1}
\end{align}
where we have made use of the identity $\tr_B((X\otimes \1)J_{\cN})=\cN(X^T)$, and the fact that $\cN$ is CPTP in the second 
line of Eq.~\eqref{app:tmp1}. As $X$ no longer appears in the objective function, and dephasing it does not affect the 
constraints, we may assume $X=\sum_i p_i \ket{i}\!\bra{i}$, where $p_i$ are probabilities, without loss of generality. 
Therefore, we can write
\begin{align}
  & \max\Big\{\sum_i p_i + \sum_i \tr Z_i \cN(\ket{i}\!\bra{i}): p_i\mathds{1}+Z_i\geq 0\, \forall i\Big\} \nonumber\\
	= & \max \sum_i p_i \tr S_i \cN(\ket{i}\!\bra{i})\nonumber\\
	= &\max \sum_i p_i (1+C_R(\cN(\ket{i}\!\bra{i})))\nonumber\\ 
	= &\max_i \{1+C_R(\cN(\ket{i}\!\bra{i}))\},
\label{app:roc_dual}
\end{align}
where $S_i=\1+p_i^{-1}Z_i\geq 0$ and $\bra{j}S_i\ket{j}=1$ for all $j$, and we have used the dual form of robustness for states, 
\eqq{app:robStDual}.  Moreover, since the latter is multiplicative under tensor product, see Lemma \ref{lemma:multipRobSt} in 
Sec.~\ref{app:RoC}, also the robustness of channels is.

The logged version of the robustness of coherence follows directly from Eq.~\eqref{app:roc_dual}, and the equality of the 
log-robustness of coherence to the cohering power, $P_R(\cN)$, of Eq.~\eqref{robCohPow} is also evident from 
the definition of the latter.  To see that the log-robustness of coherence for a channel is also equal to $\widehat{P}(\cN)$, observe that for any $\rho\otimes\omega$
\begin{align}\nonumber
&C_{LR}(\rho\otimes\omega)\geq C_{LR}(\cM(\rho\otimes\omega))=C_{LR}(\cN(\rho)\otimes\sigma)\\ \nonumber
&C_{LR}(\rho)+C_{LR}(\omega)\geq C_{LR}(\cN(\rho))+C_{LR}(\sigma)\\
&C_{LR}(\omega)-C_{LR}(\sigma)\geq C_{LR}(\cN(\rho))-C_{LR}(\rho).
\label{app:coherence_in_coherence_out}
\end{align}
Thanks to Corollary~\ref{corollary2}, the minimization of the right-hand side of Eq.~\eqref{app:coherence_in_coherence_out} over $\omega$ and $\sigma$ 
yields the amortized cost (Eq.~\eqref{eq_amortized}), which is equal to the log-robustness of coherence of
$\cN$ via Theorem~\ref{amortizedc}.  Moreover, the inequality in Eq.~\eqref{app:coherence_in_coherence_out} holds for all 
$\rho\in\cal{H}$.  Thus
\begin{equation}
C_{LR}(\cN)\geq \max_{\rho\in\cal{S(H)}}\left(C_{LR}(\cN(\rho))-C_{LR}(\rho)\right)\geq P_R(\cN),
\label{app:lrocequalcohpower}
\end{equation} 
where the last inequality holds because the maximization in $\widehat{P}_R(\cN)$ is over a larger convex set than that of 
$P_R(\cN)$.  As the upper and lower bounds on the coherence power are equal it follows that 
$C_{LR}(\cN)=\widehat{P}_R(\cN)$.  This completes the proof.
\end{proof-of}

Similarly, one can write the smoothed robustness of coherence (Definition~\ref{RobDef}) as an SDP in primal and dual form.  
Using the dual-SDP formulation of the diamond norm~\cite{watrous2009}, the primal SDP of  \eqq{smoothRob} reads
\begin{align} \nonumber
1+C_R^{\epsilon}(\cN) = \min &\ \lambda\\   \nonumber
\text{\rm s.t.}& \ J_{\cM}\geq J_{\cL},\\  \nonumber
& \tr_{B} J_{\cM}=\lambda\mathds{1}_{A},\\     \nonumber
& \tr{J_{\cM}(\ketbra{ij}{ik})}=0, \ \forall i\,\forall j\neq k,\\    \nonumber
& V\geq J_{\mathcal{L}}-J_{\cN},\\   \nonumber
& \tr_{B} V \leq \epsilon\mathds{1}_{A}, \\ \nonumber
& \tr_{B} J_{L} =\mathds{1}_{A},\\  
& J_{L}\geq0,\, V\geq 0.
\label{app:smoothedROC_primalSDP}
\end{align}
The first three constraints correspond to the simulation of channel $\cL$ by MIO and the fourth and fifth 
constraints capture the diamond norm constraints that $\cL$ should be $\epsilon$ close to $\cN$.  The dual form 
of Eq.~\eqref{app:smoothedROC_primalSDP} is given by 
\begin{align}\nonumber
\max &\,\tr(J_{\cN}(S-\Delta-\epsilon\Omega\otimes\1))\\    \nonumber
\text{\rm s.t.}&\  S\equiv Y\otimes\1-\sum_{i}\ketbra{i}{i}\otimes Z_i\geq 0\\   \nonumber
&S-\Delta\leq W\otimes\1\leq S-\Delta+\Omega\otimes \1\\    \nonumber
&\tr Z_i=0,\, \forall i\\   \nonumber
&\tr W=0,\,\tr Y=1\\
&\Delta\geq 0,\, \Omega\geq 0,\, Y\geq 0.
\label{app:smoothedROC_dualSDP}
\end{align}

Finally let us state the connection between the log-robusteness of coherence of a channel (correspondingly its smoothed 
version) to the maximum relative entropy (correspondingly $\epsilon$-maximum relative entropy) of the latter with respect to a 
MIO:
\begin{definition}	\label{app:prop_2}
	The maximum relative entropy of two channels $\cN$ and $\cM$ is:
	\begin{align}\label{app:maxRelEnt}\nonumber
	D_{\max}(\cN\|\cM) := -\log \max &\ p \\   \nonumber
 \text{\rm s.t.,}& \ \cM=p\cN+(1-p)\cN'\nonumber\\
	   & p\in[0,1],\ \cN' \in \text{CPTP},\nonumber\\
	   & \hspace{-1.95cm}= \log \min\left\{\lambda: \cN\leq \lambda \cM\right\} .
	\end{align}
The smoothed version of this quantity is:
\begin{equation}\label{app:epsilon_max-rel_entropy}
	D_{\max}^\epsilon(\cN\|\cM):= \min \left\{D_{\max}(\cL\|\cM): \frac{1}{2}\|\cN-\cL\|_\diamond\leq \epsilon\right\}.
	\end{equation}
\end{definition}

\begin{proposition}
	The $\epsilon$-log-robustness of a channel is  given by
	\begin{align}
	C_{LR}^\epsilon(\cN)
	=\min\left\{ D_{\max}^\epsilon(\cN\|\cM): \cM \text{MIO}\right\}.
	\label{app:logged-epsilon-robust}
	\end{align}
	\label{app:prop_3}
\end{proposition}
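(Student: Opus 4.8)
The plan is to show that, once the definitions are unfolded, both sides of \eqref{app:logged-epsilon-robust} reduce to one and the same double minimization of the max-relative entropy $D_{\max}(\cL\|\cM)$ over two \emph{independent} feasible sets---the channels $\cL$ lying in the diamond-norm $\epsilon$-ball around $\cN$, and the MIOs $\cM$---so that the entire content of the statement is an interchange of the two minimizations.

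The first step is to record the unsmoothed identity $C_{LR}(\cL)=\min\{D_{\max}(\cL\|\cM):\cM\text{ MIO}\}$, valid for every channel $\cL$. This is immediate from Definition~\ref{app:prop_2}: since $D_{\max}(\cL\|\cM)=\log\min\{\lambda:\cL\le\lambda\cM\}$, minimizing over $\cM\in\text{MIO}$ and using monotonicity of the logarithm gives $\min_{\cM}D_{\max}(\cL\|\cM)=\log\min\{\lambda:\cL\le\lambda\cM,\ \cM\text{ MIO}\}=\log(1+C_R(\cL))=C_{LR}(\cL)$, by Definition~\ref{RobDef}. Substituting this into the definition of the smoothed log-robustness yields
$$C_{LR}^\epsilon(\cN)=\min_{\cL:\,\frac12\|\cN-\cL\|_\diamond\le\epsilon}\ \min_{\cM\text{ MIO}}\,D_{\max}(\cL\|\cM),$$
whereas the definition $D_{\max}^\epsilon(\cN\|\cM)=\min_{\cL:\,\frac12\|\cN-\cL\|_\diamond\le\epsilon}D_{\max}(\cL\|\cM)$ gives
$$\min_{\cM\text{ MIO}}D_{\max}^\epsilon(\cN\|\cM)=\min_{\cM\text{ MIO}}\ \min_{\cL:\,\frac12\|\cN-\cL\|_\diamond\le\epsilon}\,D_{\max}(\cL\|\cM).$$

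The final step is to interchange the two minimizations in the displays above. This is legitimate because the constraint on $\cL$ (membership in the diamond-norm ball around $\cN$) does not refer to $\cM$, and the constraint on $\cM$ (being a MIO) does not refer to $\cL$; hence the joint feasible region is a product set and the two nested minimizations agree, establishing \eqref{app:logged-epsilon-robust}. To see that these are genuine minima, I would note that both feasible sets are nonempty and compact---the $\epsilon$-ball of channels contains $\cN$, and the MIOs form a closed subset of the compact set of channels---while $D_{\max}(\cL\|\cM)$ is everywhere finite on the product, since $C_R(\cL)<\infty$ for any channel (e.g.\ from the dual expression \eqref{app:roc_dual}). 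I do not expect a genuine obstacle: the proof is pure bookkeeping, and the one point deserving a sentence of care is the independence of the two feasible sets, which is exactly what licenses the swap of minimizations.
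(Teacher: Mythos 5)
Your proof is correct and coincides with the argument the paper leaves implicit: Proposition~\ref{app:prop_3} is stated in the appendix without proof precisely because it amounts to the unfolding of Definitions~\ref{RobDef} and~\ref{app:prop_2} into the double minimization of $D_{\max}(\cL\|\cM)$ over the $\epsilon$-ball and the MIOs, followed by the interchange of the two independent minimizations, exactly as you do. One side remark is slightly off --- $D_{\max}(\cL\|\cM)$ is \emph{not} everywhere finite on the product set (it is $+\infty$ whenever $\cL\le\lambda\cM$ fails for every finite $\lambda$, e.g.\ for the identity channel against a constant incoherent channel) --- but this is harmless, since the interchange of infima over a product feasible set holds unconditionally, and attainment of the minima follows from the compactness you note together with lower semicontinuity of $D_{\max}$ (closedness of the set of triples $(\cL,\cM,\lambda)$ with $\lambda\cM-\cL$ completely positive), rather than from finiteness.
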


\subsection{Channel implementation with maximally coherent resources}
\label{app:cosdit_implementation}
In this subsection we prove the main results related to channel implementation with cosdits, Sec.~\ref{cosdits}.
We begin with the implementation cost without recycling:
 
\begin{proof-of}{Theorem \ref{theorem1}}
Let $\cL:=\cM(\Psi_k \otimes \cdot):A\rightarrow B$ be the MIO simulation of the 
induced channel $\cL:A\rightarrow B$ such that $\frac{1}{2}\|\cN-\cL\|_\diamond\leq\epsilon$. 
Noting that $\Psi_k + \left(k-1\right)\sigma = \mathds{1}$, with $\sigma = \dfrac{\mathds{1}-\Psi_k}{k-1}$, define 
\begin{align}\label{M_0} \nonumber
\cM' := \cM\left(\dfrac{\mathds{1}}{k} \otimes \cdot\right)&= \frac{1}{k}\cM\left((\Psi_k+(k-1)\sigma)\otimes \cdot\right)\nonumber\\
&=\frac{1}{k}\cM(\Psi_k\otimes \cdot)+\left(1-\frac{1}{k}\right)\cM(\sigma\otimes\cdot)\nonumber\\
&=\frac{1}{k}\cL+\left(1-\frac{1}{k}\right)\cL',
\end{align}                     
with $\cL' = \cM(\sigma \otimes \cdot)$. As $\cM$ is MIO, so is $\cM':A\rightarrow B$, and the 
right-hand-side of Eq.~(\ref{M_0}) corresponds to a convex decomposition of $\cM'$ in terms of  $\cL,\cL'\in\mathrm{CPTP}$. 
Hence, from the definition of the maximum-relative entropy, Eq.~\eqref{app:maxRelEnt} it follows
\begin{equation}
\log k \geq D_{\max}^\epsilon(\cL\|\cM')\geq C_{LR}^\epsilon(\cN).
\end{equation}

For the converse, let $k \geq 1+C_R^\epsilon(\cN)$ be an integer. By Eq.~\eqref{app:logged-epsilon-robust}, it follows that
there exists some CPTP map $\cL$ with $\dfrac{1}{2}||\cN-\cL||_\diamond\leq \epsilon$
and another CPTP map $\cL'$, such that $(\cL + \left(k-1\right)\cL')/k\text{ is MIO}$. 
Make the following ansatz for a channel $\cM$ that is feasible for the simulation cost:
$$\cM(\tau \otimes \rho) := \tr(\Psi_k \tau) \cL(\rho) + \tr((\mathds{1}-\Psi_k)\tau) \cL'(\rho).$$
 The map $\cM$ is MIO if and
only if $\cM(\ket{i}\bra{i} \otimes \cdot)$ is MIO for all incoherent basis states $\ket{i}$.
This is the case, since $\cM(\ket{i}\bra{i} \otimes \cdot) = \dfrac{1}{k} \cL + \big(1-\dfrac{1}{k}\big)\cL'$, which is MIO by construction. 
Hence $\log k\geq C_{\simu}^{\epsilon}(\cN)$ and the former can be taken as small as $\lceil1+C_R^\epsilon(\cN)\rceil$.
As the implementation cost is necessarily an integer \eqq{eq_MIO-simulation} follows.  This completes the proof.
\end{proof-of}

The proof for the simulation cost with recycling relies on the previous result, effectively implementing a simulation of the 
target channel and the resource.

\begin{proof-of}{Theorem \ref{theoremk-sim}}
Let $\cT:\mathbb{C}\otimes A\rightarrow  S \otimes  B$ such that 
$\cT(1\otimes\rho)=\sigma\otimes\cN(\rho)$.  Here, we have made use of the fact that any state can be identified with a preparation 
channel from $\mathbb{C}$ to $ S$ mapping the unique state $1\to\sigma$.  
Now let $\cM \text{ be a MIO}$ such that 
$$\cM(\Psi_k\otimes\cdot)=\cT(\cdot).$$
From Theorem~\ref{theorem1} such a simulation is possible if and only if 
\begin{align*}
k&\geq (1+C_R(\cT))\\
&=(1+C_R(\sigma))(1+C_R(\cN))
\end{align*}
where we have used the multiplicativity of the robustness of coherence in the last line.  Taking the logarithm on both sides and 
re-arranging terms gives Eq.~\eqref{eq_coh_left}.  This completes the proof.
\end{proof-of}

We will now make use of Theorem~\ref{theoremk-sim} to prove Theorem~\ref{amortizedc} regarding the amortized cost.

\begin{proof-of}{Theorem \ref{amortizedc}}
From Definition~\ref{amorDef} and Corollary~\ref{corollary2} it follows immediately that the zero-error amortized cost is equal to the log-robustness. Indeed, if we take the log on each side of \eqq{interval} and then let $k,m\rightarrow\infty$ we obtain $\amocoz{\cN}=C_{LR}(\cN)$.
this in turn implies that the $\epsilon$-error amortized cost, \eqq{eq_amortized}, can be rewritten in terms of the log-robustness of the channel $\cL$ as
\begin{equation}
\amoco{\cN}=\min C_{LR}(\cL) \, \text{ s.t. }\frac{1}{2}||\cN-\cL||_\diamond\leq\epsilon\equiv C_{LR}(\cN).
\end{equation}
\end{proof-of}

\subsection{Channel implementation with arbitrary resources: SDPs}
\label{app:sdps}
Here we prove the SDP forms of the generic implementation problem considered in Sec.~\ref{noncosdits}.
We begin by first providing an SDP for the optimal implementation of unitary gates without recycling using the gate fidelity as a figure 
of merit.  We then proceed to prove Proposition~\ref{diamondnorm_sdp}. For ease of notation, we swap the 
output subspaces of the MIO implementation map throughout this section, i.e., $\cM:R\otimes A \rightarrow B\otimes S$. We also consider unnormalized Choi matrices.  
\begin{proposition}
  \label{app:proposition1}
The optimal gate fidelity of implementation of a unitary $U: A\rightarrow  B$ by means of a MIO 
$\cM: R\otimes A \rightarrow  B\otimes S$ and a pure coherent state 
$\omega\in R$ is given by the following SDP:
  \begin{equation}\begin{split}
	  \label{app:thesdp}
     F= \max &\ \frac{1}{d_A^2}\tr((\omega^T\otimes J_U\otimes \mathds{1}_{S})X) \\
           \text{\rm s.t.}\, &\ X \text{ is the Choi matrix of a MIO},
  \end{split}\end{equation}
  where $J_U$ is the Choi matrix of the channel $U\cdot U^\dagger$, $d_A=\mathrm{dim} (A)$, and $\omega^T$ denotes the transpose of $\omega$.
\end{proposition}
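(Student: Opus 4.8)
The plan is to observe that the figure of merit is a \emph{linear} functional of the Choi matrix $X=J_{\cM}$ of the implementing map, and that the requirement ``$\cM$ is MIO'' is precisely the semidefinite feasibility region already recorded in Proposition~\ref{diamondnorm_sdp}. Once both facts are in place the optimization is manifestly an SDP, so the only real work is to massage the objective into the stated single-trace form.

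First I would recall that for the target unitary channel $U\cdot U^\dagger$ and an implementation channel $\cE:A\rightarrow B$ the entanglement fidelity is $F=\frac{1}{d_A^2}\tr(J_U J_{\cE})$ in terms of unnormalized Choi matrices; this is the quantity the text renames ``gate fidelity'', and it is affinely related to the average gate fidelity $f$ through $F=\frac{(d+1)f-1}{d}$, a relation with positive slope, so the maximizer of $F$ also maximizes $f$ and the SDP genuinely returns the optimal gate fidelity. The crucial point is only that $F$ is linear in $J_{\cE}$. Next I would express $J_{\cE}$ through $X=J_{\cM}$: since the implementation is $\cE(\cdot)=\tr_S\cM(\omega\otimes\cdot)$, the same ``plug the resource into the $R$-slot'' identity used in Proposition~\ref{diamondnorm_sdp}, now with the extra output $S$ to be discarded, gives $J_{\cE}=\tr_{R,S}\big[(\omega^{T}\otimes\mathds{1}_{ABS})J_{\cM}\big]$.

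Substituting this into $F$ and using linearity of the trace and of the partial trace, together with the fact that $J_U$ is supported only on $B\otimes A$ (so that $\tr[J_U\,\tr_{R,S}(\cdot)]=\tr[(J_U\otimes\mathds{1}_{RS})\,\cdot\,]$), collapses everything into the single trace $\frac{1}{d_A^2}\tr\big[(\omega^{T}\otimes J_U\otimes\mathds{1}_S)J_{\cM}\big]$, with the tensor factors ordered $R,(BA),S$ exactly as in the statement; this is the claimed objective with $X=J_{\cM}$. Because $\omega$ is a \emph{fixed} given state, the sole optimization variable is $X$, and the constraint that $\cM$ be a MIO is captured---up to the harmless swap of the output factors to $B\otimes S$---by the semidefinite conditions of Proposition~\ref{diamondnorm_sdp}: $J_{\cM}\geq0$, the trace-preservation condition $\tr_{BS}J_{\cM}=\mathds{1}_{RA}$, and the incoherence conditions $\tr((\ket{i}\!\bra{i}_{RA}\otimes\ket{j}\!\bra{k}_{BS})J_{\cM})=0$ for all $i$ and $j\neq k$. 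Maximizing a linear functional over this set is an SDP, which proves the proposition.

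I expect the only delicate part to be the Choi-matrix bookkeeping---keeping the partial traces over $R$ and $S$, the transpose on $\omega$, the tensor-factor orderings, and the $1/d_A^2$ normalization all mutually consistent---rather than any conceptual difficulty; the conceptual content (linearity of the objective plus SDP-representability of the MIO constraints) is already secured by Proposition~\ref{diamondnorm_sdp} and the standard entanglement-fidelity formula.
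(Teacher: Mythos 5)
Your proposal is correct and follows essentially the same route as the paper's own proof: express $J_{\cE}=\tr_{RS}\big[(\omega^{T}\otimes\mathds{1}_{ABS})J_{\cM}\big]$, substitute into $F=\frac{1}{d_A^2}\tr(J_U J_{\cE})$ to collapse the objective into the single trace $\frac{1}{d_A^2}\tr\big[(\omega^{T}\otimes J_U\otimes\mathds{1}_S)J_{\cM}\big]$, and maximize this linear functional over the semidefinite MIO feasibility region ($J_{\cM}\geq 0$, $\tr_{BS}J_{\cM}=\mathds{1}_{RA}$, plus the incoherence conditions). The only difference is presentational: the paper establishes the Choi-matrix identity for $J_{\cE}$ by an explicit index computation rather than citing it from Proposition~\ref{diamondnorm_sdp}, a bookkeeping step you correctly flag as the only delicate point.
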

\begin{proof}
Consider a maximally incoherent operation $\cM: R\otimes A \rightarrow  B\otimes S$. 
Its Choi matrix is defined as 
\begin{align}\nonumber
  J_{\cM}&=(\id_{RA}\otimes \cM)(\Phi_{RR'} \otimes \Phi_{AA'})\\
  &=\sum_{i,k,j,l}\ket{ik}_{RA}\bra{jl}\otimes\cM\left(\ket{ik}_{R'A'}\bra{jl}\right).
  \label{app:MIO_Choi}
\end{align} 
Suppose that $\cE=\tr_S(\cM(\omega \otimes \cdot))$, where $\cE: A \rightarrow B$.  Then the Choi matrix of $\cE$, $J_\cE$, is explicitly given by 
\begin{align}\nonumber
J_{\cE} &= \tr_{RS}((\omega^T \otimes \1_{A} \otimes \1_{B} \otimes \1_{S}) J_{\cM})\\  \nonumber
&=\sum_{ijkl}\bra{j}\omega^T\ket{i}\ket{k}_A\bra{l}\otimes\tr_S\left(\cM(\ket{ik}_{R'A'}\bra{jl})\right)\\  \nonumber
&=\sum_{k,l}\ket{k}_A\bra{l}\otimes\tr_S\left(\cM(\omega\otimes\ket{k}_{A'}\bra{l})\right)\\
&=(\id_A\otimes\cE)(\Phi_{AA'})
\label{app:CJofE}
\end{align}
On the other hand, the Choi matrix of the unitary channel is given by $J_U = (\1_A\otimes U)\Phi_{AA'}(\1_A\otimes U^\dagger)$.
 	
Our aim is to compute how well the map $\cE$ implements the unitary channel $\cN(\rho)=U\rho U^\dagger$. 
To that end, we use the \emph{gate fidelity}, i.e., the fidelity between the Choi matrices of the corresponding channels:
\begin{eqnarray}
 	F&=&\frac{1}{d_A^2}\tr(J_U J_{\cE}) \nonumber \\
 	&=& \frac{1}{d_A^2}\tr((\mathds{1}_{R} \otimes J_U \otimes \mathds{1}_{S}) (\omega^T \otimes \mathds{1}_{A}\otimes \mathds{1}_{B}\otimes \mathds{1}_{S})J_{\cM}) \nonumber \\
 	&=& \frac{1}{d_A^2}\tr((\omega^T\otimes J_U\otimes \mathds{1}_{S})J_{\cM}).
\end{eqnarray}
In particular, we want to obtain the optimal gate fidelity of implementation of the given channel.  This is an SDP in primal standard 
form, where $J_{\cM}\geq 0$ is the semidefinite variable subject to the constraints 
$\tr_{BS} J_{\cM} = \mathds{1}_{R}\otimes \mathds{1}_{A}$ and 
\mbox{$\tr((\ket{i k}_{RA}\bra{i k}\otimes \ket{j l}_{SB}\bra{j' l'})J_{\cM})=0$}
for all $i,k$ and all $j\neq j', l\neq l'$.
\end{proof}

In the case of a generic channel $\cN$, the diamond norm distance 
needs to be used, as discussed in the main text.

\begin{proof-of}{Proposition~\ref{diamondnorm_sdp}}
The SDP can now be formulated as follows:
\begin{equation}\begin{split}
  \label{diamondsdp}
  \min &\ \frac{1}{2}\|\cE-\cN\|_\diamond\\
  \text{\rm s.t.} &\ X  \text{ is the Choi matrix of a MIO } \cM, \\
		&\ \cE = \tr_B \cM(\omega\otimes\cdot).
\end{split}\end{equation}
Using the result of~\cite{watrous2009}, the dual form of the diamond norm distance can be written as  
\begin{align}\nonumber
\min &\ \lambda\\  \nonumber
\text{\rm s.t.} \ & Z\geq J_{\cE}-J_{\cN}\\  \nonumber
&  \tr_{B} Z\leq \lambda\1_A\\
&Z\geq0.
\label{app:diamond_dual}
\end{align}
Using Eq.~\eqref{app:CJofE} to relate the Choi matrix of $\cE$ to that of $\cM \text{ is MIO}$ it is then straightforward to obtain 
the final formulation.
\end{proof-of}

\subsection{Alternative definitions of coherence left}
\label{app:altCohLeft}
In this appendix we discuss possible bounds on and alternative definitions of coherence left, when an arbitrary resource is used to simulate a channel with recycling. Let us start by relaxing the tensor-product constraint at the output of the implementation: a simpler problem is that of finding a MIO that simulates the whole tensor-product output up to a given error. This also constitutes an upper bound on the robustness of coherence left. 
 \begin{proposition}
   \label{cohLeftUpBound}
   The maximum robustness of coherence left in the resource $\sigma\in S$ after the 
   implementation of a quantum channel $\cN:A\rightarrow B$ via 
   MIO $\cM:R\otimes A \rightarrow S\otimes B$ and a coherent resource $\omega\in R$
   up to error $\epsilon$ in diamond norm, i.e., \eqq{cohLeftNonLinEQ}, is bounded from above by the following optimization problem:
   \begin{equation}\begin{split}
     \label{cohLeftUpBoundEQ}
     \max &\ C_{R}(\sigma)\\
      \text{\rm s.t. }& \frac{1}{2}||\sigma\otimes\cN-\cM(\omega\otimes\cdot)||_{\diamond}\leq \epsilon,
     \end{split}
     \end{equation}
which can be expressed as the following semidefinite program:
  \begin{equation}\begin{split}
     \label{cohLeftUpBoundEQSDP}
     \max &\ \sum_{i\neq j}\sigma_{ij}\\
     \text{\rm s.t.}&\ J_{\cM}  \text{ is the Choi matrix of } \cM \text{ MIO} \\
 	 &\ Z  \geq \sigma\otimes J_{\cN}-\tr_R( (\omega^T\otimes\mathds{1}_{SAB})J_{\cM})\\
 	 &\ \tr_{SB} Z\leq  \epsilon \mathds{1}_{A}\\
	 &\ Z  \geq 0.
     \end{split}
     \end{equation}
 	 where $J_\cN$ is the Choi matrix of $\cN$.
	  \end{proposition}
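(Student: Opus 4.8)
The plan is to prove the two assertions of Proposition~\ref{cohLeftUpBound} separately: first that the relaxed problem \eqref{cohLeftUpBoundEQ} upper-bounds the true coherence-left \eqref{cohLeftNonLinEQ}, and then that \eqref{cohLeftUpBoundEQ} admits the semidefinite form \eqref{cohLeftUpBoundEQSDP}. Both problems maximise the same objective $C_R(\sigma)$ over a MIO $\cM$ and an output state $\sigma$, so for the first claim it suffices to show that the feasible region of \eqref{cohLeftNonLinEQ} is contained in that of \eqref{cohLeftUpBoundEQ} once the auxiliary channel $\cL$ is forgotten.

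First I would take any feasible triple $(\cM,\sigma,\cL)$ for \eqref{cohLeftNonLinEQ}, i.e.\ a MIO $\cM$ with $\cM(\omega\otimes\cdot)=\sigma\otimes\cL$ and $\tfrac12\|\cN-\cL\|_\diamond\le\epsilon$. Substituting the tensor-product identity into the relaxed constraint gives $\tfrac12\|\sigma\otimes\cN-\cM(\omega\otimes\cdot)\|_\diamond=\tfrac12\|\sigma\otimes(\cN-\cL)\|_\diamond$. The key elementary fact is that tensoring a superoperator with a fixed normalised state leaves the diamond norm unchanged: unrolling $\|\Phi\|_\diamond=\sup_\rho\|(\Phi\otimes\id)(\rho)\|_1$ and using multiplicativity of the trace norm with $\|\sigma\|_1=1$ yields $\|\sigma\otimes(\cN-\cL)\|_\diamond=\|\cN-\cL\|_\diamond$. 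Hence the relaxed constraint holds with the same $\epsilon$, the pair $(\cM,\sigma)$ is feasible for \eqref{cohLeftUpBoundEQ}, and since the objective value $C_R(\sigma)$ is unchanged the upper bound follows.

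For the semidefinite reformulation of \eqref{cohLeftUpBoundEQ} I would dualise the diamond-norm constraint exactly as in the proof of Proposition~\ref{diamondnorm_sdp}. The channel $\rho\mapsto\sigma\otimes\cN(\rho)$ has Choi matrix $\sigma\otimes J_\cN$, and the implemented channel $\cM(\omega\otimes\cdot)$ has Choi matrix $\tr_R((\omega^T\otimes\mathds{1}_{SAB})J_\cM)$ by the computation in \eqref{app:CJofE}. Feeding these two Choi matrices into Watrous's dual SDP for the diamond distance~\cite{watrous2009} produces precisely the constraints $Z\ge0$, $Z\ge\sigma\otimes J_\cN-\tr_R((\omega^T\otimes\mathds{1}_{SAB})J_\cM)$ and $\tr_{SB}Z\le\epsilon\mathds{1}_A$, while the remaining constraints merely encode that $J_\cM$ is the Choi matrix of a MIO. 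Crucially, since $J_\cN$ is a fixed matrix, every one of these constraints is jointly linear in the variables $(\sigma,J_\cM,Z)$, so the whole feasible set is semidefinite-representable.

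The hard part will be the objective. The robustness $C_R(\sigma)$ is a supremum of linear functionals (its dual form \eqref{app:robStDual}), hence a convex function of $\sigma$, and maximising a convex function over a spectrahedron is not itself an SDP; equivalently, inserting the dual form turns the objective into the bilinear, non-concave form $\tr\sigma S$. This is why \eqref{cohLeftUpBoundEQSDP} replaces $C_R(\sigma)$ by the linear surrogate $\sum_{i\ne j}\sigma_{ij}$. To justify the identification I would exploit that the feasible region is invariant under conjugating the output register $S$ by diagonal (incoherent) phase unitaries: these are MIO, they preserve the diamond-norm constraint, and they leave $C_R(\sigma)$ unchanged, so one may align the phases of the off-diagonal entries of $\sigma$ and invoke $\sum_{i\ne j}\sigma_{ij}=2\sum_{i<j}\Real\,\sigma_{ij}\le C_{\ell_1}(\sigma)$ together with the bound \eqref{app:l1nc_bound_roc} $C_R\le C_{\ell_1}$, which becomes an equality for the pure (in particular qubit and flagpole) output resources relevant here. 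Closing this gap cleanly---showing that the linear objective really returns $C_R(\sigma)$ rather than merely a related $\ell_1$-type quantity---is the delicate step, since for $d\ge 3$ a single layer of phases cannot render all off-diagonals simultaneously real and nonnegative; I would therefore isolate the equality $C_R=C_{\ell_1}$ plus the phase-alignment as the precise conditions under which the stated SDP computes the robustness of coherence left.
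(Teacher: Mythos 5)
Your first step---discarding $\cL$ and checking that the tensor-product ansatz satisfies the relaxed constraint with the same $\epsilon$, via multiplicativity of the trace norm under tensoring with a normalised $\sigma$---is correct and is exactly the paper's (one-line) argument, just spelled out; likewise your dualisation of the diamond-norm constraint following Watrous matches the paper. The genuine gap is in the last step: you correctly identify that replacing the convex objective $C_R(\sigma)$ by the linear surrogate $\sum_{i\neq j}\sigma_{ij}$ is the delicate point, but your proposed repair (phase alignment plus $C_R=C_{\ell_1}$ via \eqref{app:l1nc_bound_roc}) does not close it, and you concede as much: the inequalities you invoke run the wrong way ($\sum_{i\neq j}\sigma_{ij}\leq C_{\ell_1}(\sigma)$ and $C_R(\sigma)\leq C_{\ell_1}(\sigma)$ together give no comparison between the surrogate and $C_R$), and for $d\geq 3$ the phase-alignment manoeuvre fails in general, so your conclusion would hold only for a restricted class of output resources, whereas the proposition is stated for arbitrary $\sigma$.

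The paper closes the gap without any such restriction, by exploiting the dual form \eqref{app:robStDual} in both directions. One direction is immediate: the all-ones matrix $G$ is dual-feasible ($G\geq 0$, $G_{ii}=1$), so $\sum_{i\neq j}\sigma_{ij}=\tr(\sigma G)-1\leq C_R(\sigma)$ for every feasible $\sigma$, whence the SDP value never exceeds that of \eqref{cohLeftUpBoundEQ}. For the converse, let $S$ be the optimal dual witness, $1+C_R(\sigma)=\tr(\sigma S)$, and observe that the Schur multiplier $\sigma\mapsto\sigma\circ S^{T}$ is itself a CPTP map (since $S\geq0$ with unit diagonal) which is moreover MIO, as it leaves diagonals fixed and only damps off-diagonals, $|(\sigma\circ S^T)_{ij}|\leq|\sigma_{ij}|$. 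Appending this channel to $\cM$ on the $S$ register yields another MIO whose diamond-norm error is no larger, by contractivity, and whose new output resource $\sigma'=\sigma\circ S^{T}$ satisfies $\sum_{i\neq j}\sigma'_{ij}=\sum_{i\neq j}\sigma_{ij}S_{ji}=C_R(\sigma)$. Thus the linear objective attains the robustness exactly, for all dimensions and all (mixed or pure) $\sigma$; the feasible set's closure under post-processing by Schur-multiplier MIOs is the idea your attempt is missing.
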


 \begin{proof}
 It is straightforward to show that \eqq{cohLeftUpBoundEQ} is an upper bound on \eqq{cohLeftNonLinEQ}. Indeed, we can make the separable-output ansatz $\cM(\omega\otimes\cdot)=\sigma\otimes\cL(\cdot)$ for the MIO $\cM$ simulating the channel in \eqq{cohLeftUpBoundEQ} and obtain the problem of \eqq{cohLeftNonLinEQ}.
 
 As for the SDP formulation of this problem, the last three conditions in \eqq{cohLeftUpBoundEQSDP} are a simple translation of the diamond norm error one in \eqq{cohLeftUpBoundEQ}, following~\cite{watrous2009}. 
  It then remains to show that the robustness of coherence can be re-cast as the sum of off-diagonal elements of 
 $\sigma\in S$.  To that end recall the dual formulation, Eq.~\eqref{app:robStDual}, of the robustness of 
 coherence which reads  
 \begin{equation}
 1+C_R(\sigma) = \max \tr \sigma S 
               = \max \tr (\sigma \circ S^T)G,
               \label{app:hadamard}
 \end{equation} 
where $S$ is a non-negative-definite matrix with ones on the diagonal and
we have introduced the all-one matrix $G_{ij}=1$ for all $i,j$ and the Hadamard component-wise product 
 $\sigma \circ S=\sum_{ij}\sigma_{ij}S_{ij}\ket{i}\bra{j}$. That this is equal to the sum of all off-diagonal elements of $\sigma$ follows 
 from
 \begin{align}
 C_R(\sigma)&=\sum_{i,j}\sigma_{ij}\,S_{ji}-1\nonumber\\
 &=\sum_{i}\sigma_{ii}+\sum_{i\neq j}\sigma_{ij}S_{ji}-1\nonumber\\
 &=\sum_{i\neq j}\sigma_{ij}S_{ji},
 \label{app:roceqoff-diagonal}
 \end{align} 
 where we have made use of the fact that $S_{ii}=1\,\forall\, i$ in the second line above.  As the transformation 
 $\sigma \rightarrow \sigma \circ S^T$ can be implemented by a MIO CPTP map, given that $(\sigma \circ S^T)_{ij}\leq\sigma_{ij} \; \forall i\neq j$ (no coherence is created in the transformation), the two objective functions 
 are equivalent under the given constraints, since the diamond norm is contractive under CPTP maps.
 \end{proof}	  
	  
The SDP of Proposition~\ref{cohLeftUpBound} amounts to requiring that the MIO $\cM$ implements a good simulation of the overall output $\sigma\otimes\cN$ when provided with the input resource $\omega$. Note that this implies that the local reduced output systems of the implementation $\cM$ are close to $\sigma$ and $\cN$, but also that their correlations are small, so that the implementation is $\epsilon$-close to a tensor-product one. However, if we increase the allowed dimension of $\sigma$ it is possible to find states that are $\epsilon$-close to it but have an increasing amount of coherence, e.g., $(1-\epsilon)\sigma+\epsilon\Psi_d$. Hence \eqq{cohLeftUpBoundEQ} can give an unbounded amount of coherence left if the output dimension of the resource is allowed to increase. 

Two possible ways to remedy this involve changing the objective function such that it  either maximizes the $\epsilon$-robustness of coherence of the output resource state, or to find the most coherent state among all output resource states that are $\epsilon$-close to the desired target.  Both of these solutions provide an alternative definition of coherence left, physically motivated by the fact that the true output resource is not $\sigma$ but only a state $\epsilon$-close to it, either in robustness or in trace norm.  Unfortunately, the former is not an SDP while the latter is not easily computable.   Which of these, or other, definitions of coherence left is better may depend on the recycler's objective and we leave it as an open question for the interested reader.

In Fig.~\ref{fig:coh_left_vs_uni} we plot the robustness of coherence left after the approximate implementation of a qubit unitary $U_\theta$ vs. $\theta/\pi$, for several values of the input coherence and of the error threshold. As expected, for a fixed input coherence, more coherence can be recovered at the output if we accept a worse implementation. Moreover, for sufficiently small error thresholds, there are unitaries that cannot be implemented at all. 

	  
 \begin{figure*}[t!]
 
 	\subfloat[]{ \includegraphics[width = 0.5\linewidth]{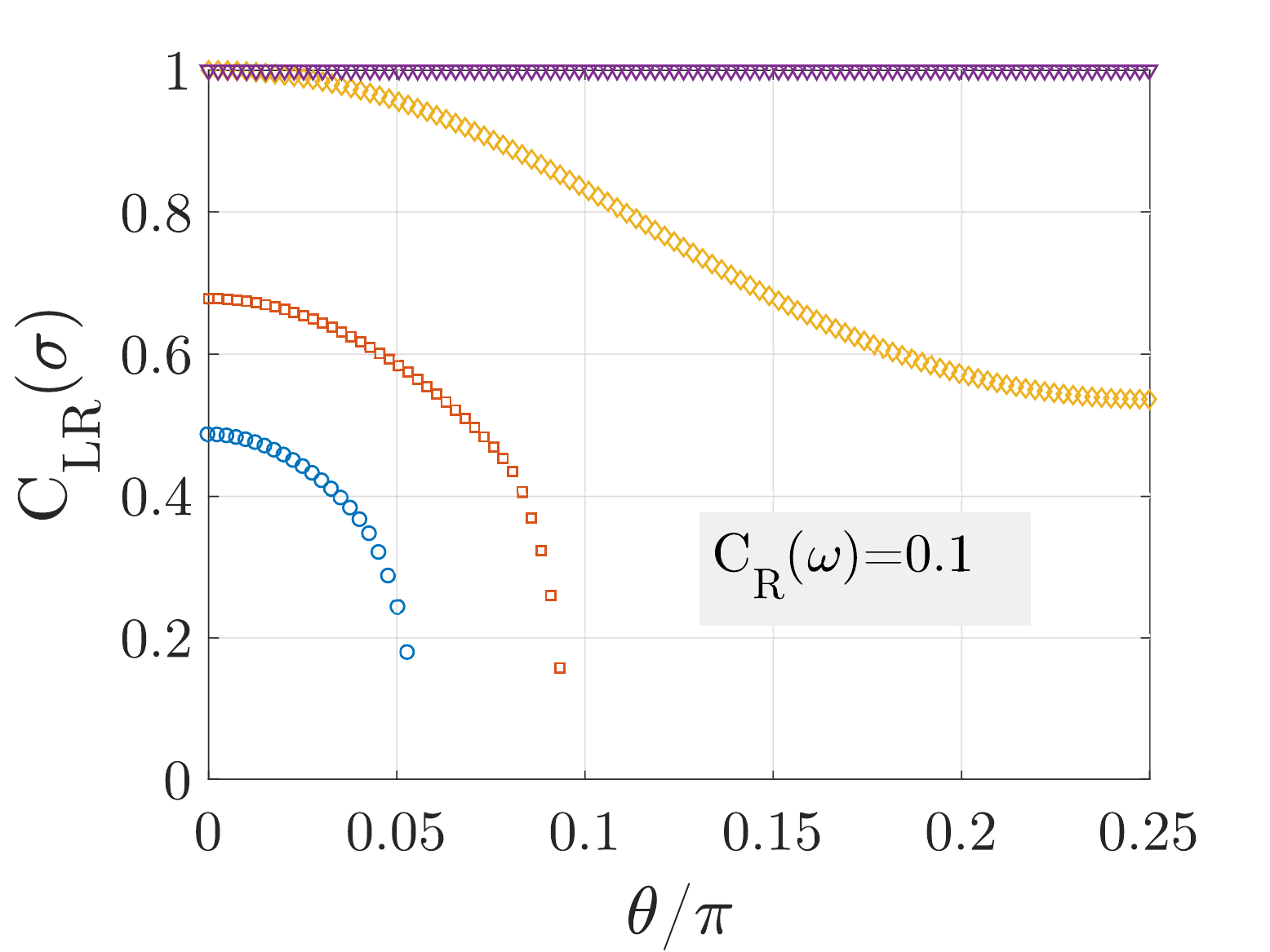}}
	\subfloat[]{ \includegraphics[width = 0.5\linewidth]{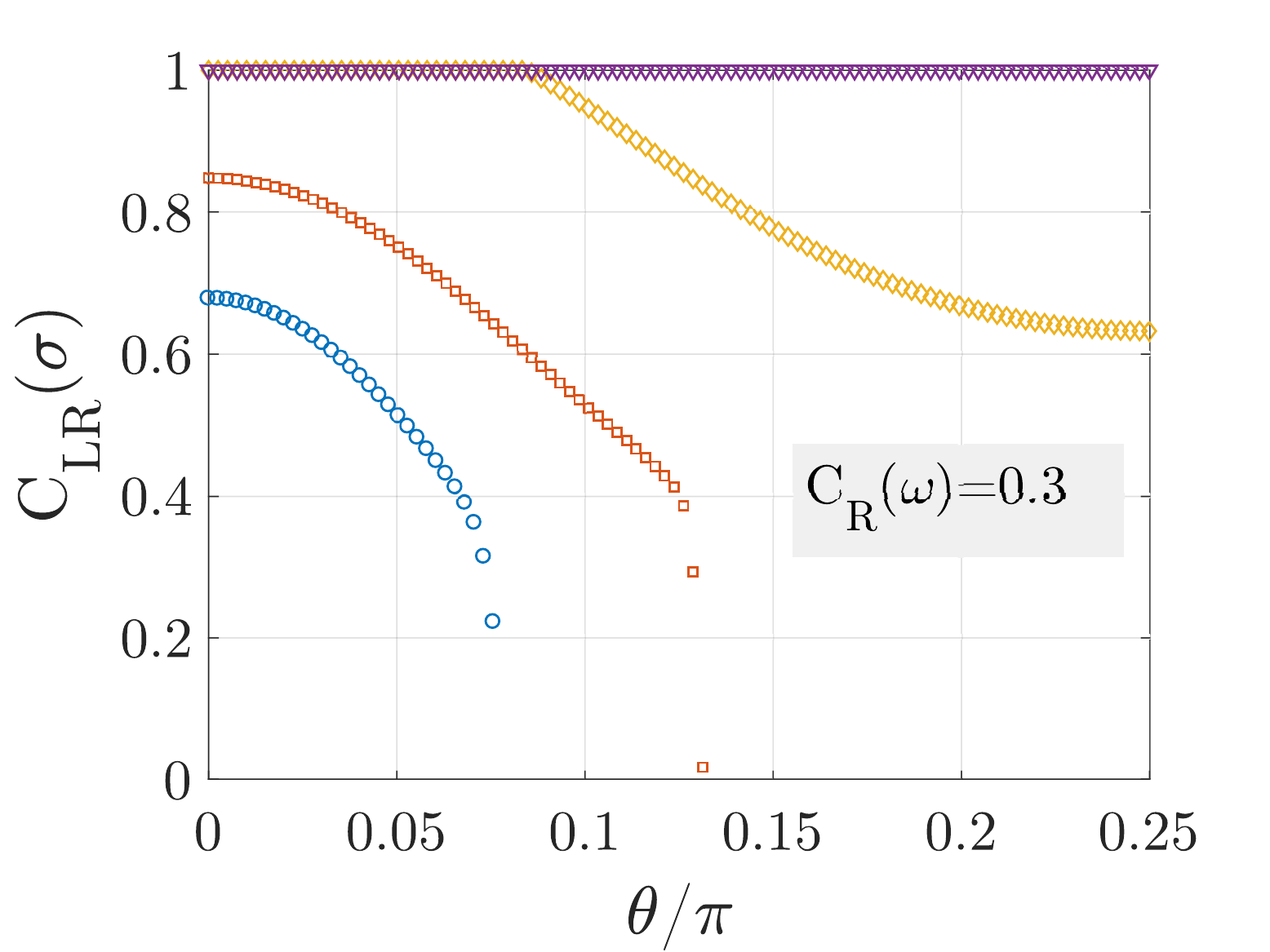}}\\
	
	\subfloat[]{ \includegraphics[width = 0.5\linewidth]{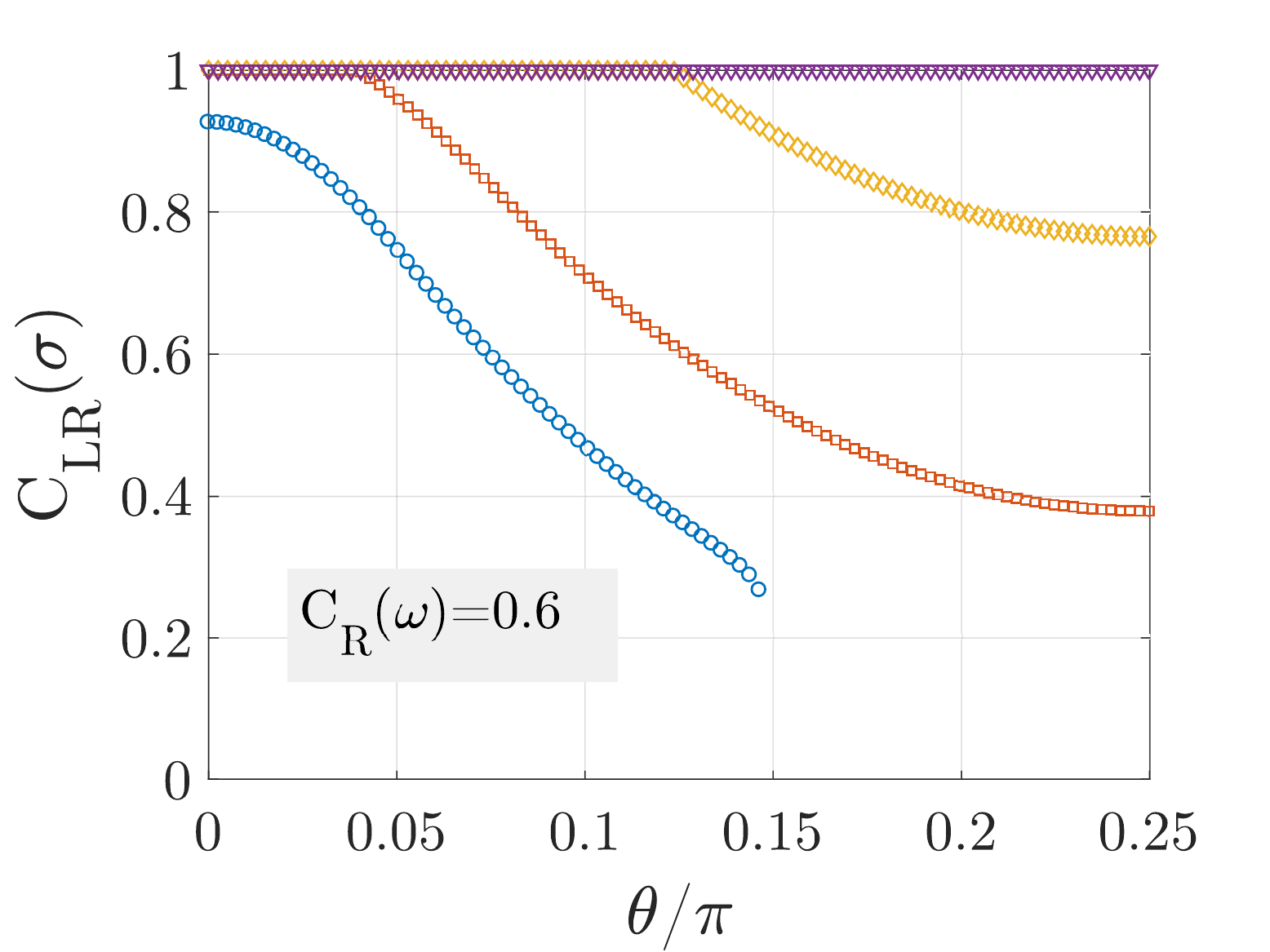}}
	\subfloat[]{ \includegraphics[width = 0.5\linewidth]{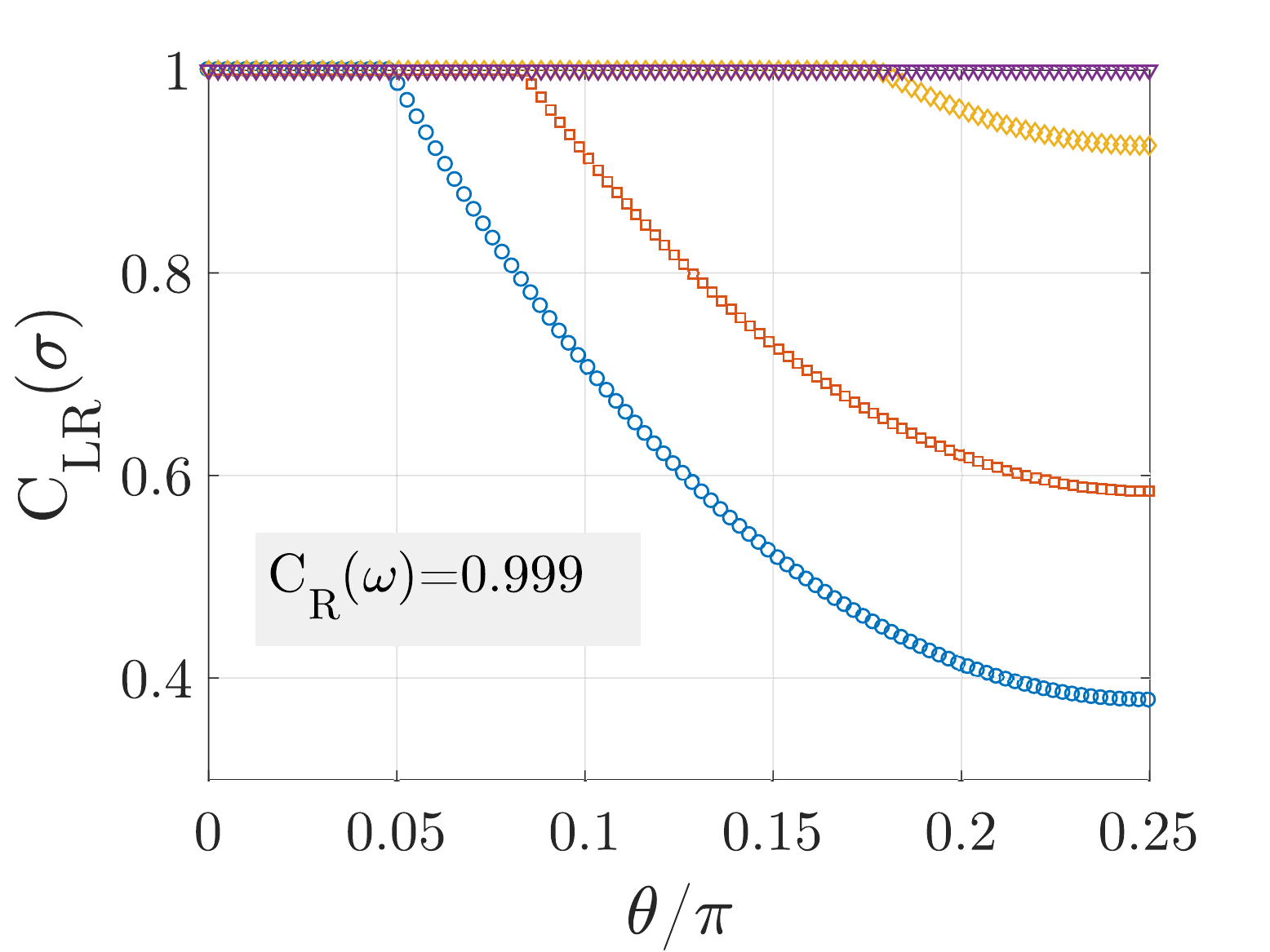}}
 	\caption{
 	         Maximum robustness of coherence left in the resource after the  
 	        approximate MIO implementation of the qubit unitary $U_{\theta}$ vs. $\theta/\pi$ for four values of input coherence (a-d). Each plot shows four curves corresponding to different error thresholds $\epsilon\in\{0.15,0.25,0.45,0.75\}$ (from lower to higher curves).}
 	\label{fig:coh_left_vs_uni}
 \end{figure*}

\subsection{State conversion under MIO and flagpole states}
\label{app:flagpoles}
In this subsection we show that  any pure resource state in dimension larger than $2$ can implement some coherent unitary channel.  
We will first establish a necessary (but not sufficient) criterion for state conversion under MIO (Lemma~\ref{lemma1}), and then 
show that there exist families of flagpole states (see Definition~\ref{def:flagpole}), not convertible to cosdits, that can be used 
as resources for implementing \emph{any} quantum channel.  In particular, we will demonstrate that for any channel 
$\cN: A\to B$, there exists $\cM: R\otimes A\to B,\; 
\cM \text{ MIO}$ and a particular family of pure flagpole states $\ket{\phi_p}\in R$ capable of simulating the channel  
(Theorem~\ref{theorem2}). 

\begin{proof-of}{Lemma \ref{lemma1}}
The geometric measure of coherence, introduced in~\cite{streltsov2015}, is defined as 
\begin{equation}
C_g(\rho)=1-F_C(\rho) 
\label{app:geomteric_measure}
\end{equation}
where 
\begin{equation}\label{fidofcoh}
  F_C(\rho) = \max_{\delta\in\Delta} F(\rho,\delta)^2,
\end{equation}
is the familiar Uhlmann fidelity between two mixed states, $F(\rho,\delta) = \| \sqrt{\rho}\sqrt{\delta} \|_1$.  In particular for a 
pure state $\rho=\dketbra{\psi},\;F(\psi,\delta)^2 = \tr \psi \delta$, and $F_C(\psi) = \lambda_1(\psi) = \max_i \bra{i}\psi\ket{i}$.
As the geometric measure of coherence is monotonically decreasing under MIO, it follows that 
$F_C(\cM(\rho))\geq F_C(\rho)$, for all $\cM$ MIO.  Setting $\phi=\cM(\psi)$ leads to the desired result.
\end{proof-of}

\begin{proof-of}{Theorem \ref{theorem2}}
Let us decompose the state space $ R$ into the following orthogonal subspaces
\begin{equation}
\varphi_p,\;\phi,\; \Pi=\1-\varphi_p-\phi
\label{app:orthogonal_complements}
\end{equation} 
where $\varphi_p=\dketbra{\varphi_p}$, with $\ket{\varphi_p}$ given by Eq.~\eqref{eq:flagpole}, $\phi_1$ is the rank-one 
projector onto the state
\begin{equation}
\ket{\phi}= \sqrt{1-p}\,\ket{0}-\sqrt{\dfrac{p}{d-1}}\sum_{j=1}^{d-1}\ket{j},
\label{app:phi1}
\end{equation}
and $\Pi$ is a rank-$(d-2)$ orthogonal complement to $\varphi_p$ and $\phi$.

We now make the following ansatz for the MIO channel simulating $\cN$ with a flagpole resource $\varphi_p$:
\begin{align}
\cM(\rho\otimes \sigma) 
  &=\cN(\rho)\tr(\sigma \varphi_p)+\cL_1(\rho)\tr(\sigma \phi_1)\nonumber\\
  &\phantom{=}+\cL_2(\rho)\tr(\sigma \Pi),
\label{app:MIO_sim_ansatz}
\end{align}
where $\cL_{1,2}$ are CPTP maps. In order for $\cM$ to be MIO we require that $\cM(\cdot \otimes \ket{j}\!\bra{j})$ is MIO, $\forall\,j\in(0,\ldots,d-1)$. This leads to the following two conditions
\begin{align}\nonumber
  &p \cN + (1-p) \cL_1 \text{ is MIO}&\mathrm{for}\, j=0\\
  &\frac{1-p}{d-1} \cN + \dfrac{p}{d-1} \cL_1 + \left(1-\frac{1}{d-1}\right) \cL_2  \text{ is MIO}&\mathrm{for}\, j>0
\label{app:MIO_requirements}
\end{align}
From Proposition~\ref{app:prop_2} we know that there exists  a CPTP map $\cL_1$ such that the first condition in 
Eq.~\eqref{app:MIO_requirements} is fulfilled if and only if $p\leq(1+C_R(\cN))^{-1}$.   To prove that the second condition in 
Eq.~\eqref{app:MIO_requirements} is satisfied note that it may be written as 
\begin{equation}
\dfrac{1}{d-1}\cN' + \left(1-\dfrac{1}{d-1}\right) \cL_2
\label{app:requirement_2_}
\end{equation}
where we defined the CPTP map $\cN'=(1-p)\cN+p \cL_1$ whose  robustness of coherence is at most $d-1$. By Definition~\ref{app:prop_2} this implies that there exists a CPTP map $\cL_2$ such that $(\cN'+(d-2)\cL_2)/
(d-1)\text{ is MIO}$.  This completes the proof.
\end{proof-of}

\subsection{Qubit unitaries}
In this appendix we  prove Propositions~\ref{proposition_qubit_qubit} and~\ref{proposition_qubit_qutrit} pertaining to the 
implementation of unitary gates via MIO using non-maximally coherent qubit states and $d$-dimensional flagpole states respectively.     
We first prove that exact implementation of unitary gates via MIO can only be achieved using cosbit resources, and then 
establish the subset of $d$-dimensional flagpole states that allow for the implementation of any qubit unitary via MIO.

\begin{proof-of}{Proposition \ref{proposition_qubit_qubit}}
Let $\{\ket{\omega},\,\ket{\omega^\perp}\}$ be an orthonormal basis for $\mathbb{C}^2$, where
\begin{align}\nonumber
\ket{\omega}&=\alpha\ket{0}+\beta \ket{1}\\
\ket{\omega^\perp}&=\beta\ket{0}-\alpha \ket{1}
\label{app:qubit_onb}
\end{align}
with $\alpha,\beta \in \mathbb{C},\, |\alpha|^2+|\beta|^2=1$.  The Kraus decomposition of the most general 
$\cM \text{ is MIO}$ implementing any unitary operator $U:{\cal H}_d\to{\cal H}_d$, is given by 
\begin{equation}
K_i=\lambda_i U\otimes \ket{0}\bra{\omega}+R^i\otimes \ket{0}\bra{\omega^\perp},
\label{app:MIO_Kraus}
\end{equation}
where $\{\lambda_i\in\mathbb{C}\}$, and $\{R^i:{\cal H}_d\to{\cal H}_{d'},\, 1\leq d'\leq d\}$.  As $\cM$ is CPTP, 
$\sum_{i}K_i^\dagger K_i = \mathds{1}$ which yields the following conditions on $\{\lambda_i,\, R^i\}$
\begin{eqnarray}
\sum_i|\lambda_i|^2&=&1  \nonumber\\
\sum_i|\lambda_i|^2 R^i&=&0\nonumber\\
\sum_i |\lambda_i|^2R^{i\dagger}  R^i&=&\mathds{1}.
\label{app:qubit_conditions}
\end{eqnarray}
Applying them together with the MIO condition 
$$\bra{m}\Big(\sum_{i} K_i \ket{jk}\bra{jk}K_i^\dagger \Big) \ket{n}=0,\quad \forall\, m\neq n$$ 
gives rise to the following two equations
\begin{eqnarray}
|\alpha|^2 U_{mj}U^{\dagger}_{jn}  +|\beta|^2\sum_{i}|\lambda_i|^2R^i_{mj}(R^{i\dagger})_{jn}&=&0 \label{1proofqubit}\\
|\beta|^2 U_{mj}U^{\dagger}_{jn} +|\alpha|^2\sum_{i}|\lambda_i|^2R^i_{mj}(R^{i\dagger})_{jn}&=&0 \label{3proofqubit},
\end{eqnarray}	
corresponding to $k=0$ and $k=1$ respectively. Summing both equations above we obtain the additional condition
 $\sum_{i}|\lambda_i|^2R^i_{mj}(R^{i\dagger})_{jn}=-U_{mj}U^{\dagger}_{jn}$
which, upon substituting into any one of Eqs.~(\ref{1proofqubit},~\ref{3proofqubit}) results in  
\begin{equation}
\label{app:condcc}
(|\alpha|^2 -|\beta|^2 )U_{mj}U^{\dagger}_{jn}=0,\quad \forall\, j,n,m.
\end{equation}
Now, unless $U$ is the identity, there is at least one pair of values $(j,n)$ such that $0<|U_{nj}|<1$.  Moreover, as 
$\sum_{m} |U_{mj}|^{2}=1$ there also exists at least one $m\neq n$ such that $|U_{mj}|>0$. Thus for Eq.~\eqref{app:condcc} to 
hold true for all $j,n,m$, $|\alpha|^2=|\beta|^2=\frac{1}{2}$ which implies that  $\ket{\omega}=\ket{\Psi_2}$. This completes the 
proof.
\end{proof-of}

\begin{proof-of}{Proposition \ref{proposition_qubit_qutrit}}
We shall first prove the statement of the Proposition for the case of qutrit flagpoles.  To that end consider the following  
orthonormal basis for $\mathbb{C}^3$,
\begin{align*}
 \ket{\varphi_p} \equiv \ket{\phi_0}&=  \sqrt{p}\ket{0}+\sqrt{\frac{1-p}{2}}(\ket{1}+\ket{2}), \\
  \ket{\phi_1} &=  \frac{1}{\sqrt{2}}(\ket{1}-\ket{2}), \\
  \ket{\phi_2} &= \sqrt{1-p}\ket{0}-\sqrt{\frac{p}{2}}(\ket{1}+\ket{2}).
\end{align*}
As in the previous case, the Kraus decomposition of the most general MIO $\cM$ implementing 
$U_\theta:\mathbb{C}^2\to\mathbb{C}^2$ is given by 
\begin{equation}\label{app:kraus_ops_qutrit}
K_i=\lambda_i U_\theta \otimes \ket{0}\bra{\phi_0}+R^i\otimes \ket{0}\bra{\phi_1}+P^i\otimes \ket{0}\bra{\phi_2}.
\end{equation}
As $\cM$ is CPTP, $\sum_{i}K_i^\dagger K_i =\mathds{1}$, which imposes the following conditions on 
$\{\lambda_i,\, R^i,\, P^i\}$
\begin{eqnarray}
\sum_i|\lambda_i|^2&=&1\\
\sum_i |\lambda_i|^2R^i&=&\sum_\alpha P_\alpha=0\\
\sum_i |\lambda_i|^2 R^{i\dagger}  R^i&=&\sum_i |\lambda_i|^2 P^{i\dagger} P^i=\mathds{1}\label{app:RPareCPTP}\\
\sum_i |\lambda_i|^2 R^{i\dagger}  P^i&=&\sum_i |\lambda_i|^2 P^{i\dagger} R^i=0.	
\end{eqnarray}
Applying the above conditions together with the MIO condition 
$$\bra{m}\left(\sum_{\alpha} K_\alpha \ket{jk}\bra{jk}K_\alpha^\dagger \right) \ket{n}=0$$ 
for $m\neq n=0,1$, $j=0,1$ and $k=0,1,2$ we obtain
\begin{equation}\label{app:mio_qutrit}
U_{mj}U_{jn}^\dagger |\phi_0^k|^2+A_j|\phi_1^k|^2+B_j|\phi_2^k|^2+C_j\phi_1^k\phi_2^k=0,
\end{equation}
where we have defined 
\begin{align*}
A_j&=\sum_i|\lambda_i|^2 R_{mj}^i (R^{i\dagger})_{jn}\\ 
B_j&=\sum_i |\lambda_i|^2 P_{mj}^i (P^{i\dagger})_{jn}\\
C_j&=\sum_i|\lambda_i|^2 (R_{mj}^i (P^{i\dagger})_{jn}+P_{mj}^i (R^{i\dagger})_{jn}),
\end{align*}
and $\phi_l^k\equiv\braket{\phi_l}{k}$.
Fixing $m=0$, $n=1$, Eq.~\eqref{app:mio_qutrit} gives rise to six equations which we can solve for $A_j$, $B_j$ and $C_j$ to 
obtain
\begin{eqnarray}\nonumber
A_0&=&-A_1=-\dfrac{(2p-1)}{p-1}\cos\theta\sin\theta\\ \nonumber
B_0&=&-B_1=\dfrac{p}{p-1}\cos\theta\sin\theta\\
C_0&=&C_1=0.
 \label{app:Aj}
\end{eqnarray}

Now by definition $A_j$ and $B_j$ correspond to the off-diagonal elements of CPTP maps acting on the state 
$\dketbra{j}$.  As the resulting operator must be positive it follows that $|A_j|\leq\frac{1}{2},\,|B_j|\leq\frac{1}{2}$.  Moreover, the 
conditions on the $A_j$ in Eq.~\eqref{app:Aj} are satisfied if $B_0,\, B_1$ are given as in Eq.~\eqref{app:Aj}. Hence  
$$|B_j|=\dfrac{p}{1-p}\cos\theta\sin\theta \leq \dfrac{1}{2},$$
which implies
$$p\leq \dfrac{1}{1+\sin2\theta}.$$
Therefore, there can be no MIO that implements the qubit unitary, $U_\theta$, with a qutrit flagpole state 
$\ket{\varphi_p}$ such that $p>(1+\sin2\theta)^{-1}$.

The proof can be extended to arbitrary $d$-dimensional flagpole and 
$(d-1)$-dimensional unitary.  We note, however, that in this case we obtain the following upper bound 
$p \leq \left({1+\frac{C_R(U)}{d-2}}\right)^{-1}$, 
which coincides with \eqq{pqutrit} only for $d=3$. 
This completes the proof.
\end{proof-of}

We note that it is an interesting open question whether less coherent flagpole states allow for the implementation of qubit 
unitaries via MIO for the case $d>3$.

\subsection{Asymptotic MIO generation capacity}
In this appendix we prove Thm.~\ref{asyCohGen}, namely that the asymptotic coherence-generating capacity of a channel under MIOs is equal to its complete relative-entropy cohering power. The proof is a straightforward extension of \cite[Thm.~1]{bendana2017}, where IOs where considered instead. The key difference here is that the coherence cost of arbitrary states under MIOs is given by the relative entropy of coherence~\cite{zhao2018}, whereas under IOs this is true only for pure states. 


\begin{proof-of}{Theorem \ref{asyCohGen}}
We first prove the upper bound: for any coherence-generating protocol, the trace-distance between the final state $\rho_n$ and the cosdit can be upper bounded as $||\rho_n-\Psi_2^{\otimes nR}||_1\leq 2\sqrt{\epsilon}$. Then the asymptotic continuity of the relative-entropy of coherence~\cite[Lemma 12]{winter2016} implies that:
\begin{equation}\label{asyDist}
\left|C_r(\Psi_2^{\otimes nR})-C_r(\rho_n)\right|\leq 2\sqrt{\epsilon}~nR+2h(\sqrt{\epsilon}),
\end{equation}
where $h(x)=-x\log x-(1-x)\log(1-x)$ is the binary-entropy function. 
Hence we obtain the following chain of inequalities:
\begin{equation}\begin{split}
nR(1-2\sqrt{\epsilon})&\ -2h(\sqrt{\epsilon}) \leq C_r(\rho_n)\\
&\ =\sum_{j=0}^{n-1}\left(C_r(\rho_{j+1})-C_r(\rho_j)\right)\\
&\ \leq n\sup_{\rho\in AC} \left(C_r((\cN\otimes\id)(\rho))-C_r(\rho)\right),
\end{split}
\end{equation}
where the first inequality follows from \eqq{asyDist} and $C_r(\Psi_2^{\otimes nR})=nR$, while the equality by adding and subtracting the relative-entropy of coherence of each intermediate state of the protocol. The last inequality instead follows from substituting each term of the sum with its $\sup$ over all states and the monotonicity of $C_r$ under MIOs. After dividing both sides by $n$ and taking the $n\rightarrow\infty$ limit we obtain that the rate of any coherence-generating protocol is upper-bounded by $\mathbb{P}_r(\cN)$ up to an error vanishing with $\epsilon$. 

For the lower bound instead we need to exhibit a protocol that asymptotically attains $\mathbb{P}_r(\cN)$. The protocol works as follows: i) we first apply the channel to an incoherent state $\dketbra{0}$ in order to produce some coherence, i.e., $\sigma=\cN(\dketbra{0})$; ii) we produce a sufficient number of copies of $\sigma$ to distill a certain amount of cosbits, which are then used to produce a target state $\rho$ via coherence dilution; iii) we apply the channel to $\rho$ in order to obtain a more coherent state $\rho'=(\cN\otimes\id)(\rho)$; iv) we distill cosbits from $\rho'$; v) we use the increased amount of cosbits obtained to iterate the processes (iii-iv) $k$ times. 
The asymptotic coherence-generating capacity and cost of states under MIOs are both equal to the relative-entropy of coherence~\cite{zhao2018}, so that the conversion rates of processes (ii-iv) described above can be written as
\begin{equation}\begin{split}\label{convRates}
\sigma^{\otimes m} &\ \mapsto\Psi_2^{\otimes\lfloor m(C_r(\sigma)-\delta)\rfloor},\\
\Psi_2^{\otimes\lceil n(C_r(\rho)+\delta)\rceil}&\ \mapsto\rho^{\otimes n}\\
\rho'^{\otimes n}&\ \mapsto\Psi_2^{\otimes\lfloor n(C_r(\rho')-\delta)\rfloor},
\end{split}\end{equation} 
where $m\gtrsim n (C_r(\rho)+\delta)/(C_r(\sigma)-\delta)$.
Note that the coherence production of processes (iii-iv) is not larger than $n(R-2\delta)$, where $R=C_r(\rho')-C_r(\rho)$, and each transformation is accurate up to an error $\epsilon$.
Hence, the overall coherence-production rate is bounded by 
\begin{equation}
\frac{kn(R-2\delta)+n(C_r(\phi)+\delta)}{kn+m}\xrightarrow[k\rightarrow\infty]{}R-2\delta,
\end{equation}
which can be made arbitrarily close to $R$ by taking $\delta$ sufficiently small. We have therefore described a protocol that generates coherence at an asymptotical rate $C_r((N\otimes\id)(\phi))-C_r(\phi)$ for any $\phi_{AC}$ using MIOs. By taking the $\sup$ of this quantity, we obtain the desired lower bound. 
\end{proof-of}

\end{document}